\documentclass[lettersize,journal]{IEEEtran}
\usepackage{tikz}


\usepackage{theorem}

\newtheorem{definition}{Definition}
\newtheorem{proposition}{Proposition}
\newtheorem{lemma}[proposition]{Lemma}

\newtheorem{theorem}[proposition]{Theorem}

\newtheorem{corollary}[proposition]{Corollary}


\newenvironment{proof}{\noindent \textbf{{Proof~} }}{\hfill $\blacksquare$}

\newcounter{remark}
\newenvironment{remark}[1][]{\refstepcounter{remark}\par\medskip\noindent
\textbf{Remark~\theremark #1} }{\medskip}

\newcommand{\nc}{\newcommand}
\nc{\bra}[1]{\langle#1|}
\nc{\ket}[1]{|#1\rangle}
\nc{\ketbra}[2]{|#1\rangle\!\langle#2|}
\nc{\braket}[2]{\langle#1|#2\rangle}

\nc{\proj}[1]{| #1\rangle\!\langle #1 |}
\nc{\avg}[1]{\langle#1\rangle}
\nc{\rank}{\operatorname{Rank}}
\nc{\smfrac}[2]{\mbox{$\frac{#1}{#2}$}}
\nc{\tr}{\operatorname{Tr}}
\nc{\ox}{\otimes}
\nc{\dg}{\dagger}
\nc{\dn}{\downarrow}
\nc{\cA}{{\cal A}}
\nc{\cB}{{\cal B}}
\nc{\cC}{{\cal C}}
\nc{\cD}{{\cal D}}
\nc{\cE}{{\cal E}}
\nc{\cF}{{\cal F}}
\nc{\cG}{{\cal G}}
\nc{\cH}{{\cal H}}
\nc{\cI}{{\cal I}}
\nc{\cJ}{{\cal J}}
\nc{\cK}{{\cal K}}
\nc{\cL}{{\cal L}}
\nc{\cM}{{\cal M}}
\nc{\cN}{{\cal N}}
\nc{\cO}{{\cal O}}
\nc{\cP}{{\cal P}}
\nc{\cQ}{{\cal Q}}
\nc{\cR}{{\cal R}}
\nc{\cS}{{\cal S}}
\nc{\cT}{{\cal T}}
\nc{\cU}{{\cal U}}
\nc{\cV}{{\cal V}}
\nc{\cX}{{\cal X}}
\nc{\cY}{{\cal Y}}
\nc{\cZ}{{\cal Z}}
\nc{\cW}{{\cal W}}
\nc{\adg}{{\rm ADG}}
\nc{\csupp}{{\operatorname{csupp}}}
\nc{\qsupp}{{\operatorname{qsupp}}}
\nc{\var}{{\operatorname{var}}}
\nc{\rar}{\rightarrow}
\nc{\lrar}{\longrightarrow}
\nc{\polylog}{{\operatorname{polylog}}}
\nc{\wt}{{\operatorname{wt}}}
\nc{\supp}{{\operatorname{supp}}}

\nc{\argmin}{{\operatorname{argmin}}}

\def\x{\xi}

\nc{\RR}{{{\mathbb R}}}
\nc{\CC}{{{\mathbb C}}}
\nc{\FF}{{{\mathbb F}}}
\nc{\NN}{{{\mathbb N}}}
\nc{\ZZ}{{{\mathbb Z}}}
\nc{\PP}{{{\mathbb P}}}
\nc{\QQ}{{{\mathbb Q}}}
\nc{\UU}{{{\mathbb U}}}
\nc{\EE}{{{\mathbb E}}}
\nc{\id}{{\operatorname{id}}}

\nc{\CHSH}{{\operatorname{CHSH}}}

\nc{\rU}{\mbox{U}}

\nc{\ob}[1]{#1}

\nc{\SEP}{{\text{\rm SEP}}}
\nc{\NS}{{\text{\rm NS}}}
\nc{\LOCC}{{\text{\rm LOCC}}}
\nc{\PPT}{{\text{\rm PPT}}}
\nc{\EXT}{{\text{\rm EXT}}}
\nc{\Sym}{{\operatorname{Sym}}}


\nc{\ERLO}{{E_{\text{r,LO}}}}
\nc{\ERLOCC}{{E_{\text{r,LOCC}}}}
\nc{\ERPPT}{{E_{\text{r,PPT}}}}
\nc{\ERLOCCinfty}{{E^{\infty}_{\text{r,LOCC}}}}
\nc{\Aram}{{\operatorname{\sf A}}}

\newcommand{\eps}{\varepsilon}


\makeatletter
\def\grd@save@target#1{%
  \def\grd@target{#1}}
\def\grd@save@start#1{%
  \def\grd@start{#1}}
\tikzset{
  grid with coordinates/.style={
    to path={%
      \pgfextra{%
        \edef\grd@@target{(\tikztotarget)}%
        \tikz@scan@one@point\grd@save@target\grd@@target\relax
        \edef\grd@@start{(\tikztostart)}%
        \tikz@scan@one@point\grd@save@start\grd@@start\relax
        \draw[minor help lines,magenta] (\tikztostart) grid (\tikztotarget);
        \draw[major help lines] (\tikztostart) grid (\tikztotarget);
        \grd@start
        \pgfmathsetmacro{\grd@xa}{\the\pgf@x/1cm}
        \pgfmathsetmacro{\grd@ya}{\the\pgf@y/1cm}
        \grd@target
        \pgfmathsetmacro{\grd@xb}{\the\pgf@x/1cm}
        \pgfmathsetmacro{\grd@yb}{\the\pgf@y/1cm}
        \pgfmathsetmacro{\grd@xc}{\grd@xa + \pgfkeysvalueof{/tikz/grid with coordinates/major step}}
        \pgfmathsetmacro{\grd@yc}{\grd@ya + \pgfkeysvalueof{/tikz/grid with coordinates/major step}}
        \foreach \x in {\grd@xa,\grd@xc,...,\grd@xb}
        \node[anchor=north] at (\x,\grd@ya) {\pgfmathprintnumber{\x}};
        \foreach \y in {\grd@ya,\grd@yc,...,\grd@yb}
        \node[anchor=east] at (\grd@xa,\y) {\pgfmathprintnumber{\y}};
      }
    }
  },
  minor help lines/.style={
    help lines,
    step=\pgfkeysvalueof{/tikz/grid with coordinates/minor step}
  },
  major help lines/.style={
    help lines,
    line width=\pgfkeysvalueof{/tikz/grid with coordinates/major line width},
    step=\pgfkeysvalueof{/tikz/grid with coordinates/major step}
  },
  grid with coordinates/.cd,
  minor step/.initial=.2,
  major step/.initial=1,
  major line width/.initial=2pt,
}
\makeatother

\usepackage{amsmath,amsfonts,amssymb,array,graphicx,mathtools,multirow,bm,tcolorbox,relsize,booktabs}
\usepackage{algorithmic}
\usepackage{array}
\usepackage[caption=false,font=normalsize,labelfont=sf,textfont=sf]{subfig}
\usepackage{textcomp}
\usepackage{stfloats}
\usepackage{url}
\usepackage{verbatim}
\usepackage{graphicx}
\hyphenation{op-tical net-works semi-conduc-tor IEEE-Xplore}
\def\BibTeX{{\rm B\kern-.05em{\sc i\kern-.025em b}\kern-.08em
    T\kern-.1667em\lower.7ex\hbox{E}\kern-.125emX}}
\usepackage{balance}
\usepackage{gensymb}
\usepackage{amsfonts}
\usepackage{cite}
\usepackage{caption}

\begin{document}
\title{Estimate distillable entanglement and quantum capacity by squeezing useless entanglement}
\author{Chengkai Zhu
\IEEEmembership{Student Member, IEEE}, Chenghong Zhu
\IEEEmembership{Student Member, IEEE}, Xin Wang
\IEEEmembership{Member, IEEE}
\thanks{
This work was supported by the Start-up Fund (No. G0101000151) from The Hong Kong University of Science and Technology (Guangzhou), the Guangdong Provincial Quantum Science Strategic Initiative (No. GDZX2303007), and the Education Bureau of Guangzhou Municipality.
Part of this work was done when the authors were at Baidu Research. C.Z and C.Z contributed equally to this work. (\textit{Corresponding author: Xin Wang.})

The authors are with the Thrust of Artificial Intelligence, Information Hub, The Hong Kong University of Science and Technology (Guangzhou), Guangzhou 511453, China. (email: czhu696@connect.hkust-gz.edu.cn; czhu854@connect.hkust-gz.edu.cn; felixxinwang@hkust-gz.edu.cn)}
}

\maketitle

\begin{abstract}
Quantum Internet relies on quantum entanglement as a fundamental resource for secure and efficient quantum communication, reshaping data transmission. In this context, entanglement distillation emerges as a crucial process that plays a pivotal role in realizing the full potential of the quantum internet. Nevertheless, it remains challenging to accurately estimate the distillable entanglement and its closely related essential quantity, the quantum capacity. In this work, we consider a general resource measure known as the reverse divergence of resources which quantifies the minimum divergence between a target state and the set of free states. Leveraging this measure, we propose efficiently computable upper bounds for both quantities based on the idea that the useless entanglement within a state or a quantum channel does not contribute to the distillable entanglement or the quantum capacity, respectively. Our bounds can be computed via semidefinite programming and have practical applications for purifying maximally entangled states under practical noises, such as depolarizing and amplitude damping noises, leading to improvements in estimating the one-way distillable entanglement. Furthermore, we provide valuable benchmarks for evaluating the quantum capacities of qubit quantum channels, including the Pauli channels and the random mixed unitary channels, which are of great interest for the development of a quantum internet.
\end{abstract}

\begin{IEEEkeywords}
Distillable entanglement, quantum capacity, quantum resources, extendibility, quantum channel.
\end{IEEEkeywords}

\section{Introduction}
\label{sec:introduction}
\subsection{Background}

\IEEEPARstart{Q}{uantum} Internet~\cite{Kimble2008,Cacciapuoti2020,Wehner2018b,Illiano2022}, a transformative paradigm in data transmission, leverages the extraordinary properties of quantum entanglement to revolutionize communication. At the heart of this groundbreaking innovation lies quantum entanglement, which is the most nonclassical manifestation of quantum mechanics and serves as the cornerstone resource for quantum information processing~\cite{Bennett1993,Bennett1992,Bennett1984,Ekert1991}. Simultaneously, quantum communication emerges as a unique task, utilizing the power of entanglement for secure and efficient information transfer~\cite{Bennett1984}. To fully unlock the potential of quantum technologies and achieve a secure, efficient, and globally connected quantum internet, it is desirable to push the boundaries of our understanding and capabilities in entanglement and quantum communication theories.

In the context of entanglement theory, maximally entangled states are often used as a resource for various quantum protocols and algorithms. However, noise inevitably occurs in practical scenarios, resulting in mixed entangled states that require distillation or purification. A natural question is how to obtain the maximally entangled states from a source of less entangled states using well-motivated operations, known as \textit{entanglement distillation}. One fundamental measure for characterizing the entanglement distillation is the \textit{one-way distillable entanglement}~\cite{Devetak2003a}, denoted by $E_{D,\to}$, which is also one of the most important entanglement measures motivated by operational tasks. It captures the highest rate at which one can obtain maximally entangled states from less entangled states by one-way local operations and classical communication (LOCC):
\begin{equation*}
E_{D,\to}(\rho_{AB})=\sup\{r:\lim_{n \to \infty} [\inf_{\Lambda}  \|\Lambda(\rho_{AB}^{\ox n})- \Phi(2^{rn})\|_1]=0\},
\end{equation*}
where $\Lambda$ ranges over one-way LOCC operations and $\Phi(d)=1/d\sum_{i,j=1}^d \ketbra{ii}{jj}$ is the standard $d\otimes d$ maximally entangled state. Likewise, the \textit{two-way distillable entanglement} $E_{D,\leftrightarrow}(\rho_{AB})$ is defined by the supremum over all achievable rates under two-way LOCC. We have for all bipartite states $\rho_{AB}$ that $E_{D,\to}(\rho_{AB}) \leq E_{D,\leftrightarrow}(\rho_{AB})$. Notably, the distillable entanglement is closely connected to the fundamental notion of \textit{quantum capacity} in quantum communication~\cite{singh2022fully}, which is central to quantum Shannon theory. Consider modeling the noise in transmitting quantum information from Alice to Bob as a quantum channel $\cN_{A\to B}$. The quantum capacity $Q(\cN_{A\to B})$ is the maximal achievable rate at which Alice can reliably transmit quantum information to Bob by asymptotically many uses of the channel. By the state–channel duality, if the distillation protocol of the Choi state~\cite{Choi1975} $J_{AB}^\cN$ of $\cN_{A\to B}$ yields the maximally entangled states at a positive rate, then Alice may apply the standard teleportation scheme to send arbitrary quantum states to Bob at the same rate. Thus, one has $Q(\cN_{A\to B}) \geq E_{D,\to}(J_{AB}^\cN)$ since classical forward communication in teleportation does not affect the channel capacity. For the teleportation-simulable channels~\cite{BK98,Bennett1993,Werner2001a}, the equality here holds.

Despite many efforts that have been made in the past two decades, computing $E_{D,\to}(\cdot)$ and $Q(\cdot)$ still generally remains a challenging task. Even for the qubit isotropic states and the depolarizing channels, it remains unsolved. Therefore, numerous studies try to estimate them by deriving lower and upper bounds (see, e.g.,~\cite{Devetak2003a,Rains2001, Wang2016,Hayashi2006a,Leditzky2017,Kaur2018,Wang2019b} for the distillable entanglement, e.g.,~\cite{Ouyang2014, Kianvash2022, Wolf2007, Fanizza2019} for the quantum capacity). For the distillable entanglement, a well-known lower bound dubbed \textit{Hashing bound} was established by Devetak and Winter~\cite{Devetak2003a}, i.e., $E_{D,\to}(\rho_{AB}) \geq I_{\textup{\tiny C}}(A\rangle B)_{\rho}$, where $I_{\textup{\tiny C}}(A\rangle B)_{\rho}$ is the coherent information of the bipartite quantum state $\rho_{AB}$. Considering upper bounds, the Rains bound~\cite{Rains2001} is arguably the best-known efficiently computable bound for the two-way distillable entanglement of general states. Recent works~\cite{Leditzky2017,Wang2019b} utilize the techniques that involve constructing meaningful extended states to find upper bounds. For quantum capacity, many useful upper bounds for general quantum channels are studied for benchmarking arbitrary quantum noise~\cite{Holevo2001, Sutter2014, Muller-Hermes2015, Wang2016a, Wang2017d, Pisarczyk2018, Pirandola2015b,Fang2019a}. When considering some specific classes of quantum channels, useful upper bounds are also developed to help us better understand quantum communication via these channels~\cite{Cerf2000, Wolf2007, Smith2008a, Gao2015a, Wang2019b, Fanizza2019, Kianvash2022}.

In specific, due to the regularization in the characterizations of $E_{D,\to}(\cdot)$ and $Q(\cdot)$, one main strategy to establish efficiently computable upper bounds on them is to develop single-letter formulae. For example, one common approach is to decompose a state (resp. a quantum channel) into degradable parts and anti-degradable parts~\cite{Wolf2007}, or use approximate degradability (anti-degradability)~\cite{Sutter2014}. Another recent fruitful technique called flag extension optimization~\cite{Fanizza2019, Wang2019b, Kianvash2022} relies on finding a degradable extension of the state or the quantum channel. However, the performance of these methods is limited by the absence of a good decomposition strategy. It is unknown how to partition a general state or quantum channel to add flags or how to construct a proper and meaningful convex decomposition on them. Thus, the flag extension optimization is only effective for the states and channels with symmetry or known structures.

\subsection{Main contributions}
This work considers a family of resource measures called \textit{reverse divergence of resources}. With a specific construction, we define the \textit{reverse max-relative entropy of entanglement} for quantum states, which is inspired by the idea to decompose a state into a degradable part and anti-degradable part raised in~\cite{Leditzky2017}. It has applications for estimating the distillable entanglement. In the meantime, we introduce the \textit{reverse max-relative entropy of anti-degradability} for quantum channels as a generalization of the concept of that for states, which can be applied to bound the quantum capacity. All these bounds can be efficiently computed via semidefinite programming~\cite{Vandenberghe1996}. Furthermore, drawing on the idea of~\cite{Sutter2014}, we thoroughly analyze different continuity bounds on the one-way distillable entanglement of a state in terms of its anti-degradability. Finally, we investigate the distillation of the maximally entangled states under practical noises and focus on the quantum capacity of qubit channels. We show that the bound obtained by the reverse max-relative entropy of entanglement outperforms other known bounds in a high-noise region, including the Rains bound and the above continuity bounds. The upper bound offered by the reverse max-relative entropy of anti-degradability also provides an alternative interpretation of the no-cloning bound of the Pauli channel~\cite{Cerf2000}, and notably outperforms the continuity bounds on random unital qubit channels.

This paper is structured as follows. Section \ref{sec:preliminary} provides preliminaries used throughout the paper. Section \ref{sec:rev_div} introduces a family of resource measures called the reverse divergence of resources. Section \ref{sec:app_on_DE} shows the application of this concept on upper bounding the distillable entanglement, with practical examples provided. Section \ref{sec:app_on_QC} demonstrates the application of the method in deriving upper bounds on quantum capacity, including analytical results for Pauli channels and comparison with continuity bounds in subsection~\ref{sec:qc_pauli_chan} for random mixed unitary channels. The paper concludes with a summary and outlooks for future research in section \ref{sec:conclusion}.

\section{Reverse divergence of resources} \label{sec:pre_and_rev_div}
\subsection{Preliminaries}\label{sec:preliminary}
Let $\cH$ be a finite-dimensional Hilbert space, and $\cL(\cH)$ be the set of linear operators acting on it. We consider two parties Alice and Bob with Hilbert space $\mathcal{H}_A, \mathcal{H}_B$, whose dimensions are $d_A, d_B$, respectively. A linear operator $\rho \in \cL(\cH)$ is called a density operator if it is Hermitian and positive semidefinite with trace one. We denote the trace norm of $\rho$ as $\|\rho\|_1 = \tr \sqrt{\rho^\dagger \rho}$ and let $\cD(\cH)$ denote the set of density operators. We call a linear map CPTP if it is both completely positive and trace-preserving. A CPTP map that transforms linear operators in $\cL(\cH_A)$ to linear operators in $\cL(\cH_B)$ is also called a quantum channel, denoted as $\cN_{A\to B}$. For a quantum channel $\cN_{A\to B}$, its Choi-Jamiołkowski state is given by $J_{AB}^{\cN} \equiv \sum_{i, j=0}^{d_A-1}\ketbra{i}{j} \ox \mathcal{N}_{A \to B}(\ketbra{i}{j})$, where $\{|i\rangle\}_{i=0}^{d_A-1}$ is an orthonormal basis in $\cH_A$. The von Neumann entropy of a state $\rho_A$ is $S(A)_\rho := - \tr(\rho_A \log \rho_A)$ and the coherent information of a bipartite state $\rho_{AB}$ is defined by $I_c(A\rangle B)_\rho := S(B)_\rho - S(AB)_\rho$. The entanglement of formation of a state $\rho_{AB}$ is given by
\begin{equation}
E_F(\rho_{AB}) = \min_{\{p_i, \ket{\phi_i}\}} \sum_{i} p_i S(A)_{\phi_i},    
\end{equation}
where $\rho_{AB}=\sum_i p_i \ketbra{\phi_i}{\phi_i}_{AB}$ and the minimization ranges over all pure state decomposition of $\rho_{AB}$. We introduce the generalized divergence $\boldsymbol{D}(\rho_A \| \sigma_A)$ as a map $\boldsymbol{D}: \cD(\cH_A)\times \cD(\cH_A) \mapsto \mathbb{R}\cup \{+\infty\}$ that obeys:
\begin{enumerate}
    \item Faithfulness: $\boldsymbol{D}(\rho_A \| \sigma_A)=0$ iff $\rho_A = \sigma_A$.
    \item Data processing inequality: $\boldsymbol{D}(\rho_A \| \sigma_A) \geq \boldsymbol{D}[\mathcal{N}_{A\to A'}(\rho_A) \| \mathcal{N}_{A\to A'}(\sigma_A)]$, where $\mathcal{N}_{A\to A'}$ is an arbitrary quantum channel.
\end{enumerate}
The generalized divergence is intuitively some measure of distinguishability of the two states, e.g., Bures metric, quantum relative entropy. Another example of interest is the \textit{sandwiched R{\'{e}}nyi relative entropy}~\cite{Wilde_2014, Muller_Lennert2013} of $\rho, \sigma$ that is defined by
\begin{equation}
    D_\alpha(\rho \| \sigma):=\frac{1}{\alpha-1} \log \operatorname{Tr}\left[\left(\sigma^{\frac{1-\alpha}{2 \alpha}} \rho \sigma^{\frac{1-\alpha}{2 \alpha}}\right)^\alpha\right],
\end{equation}
if $\operatorname{supp}(\rho) \subset \operatorname{supp}(\sigma)$ and it is equal to $+\infty$ otherwise, where $\alpha \in(0,1) \cup(1, \infty)$. In the case that $\alpha \to \infty$, one can find the \textit{max-relative entropy}~\cite{datta2009min} of $\rho$ with respect to $\sigma$ by
\begin{equation}
    D_{\max}(\rho || \sigma) = \inf \{\lambda \in \mathbb{R}: \rho \leq 2^{\lambda} \sigma\}.
\end{equation}

\subsection{Reverse divergence of resources}\label{sec:rev_div}
In the usual framework of quantum resource theories~\cite{chitambar2019quantum}, there are two main notions: i) subset $\cF$ of free states, i.e., the states that do not possess the given resource; ii) subset $\cO$ of free operations, i.e., the quantum channels that are unable to generate the resource. Meanwhile, two axioms for a quantity being a resource measure $\cR(\cdot)$ are essential:
\begin{itemize}
    \item[1).] Vanishing for free states: $\rho\in \cF \Rightarrow \cR(\rho) = 0$.
    \item[2).] Monotonicity: $\cR(\cO(\rho)) \leq \cR(\rho)$ for any free operation $\cO$. $\cR(\cdot)$ is called a \textit{resource monotone}.
\end{itemize}
Let us define a family of resource measures called \textit{reverse divergence of resources}:
\begin{equation}\label{Eq:rev_diver_resou}
    \cR_{\cF}(\rho) := \min_{\tau\in \cF}\boldsymbol{D}(\tau||\rho),
\end{equation}
where $\mathcal{F}$ is some set of free states. By the definition of the reverse divergence of resources in Eq.~\eqref{Eq:rev_diver_resou}, one can easily check it satisfies condition 1). Whenever the free state set $\cF$ is closed under the free operations, by the data-processing inequality of $\boldsymbol{D}(\cdot\| \cdot)$, condition 2) will be satisfied. Thus $\cR_{\mathcal{F}}(\cdot)$ is a resource measure. Specifically, in the resource theory of entanglement, some representative free state sets are the separable states (SEP) and the states having a positive partial transpose (PPT). Examples of free sets of operations are LOCC and PPT. We note that the "reverse" here means minimizing the divergence over a free state set in the first coordinate, rather than the second one which has helped one define the \textit{relative entropy of entanglement}~\cite{Vedral_1997quantify, Vedral_1997} and the \textit{max-relative entropy of entanglement}~\cite{datta2009min}. For some divergence of particular interest, e.g., the quantum relative entropy $D(\cdot||\cdot)$, relevant discussion of the coordinate choices can be traced back to work in~\cite{Vedral_1997}. In \cite{Eisert_2003}, the authors further studied properties of the quantity $\min_{\tau\in \cF}D(\tau||\rho)$. Here, we try to further investigate meaningful applications of some reverse divergence of resources.

In the following, we consider the generalized divergence as the max-relative entropy and study a measure called \textit{reverse max-relative entropy of resources},
\begin{equation}\label{Eq:sqz_mrela_def}
    \cR_{\max,\mathcal{F}}(\rho) := \min_{\tau\in \mathcal{F}}D_{\max}(\tau||\rho),
\end{equation}
where $\mathcal{F}$ is some set of free states. If there is no such state $\tau\in \cF$ that satisfies $\tau \leq 2^{\lambda} \rho$ for any $\lambda \in \mathbb{R}$, $\cR_{\max,\mathcal{F}}(\rho)$ is set to be 0. 
This measure bears many nice properties. First, it can be efficiently computed via semidefinite programming (SDP) in many cases. Second, Eq.~\eqref{Eq:sqz_mrela_def} gives the closest free state $\tau\in\cF$ to $\rho$, w.r.t. the max-relative entropy. Third, $\cR_{\max,\mathcal{F}}(\cdot)$ is subadditive w.r.t the tensor product of states. In fact, $\cR_{\max,\mathcal{F}}(\rho)$ is closely related to the \textit{weight of resource} $W(\rho)$~\cite{ELITZUR199225,Lewenstein98,Ducuara2020,Uola2020} and the \textit{free component} $\Gamma(\rho)$~\cite{Fang2020}, both of which have fruitful properties and applications~\cite{Regula2021,regula2022overcoming,Regula_2023a}, as follows
\begin{equation}\label{Eq:WoR_fcomp}
    2^{-\cR_{\max,\mathcal{F}}(\rho)}  = 1-W(\rho) = \Gamma(\rho).
\end{equation}
We note that each part of Eq.~\eqref{Eq:WoR_fcomp} quantifies the largest weight that a free state can take in a convex decomposition of $\rho$. When moving on to operational tasks that the free state can be ignored, what is left in a convex decomposition becomes our main concern. Optimization of the weight in the decomposition can be visualized as \textit{squeezing out} all free parts of the given state. Thus, we further introduce the \textit{$\cF$-squeezed state} of $\rho$ as follows.
\begin{definition}\label{def:sqz_state}
For a quantum state $\rho$ and a free state set $\cF$, if $\cR_{\max,\mathcal{F}}(\rho)$ is non-zero, the $\cF$-squeezed state of $\rho$ is defined by
\begin{equation}
    \omega = \frac{\rho-2^{-\cR_{\max,\mathcal{F}}(\rho)}\cdot\tau}{1-2^{-\cR_{\max,\mathcal{F}}(\rho)}},
\end{equation}
where $\tau$ is the closest free state to $\rho$ in terms of the max-relative entropy, i.e., the optimal solution in Eq.~\eqref{Eq:sqz_mrela_def}. If $\cR_{\max,\mathcal{F}}(\rho)=0$, the $\cF$-squeezed state of $\rho$ is itself.
\end{definition}

In the following sections, we will illustrate the applications of the reverse max-relative entropy of resources as well as the squeezing idea on example tasks. One is giving upper bounds on the distillable entanglement of arbitrary quantum states. The other is to derive upper bounds on the quantum capacity of channels. 

\section{Applications on distillable entanglement} \label{sec:app_on_DE}
In this section, we investigate the information-theoretic application of the reverse max-relative entropy of resources in deriving efficiently computable upper bounds on the distillable entanglement. To showcase the advantage of our bounds, we compare the results with different continuity bounds and the Rains bound on the maximally entangled states with practical noises.

\subsection{Upper bound on the one-way distillable entanglement}\label{sec:one_way_app}
Recall that the one-way distillable entanglement has a regularized formula~\cite{Devetak2003a}:
\begin{align}\label{Eq:reg_form_de}
    E_{D,\to}(\rho_{AB}) = \lim_{n\rightarrow \infty} \frac{1}{n}E^{(1)}_{D,\to}(\rho_{AB}^{\otimes n}),
\end{align}
where 
$E^{(1)}_{D,\to}(\rho_{AB}) := \max_{T} I_{\textup{\tiny C}}(A'\rangle M B)_{T(\rho_{AB})}$,
and the maximization ranges over all quantum instruments
$T:A\to A'M$ on Alice’s system. The regularization in Eq.~\eqref{Eq:reg_form_de} for $E_{D,\to}(\rho_{AB})$ is intractable to compute in most cases. However, there are some categories of states whose $E_{D,\to}$ can be reduced to single-letter formulae. Two important classes are called \textit{degradable states} and \textit{anti-degradable states}.

Let $\rho_{AB}$ be a bipartite state with purification $\ket{\phi}_{ABE}$. $\rho_{AB}$ is called degradable if there exists a CPTP map $\mathcal{M}_{B\rightarrow E}$ such that $\mathcal{M}_{B\rightarrow E}(\rho_{AB}) = \tr_B(\phi_{ABE})$, and is called anti-degradable if there exists a CPTP map $\mathcal{M}_{E\rightarrow B}$ such that $\mathcal{M}_{E\rightarrow B}(\rho_{AE}) = \tr_E(\phi_{ABE})$. Equivalently, a state is anti-degradable if and only if it has a symmetric extension~\cite{myhr2011symmetric}, thus is also called a \textit{two-extendible state}. For the degradable states, it is shown that~\cite{Leditzky2017}
\begin{align}\label{Eq:deg_E1}
     E^{(1)}_{D,\to}(\rho_{AB}^{\otimes n}) = n  E^{(1)}_{D,\to}(\rho_{AB}) = n I_{\textup{\tiny C}}(A\rangle B)_{\rho}, \forall n\in \mathbb{N},
\end{align}
resulting in $E_{D,\to}(\rho_{AB}) =  I_{\textup{\tiny C}}(A\rangle B)_{\rho}$. For the anti-degradable states, consisting of a compact and convex set, it always holds
\begin{equation}\label{Eq:adg_E1}
    E^{(1)}_{D,\to}(\rho_{AB}) = E_{D,\to}(\rho_{AB}) = 0.
\end{equation}
Moreover, Leditzky, Datta, and Smith~\cite{Leditzky2017} showed that $E_{D,\to}(\cdot)$ is convex on decomposing a state into degradable and anti-degradable parts, i.e.,
\begin{equation}\label{Eq:deg_adg_conv}
    E^{(1)}_{D,\to}(\rho_{AB}) \leq s\cdot E^{(1)}_{D,\to}(\omega_{AB}) + (1-s)\cdot E^{(1)}_{D,\to}(\tau_{AB}),
\end{equation}
where $\rho_{AB} = s\omega_{AB} + (1-s)\tau_{AB}$, $\omega_{AB}$ is degradable, and $\tau_{AB}$ is anti-degradable.

To better exploit this convexity, we introduce the \textit{reverse max-relative entropy of unextendible entanglement} to help identify the anti-degradable (two-extendible) part of a given bipartite state $\rho_{AB}$:
\begin{equation}\label{Eq:sqz_mradg_def}
    \cR_{\max,\adg}(\rho_{AB}) := \min_{\tau\in \adg}D_{\max}(\tau_{AB}||\rho_{AB}),
\end{equation}
where $\adg$ is the set of all anti-degradable states. In this resource theory, the resource we consider is the extendible entanglement, and the free states are bipartite states that are possibly shareable between $A$ and a third party $E$, where $E$ is isomorphic to $B$. Notably, the extendibility of entanglement is a key property in entanglement theory with many existing applications~\cite{Kaur2018,Wang2019h}. Here, combined with the idea of entanglement of formation, $\cR_{\max,\adg}(\rho_{AB})$ can be applied to derive an upper bound on the one-way distillable entanglement of an arbitrary state $\rho_{AB}$ as shown in Theorem~\ref{thm:anti_sqz_bound}.

\begin{figure}[t]
    \centering
    \includegraphics[width=\linewidth]{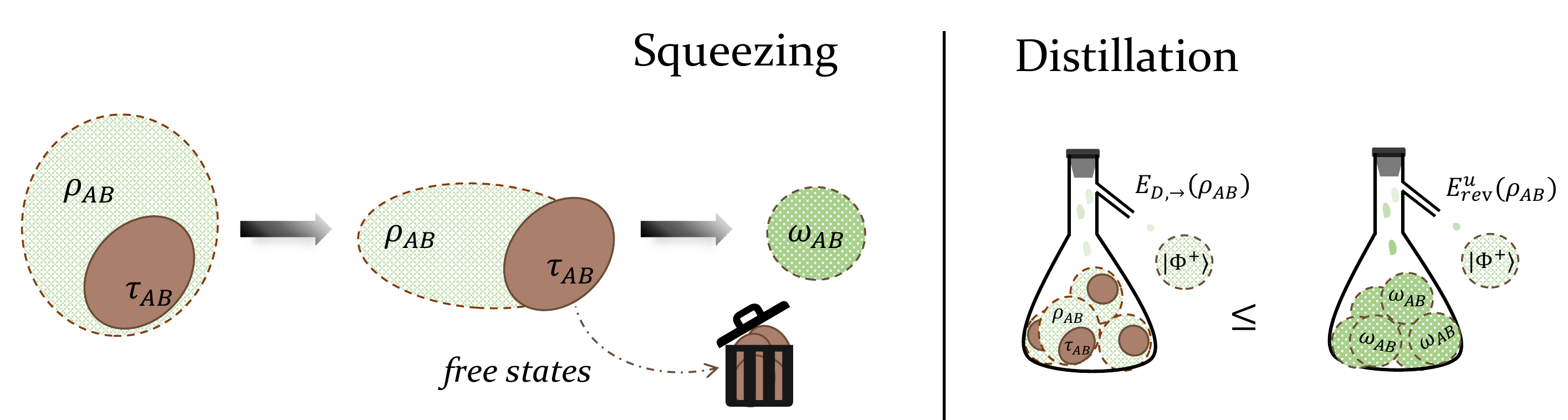}
    \caption{Illustration of the $\cF$-squeezed state of $\rho_{AB}$ and Theorem~\ref{thm:anti_sqz_bound}. The brown region corresponds to the sub-state $\tau_{AB}$ belonging to the free state set $\cF$, which is squeezed out via a convex decomposition of $\rho_{AB}$. $\omega_{AB}$ is the $\cF$-squeezed state.}
    \label{fig:cartoon_sqz_distill}
\end{figure}

\begin{theorem}\label{thm:anti_sqz_bound}
For any bipartite state $\rho_{AB}$,
\begin{equation}\label{Eq:oneway_bound}
\begin{aligned}
E_{D,\to}(\rho_{AB}) &\leq E_{\rm rev}^{u}(\rho_{AB})\\
&:=[1-2^{-\cR_{\max,\adg}(\rho_{AB})}]\cdot E_F(\omega_{AB}),    
\end{aligned}
\end{equation}
where $\omega_{AB}$ is the $\adg$-squeezed state of $\rho_{AB}$, $\cR_{\max,\adg}(\cdot)$ is the reverse max-relative entropy of unextendible entanglement, and $E_F(\cdot)$ is the entanglement of formation.
\end{theorem}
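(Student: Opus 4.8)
The plan is to combine the weight optimization defining $\cR_{\max,\adg}$ with the pure-state decomposition underlying the entanglement of formation, and then feed the result into the convexity bound~\eqref{Eq:deg_adg_conv}. First I would put $s:=2^{-\cR_{\max,\adg}(\rho_{AB})}$, so that by Definition~\ref{def:sqz_state} one has $\rho_{AB}=s\,\tau_{AB}+(1-s)\,\omega_{AB}$ with $\tau_{AB}$ anti-degradable (the optimizer in~\eqref{Eq:sqz_mradg_def}) and $\omega_{AB}$ the $\adg$-squeezed state; the case $\cR_{\max,\adg}(\rho_{AB})=0$ is immediate, since then $s=0$, $\omega_{AB}=\rho_{AB}$, and the claim is the standard bound $E_{D,\to}\le E_F$. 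Taking an optimal pure-state decomposition $\omega_{AB}=\sum_j q_j\ketbra{\psi_j}{\psi_j}_{AB}$ with $E_F(\omega_{AB})=\sum_j q_j S(A)_{\psi_j}$, this exhibits $\rho_{AB}$ as a mixture of the single anti-degradable state $\tau_{AB}$ and the pure, hence degradable, states $\ketbra{\psi_j}{\psi_j}$ carrying weights $(1-s)q_j$.

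I would then apply~\eqref{Eq:deg_adg_conv}, in the multi-term form that its flag-based proof supplies, to obtain $E^{(1)}_{D,\to}(\rho_{AB})\le s\,E^{(1)}_{D,\to}(\tau_{AB})+(1-s)\sum_j q_j\,E^{(1)}_{D,\to}(\ketbra{\psi_j}{\psi_j})$. Here $E^{(1)}_{D,\to}(\tau_{AB})=0$ by~\eqref{Eq:adg_E1}, while for a pure state $E^{(1)}_{D,\to}(\ketbra{\psi_j}{\psi_j})=I_c(A\rangle B)_{\psi_j}=S(A)_{\psi_j}$ since $S(AB)_{\psi_j}=0$ and $S(B)_{\psi_j}=S(A)_{\psi_j}$. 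Minimizing over pure decompositions of $\omega_{AB}$ yields the single-letter estimate $E^{(1)}_{D,\to}(\rho_{AB})\le(1-s)E_F(\omega_{AB})$.

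The substantive step is upgrading this to the regularized quantity $E_{D,\to}(\rho_{AB})=\lim_{n\to\infty}\frac1n E^{(1)}_{D,\to}(\rho_{AB}^{\otimes n})$, since $E^{(1)}_{D,\to}$ is only superadditive. The difficulty is that expanding $\rho_{AB}^{\otimes n}$ in the decomposition above produces cross terms that are tensor products of copies of $\tau_{AB}$ with products of pure states, and such terms are in general neither degradable nor anti-degradable, so~\eqref{Eq:deg_adg_conv} does not apply to them directly. I would handle this by attaching to $\rho_{AB}$ a classical flag $X$ on Alice's side recording which term of the refined decomposition each copy came from; since Alice may discard $X$, $E_{D,\to}(\rho_{AB})\le E_{D,\to}(\rho_{XAB})$, and because $X$ is classical and held by Alice, dephasing $X$ and recording its value is an optimal first move, so $E^{(1)}_{D,\to}$ of every tensor power of $\rho_{XAB}$ equals the average over the flag outcomes of $E^{(1)}_{D,\to}$ of the corresponding product states. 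This reduces the whole theorem to the building-block lemma that $E^{(1)}_{D,\to}\big(\tau^{\otimes a}\otimes\ketbra{\Psi}{\Psi}\big)=S(A)_\Psi$ for any anti-degradable $\tau$ and any pure $\ket{\Psi}$: granting it, the average over flag outcomes equals $(1-s)E_F(\omega_{AB})$ for every $n$, and letting $n\to\infty$ completes the argument.

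I expect the building-block lemma to be the main obstacle. The direction $\ge$ is trivial (ignore the $\tau^{\otimes a}$ factor), while for $\le$ I would use that anti-degradability of $\tau$ makes Bob's share a degraded copy of the environment, together with monotonicity of coherent information under channels on Bob's system and additivity of coherent information for degradable states~\eqref{Eq:deg_E1}; the delicate point is that the corresponding identity can fail for general resource states in place of $\ketbra{\Psi}{\Psi}$ (superactivation-type phenomena), so the argument must genuinely exploit purity of $\ket{\Psi}$. Alternatively, one may invoke a regularized version of the Leditzky--Datta--Smith convexity, under which the anti-degradable component of a mixture still contributes zero to $E_{D,\to}$; this route shortcuts directly from the decomposition of $\rho_{AB}$ to $E_{D,\to}(\rho_{AB})\le(1-s)\sum_j q_j E_{D,\to}(\ketbra{\psi_j}{\psi_j})=(1-s)E_F(\omega_{AB})$.
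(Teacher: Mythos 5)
Your proposal is correct and follows essentially the same route as the paper: decompose $\rho_{AB}$ into the anti-degradable optimizer $\tau_{AB}$ plus the squeezed part $\omega_{AB}$, refine $\omega_{AB}$ into pure (hence degradable) states, apply the Leditzky--Datta--Smith convexity of $E_{D,\to}$ on degradable/anti-degradable mixtures, and minimize over pure-state decompositions to obtain $E_F(\omega_{AB})$. The regularization issue you analyze at length (flags, and the building-block identity for products of anti-degradable states with pure states) is precisely the content of the cited result from~\cite{Leditzky2017}, which the paper invokes as a black box rather than reproving.
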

\begin{proof}
Suppose the ADG-squeezed state of $\rho_{AB}$ is $\omega_{AB}$ and the optimal solution in Eq.~\eqref{Eq:sqz_mrela_def} for $\rho_{AB}$ is $\tau_{AB}$. It follows $\rho_{AB} = [1-2^{-\cR_{\max,\adg}(\rho_{AB})}]\omega_{AB} + 2^{-\cR_{\max,\adg}(\rho_{AB})}\tau_{AB}$,
where $\tau_{AB}$ is anti-degradable. Since any pure state is degradable, we can decompose $\rho_{AB}$ into degradable and anti-degradable parts as
$\rho_{AB} = [1-2^{-\cR_{\max,\adg}(\rho_{AB})}]\sum_i p_i \ketbra{\omega_i}{\omega_i}_{AB} + 2^{-\cR_{\max,\adg}(\rho_{AB})}\tau_{AB}$, where $\omega_{AB}= \sum_{i}p_i \ketbra{\omega_i}{\omega_i}_{AB}$ is any pure state decomposition of $\omega_{AB}$. Based on the essential convexity of $E_{D,\to}(\cdot)$ on decomposing a state into degradable and anti-degradable parts~\cite{Leditzky2017} (cf. Eq.~\eqref{Eq:deg_adg_conv}), we have
\begin{equation*}
\begin{aligned}
    E_{D,\to}(\rho_{AB}) &\leq [1-2^{-\cR_{\max,\adg}(\rho_{AB})}]\cdot \sum_{i} p_i E_{D,\to}(\omega_{i})\\
    & \quad + 2^{-\cR_{\max,\adg}(\rho_{AB})} E_{D,\to}(\tau_{AB})\\
    &\leq [1-2^{-\cR_{\max,\adg}(\rho_{AB})}]\cdot \sum_{i} p_i I_c(A\rangle B)_{\omega_{i}},
\end{aligned}
\end{equation*}
where the second inequality is due to the properties in Eq.~\eqref{Eq:deg_E1} and Eq.~\eqref{Eq:adg_E1} for degradable states and anti-degradable states, respectively. After taking the minimization over all possible decomposition of $\omega_{AB}$, we arrived at $E_{D,\to}(\rho_{AB}) \leq [1-2^{-\cR_{\max,\adg}(\rho_{AB})}]\cdot E_F(\omega_{AB})$.
\end{proof}

\begin{remark}
The bound $E_{\rm rev}^{u}(\rho_{AB})$ has a cartoon illustration as shown in Fig.~\ref{fig:cartoon_sqz_distill}. The main insight of it is to squeeze out as much of the free or useless part, the anti-degradable state here, as possible. We point out that squeezing \textit{all} useless parts out does not necessarily give the tightest upper bound in terms of the one-way distillable entanglement, e.g., isotropic states~\cite{Leditzky2017}. Instead of squeezing out all the useless parts, there may be room for exploring more appropriate decompositions when we try to decompose a specific quantum state. However, the approach we present in Theorem~\ref{thm:anti_sqz_bound} is an invaluable method for general states as shown in subsection~\ref{sec:conti_bound_ex} and can be seen as a variant of the continuity bound in terms of the anti-degradability of the state.
\end{remark}
\begin{corollary}\label{coro:E_sqz_upperbound}
For any bipartite state $\rho_{AB}$,
\begin{equation*}\label{Eq:sqz_bound}
\begin{aligned}
    E_{D,\to}(\rho_{AB}) &\leq \Hat{E}_{\rm rev}^u(\rho_{AB})\\
    &:= [1-2^{-\cR_{\max,\adg}(\rho_{AB})}]\cdot \sum_{i} \lambda_i S(B)_{\psi_i},
\end{aligned}
\end{equation*}
where $\omega_{AB}=\sum_{i} \lambda_i \ketbra{\psi_i}{\psi_i}$ is the spectral decomposition of the ADG-squeezed state $\omega_{AB}$ of $\rho_{AB}$.
\end{corollary}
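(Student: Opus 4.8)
The plan is to piggyback on the proof of Theorem~\ref{thm:anti_sqz_bound}, which already does almost all of the work. First I would recall that, starting from the convex decomposition $\rho_{AB} = [1-2^{-\cR_{\max,\adg}(\rho_{AB})}]\,\omega_{AB} + 2^{-\cR_{\max,\adg}(\rho_{AB})}\,\tau_{AB}$ into the $\adg$-squeezed state $\omega_{AB}$ and an anti-degradable state $\tau_{AB}$, the argument there establishes --- for \emph{every} pure-state decomposition $\omega_{AB} = \sum_i p_i \ketbra{\omega_i}{\omega_i}$ --- the intermediate inequality
\begin{equation*}
E_{D,\to}(\rho_{AB}) \leq [1-2^{-\cR_{\max,\adg}(\rho_{AB})}]\cdot \sum_i p_i\, I_c(A\rangle B)_{\omega_i},
\end{equation*}
which holds before any minimization over decompositions is performed. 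So the only new ingredient needed is to instantiate this with a concrete, efficiently obtainable decomposition, namely the spectral decomposition $\omega_{AB} = \sum_i \lambda_i \ketbra{\psi_i}{\psi_i}$, whose eigenvectors $\ket{\psi_i}_{AB}$ are pure and hence degradable.

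Next I would simplify the right-hand side using the fact that the coherent information of a pure state reduces to a single von Neumann entropy: for any pure $\ket{\psi}_{AB}$ one has $S(AB)_\psi = 0$, hence $I_c(A\rangle B)_\psi = S(B)_\psi - S(AB)_\psi = S(B)_\psi$. Substituting $I_c(A\rangle B)_{\psi_i} = S(B)_{\psi_i}$ termwise turns the intermediate bound into exactly $[1-2^{-\cR_{\max,\adg}(\rho_{AB})}]\cdot \sum_i \lambda_i S(B)_{\psi_i} = \Hat{E}_{\rm rev}^u(\rho_{AB})$, completing the argument.

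I do not expect any genuine obstacle here: the corollary is deliberately weaker than Theorem~\ref{thm:anti_sqz_bound}, since $\sum_i \lambda_i S(B)_{\psi_i}$ is merely one feasible value in the minimization defining $E_F(\omega_{AB})$, so $\Hat{E}_{\rm rev}^u(\rho_{AB}) \geq E_{\rm rev}^u(\rho_{AB})$. The only point worth stating carefully is that the inequality in the proof of Theorem~\ref{thm:anti_sqz_bound} is valid for an \emph{arbitrary} pure-state decomposition of $\omega_{AB}$, so that specializing to the eigendecomposition is legitimate; the payoff is that no entanglement-of-formation optimization is required, only a diagonalization of the $\adg$-squeezed state.
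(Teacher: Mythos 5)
Your proposal is correct and is essentially the paper's own argument: the paper simply notes that the spectral decomposition is one feasible pure-state decomposition in the minimization defining $E_F(\omega_{AB})$, so $E_F(\omega_{AB}) \leq \sum_i \lambda_i S(B)_{\psi_i}$, which combined with Theorem~\ref{thm:anti_sqz_bound} gives the corollary. Your route of entering the chain one step earlier (at the intermediate inequality with $I_c(A\rangle B)_{\psi_i} = S(B)_{\psi_i}$ for pure states) is the same calculation phrased slightly differently.
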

Corollary \ref{coro:E_sqz_upperbound} is followed by the fact that $E_F(\omega_{AB})$ has a trivial upper bound as $E_F(\omega_{AB}) \leq \sum_{i} \lambda_i S(B)_{\psi_i}$. We note that any other good upper bounds on the entanglement of formation can also be applied to Theorem~\ref{thm:anti_sqz_bound}. In particular, the bound $\Hat{E}_{\rm rev}^u(\rho_{AB})$ is efficiently computable since $\cR_{\max,\adg}(\rho_{AB})$ and $2^{-\cR_{\max,\adg}(\rho_{AB})}$ can be efficiently computed via an SDP. 
By Slater’s condition~\cite{giorgi2013traces}, the primal and dual problems satisfy strong duality, and both evaluate to $1-2^{-\cR_{\max,\adg}(\rho_{AB})}$.  We remain the derivation of the dual program in Appendix~\ref{appendix:sqzbound_dual_sdp}.
\begin{equation*}\label{Eq:pri_dual_SDP}
\begin{aligned}
&\underline{\textbf{Primal Program}}\\
\min_{\omega_{AB}, \tau_{AB}, \tau_{ABE}}&\;  \tr[\omega_{AB}],\\
 {\rm s.t.}& \; \rho_{AB} =\omega_{AB}+\tau_{AB}, \\
&\; \omega_{AB} \geq 0, \tau_{AB} \geq 0, \tau_{ABE} \geq 0,\\
&\; \tr_{E}[\tau_{ABE}] = \tr_{B}[\tau_{ABE}] = \tau_{AB}.
\end{aligned}
\end{equation*}

It is worth mentioning that this new bound is related to the bound $E_{\rm DA}(\cdot)$ proposed in~\cite{Leditzky2017} utilizing the convexity of $E_{D,\to}(\cdot)$ on decomposing a state into degradable and anti-degradable parts. Remarkably, constructing such a decomposition is challenging due to the non-convex nature of the degradable state set. Thus, it is difficult to compute $E_{\rm DA}(\cdot)$ in practice due to the hardness of tracking all possible decompositions. For example, when considering maximally entangled states affected by general noises, we currently lack strategies to compute the exact value of $E_{\rm DA}(\cdot)$. In contrast, $\Hat{E}_{\rm rev}^u(\cdot)$ overcomes this difficulty and can be efficiently computed for general states, serving as a valuable approximation of $E_{\rm DA}(\cdot)$. It outperforms the known computable upper bounds for many maximally entangled states under practical noises in a high-noise region, as shown in subsection \ref{sec:conti_bound_ex}. Moreover, our method in particular offers flexibility in selecting other decomposition strategies, i.e., the object functions in the SDP, beyond simply maximizing the proportion of antidegradable states. Such a method provides a new algorithmic route to fully exploit the bound given in~\cite{Leditzky2017}.

\subsection{Continuity bounds of the one-way distillable entanglement}\label{sec:conti_bound}
Note that the insight of the bound above is considering the distance of a given state between the anti-degradable states set. With different distance measures, the authors in~\cite{Sutter2014} derived continuity bounds on quantum capacity in terms of the (anti)degradability of the channel. Thus for self-contained, we introduce some distance measures between a state and the set $\adg$ and prove the continuity bounds for the state version as a comparison with $\Hat{E}_{\rm rev}^u(\rho_{AB})$.

\begin{definition}
Let $\rho_{AB}$ be a bipartite quantum state, the anti-degradable set distance is defined by
\begin{equation}
    d_{\rm set}(\rho_{AB}) := \min_{\sigma_{AB}\in {\rm ADG}} \frac{1}{2}\|\rho_{AB} - \sigma_{AB}\|_1,
\end{equation}
where the minimization ranges over all anti-degradable states on system AB. 
\end{definition}
Similarly, one also has the anti-degradable map distance as follows.
\begin{definition}[\!\!~\cite{Leditzky2017}]
Let $\rho_{AB}$ be a bipartite quantum state with purification $\phi_{ABE}$, the anti-degradable map distance is defined by
\begin{equation}
    d_{\rm map}(\rho_{AB}):= \min_{\mathcal{D}:E\rightarrow B} \frac{1}{2} \|\rho_{AB} - \mathcal{D}(\rho_{AE})\|_1\\,
\end{equation}
where $\rho_{AE} = \tr_B(\phi_{ABE})$ and the minimization ranges over all {\rm CPTP} maps $\mathcal{D}$.
\end{definition}

Both parameters can be computed via SDP, ranging from $[0,1]$, and are equal to $0$ iff $\rho_{AB}$ is anti-degradable. Similar to the idea in~\cite{Sutter2014} for channels and the proof techniques in~\cite{Leditzky2017}, we utilize the continuity bound of the conditional entropy in Lemma \ref{lem:cond_entr_continuity} proved by Winter~\cite{Winter_2016} to derive two continuity upper bounds on the one-way distillable entanglement, concerning the distance measures above. The proofs can be found in Appendix \ref{appendix:proof_conti_bound}. We denote $h(p)=-p\log p - (1-p)\log(1-p)$ as the binary entropy and $g(p):= (1+p)h(\frac{p}{1+p})$ as the bosonic entropy. 

\begin{proposition}\label{thm:anti_set_conti_bound}
For any bipartite state $\rho_{AB}$ with an anti-degradable set distance $\varepsilon_{\rm set}$, it satisfies
\begin{equation*}\label{Eq:set_bound}
    E_{D,\to}(\rho_{AB}) \leq E_{\rm SCB}(\rho_{AB}) := 2\varepsilon_{\rm set}\log(|A|) + g(\varepsilon_{\rm set}).
\end{equation*}
\end{proposition}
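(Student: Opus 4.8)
The plan is to compare $\rho_{AB}$ to the nearest anti-degradable state and to convert the trace-distance gap into a penalty on the one-way distillable entanglement via a single application of Winter's conditional-entropy continuity bound (Lemma~\ref{lem:cond_entr_continuity}), following the strategy of~\cite{Sutter2014} for channels together with the proof techniques of~\cite{Leditzky2017}.

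First I would fix $\sigma^{*}_{AB}\in\adg$ attaining $d_{\rm set}(\rho_{AB})=\tfrac12\|\rho_{AB}-\sigma^{*}_{AB}\|_1=\varepsilon_{\rm set}$; this minimum is attained since $\adg$ is compact. Two properties of $\sigma^{*}_{AB}$ will be used: by Eq.~\eqref{Eq:adg_E1}, $E^{(1)}_{D,\to}(\sigma^{*}_{AB})=0$; and anti-degradability is stable under tensor powers, because if $\sigma_{ABE}$ is a symmetric extension of $\sigma^{*}_{AB}$ then $\sigma_{ABE}^{\otimes n}$, regrouped as a state on $(A^{n})(B^{n})(E^{n})$, is a symmetric extension of $(\sigma^{*}_{AB})^{\otimes n}$, so $(\sigma^{*}_{AB})^{\otimes n}\in\adg$ and $E^{(1)}_{D,\to}\big((\sigma^{*}_{AB})^{\otimes n}\big)=0$ for every $n$.

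Next comes the one-shot core. Writing the conditional entropy $S(A'|MB):=S(A'MB)-S(MB)$, one has $E^{(1)}_{D,\to}(\tau_{AB})=\max_{T}I_c(A'\rangle MB)_{T(\tau_{AB})}=-\min_{T}S(A'|MB)_{T(\tau_{AB})}$, the extremum over quantum instruments $T\colon A\to A'M$, which may be taken with $|A'|\le|A|$. Let $T^{\star}$ be optimal for $\rho_{AB}$. Since $T^{\star}$ is a quantum channel, monotonicity of the trace distance gives $\tfrac12\|T^{\star}(\rho_{AB})-T^{\star}(\sigma^{*}_{AB})\|_1\le\varepsilon_{\rm set}$, and Lemma~\ref{lem:cond_entr_continuity} then yields
\begin{equation*}
\big|S(A'|MB)_{T^{\star}(\rho_{AB})}-S(A'|MB)_{T^{\star}(\sigma^{*}_{AB})}\big|\le 2\varepsilon_{\rm set}\log|A'|+g(\varepsilon_{\rm set}).
\end{equation*}
Combining this with $-S(A'|MB)_{T^{\star}(\sigma^{*}_{AB})}\le E^{(1)}_{D,\to}(\sigma^{*}_{AB})=0$ and $|A'|\le|A|$ gives the one-shot estimate $E^{(1)}_{D,\to}(\rho_{AB})\le 2\varepsilon_{\rm set}\log|A|+g(\varepsilon_{\rm set})$.

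The last step — which I expect to be the main obstacle — is to upgrade this to the regularized quantity $E_{D,\to}(\rho_{AB})=\lim_{n}\tfrac1n E^{(1)}_{D,\to}(\rho_{AB}^{\otimes n})$. Applying the one-shot bound to $\rho_{AB}^{\otimes n}$ against the single comparison state $(\sigma^{*}_{AB})^{\otimes n}$ is too lossy: the trace distance degrades to $n\varepsilon_{\rm set}$ and the dimension factor to $n\log|A|$, so the resulting bound diverges after dividing by $n$. The fix, in the spirit of~\cite{Sutter2014}, is never to compare against the full-dimensional anti-degradable state at once, but to use the product structure: replace the copies of $\rho_{AB}$ by copies of $\sigma^{*}_{AB}$ one at a time, so that each of the $n$ telescoping differences of conditional entropies sees only an $\varepsilon_{\rm set}$-perturbation in trace distance, and arrange the chain-rule bookkeeping so that each step contributes a $\log|A|$ (rather than $n\log|A|$) dimension factor; together with $E^{(1)}_{D,\to}\big((\sigma^{*}_{AB})^{\otimes n}\big)=0$ this should give $E^{(1)}_{D,\to}(\rho_{AB}^{\otimes n})\le n\big(2\varepsilon_{\rm set}\log|A|+g(\varepsilon_{\rm set})\big)$, whence $E_{D,\to}(\rho_{AB})\le 2\varepsilon_{\rm set}\log|A|+g(\varepsilon_{\rm set})=E_{\rm SCB}(\rho_{AB})$ upon letting $n\to\infty$. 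Keeping the dimension factor linear (not quadratic) in $n$ through the telescoping is precisely where the tensor-stability of $\adg$, the vanishing of $E^{(1)}_{D,\to}$ on anti-degradable states, and the tight form of Winter's bound must be used in concert.
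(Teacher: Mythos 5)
Your proposal follows essentially the same route as the paper's proof: fix the nearest anti-degradable state $\sigma$, use that $\sigma^{\otimes n}$ is still anti-degradable so that $E^{(1)}_{D,\to}(\sigma^{\otimes n})=0$, and telescope from $\rho^{\otimes n}$ to $\sigma^{\otimes n}$ one tensor factor at a time, paying one application of Winter's lemma per step. The step you defer as ``chain-rule bookkeeping'' is executed in the paper with no chain rule at all: for a fixed instrument isometry $U_n:A^n\to A'MN$, consecutive hybrid states $U_n(\sigma^{\otimes t}\otimes\rho^{\otimes(n-t)})U_n^{\dagger}$ and $U_n(\sigma^{\otimes(t+1)}\otimes\rho^{\otimes(n-t-1)})U_n^{\dagger}$ are within trace distance $\varepsilon_{\rm set}$ of each other (the perturbation sits in a single tensor slot, and trace distance is invariant under the isometry), so Lemma~\ref{lem:cond_entr_continuity} is applied directly to $S(A'|B^nM)$ at each of the $n$ steps, giving $n\Delta$ in total and $\Delta$ after regularization. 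One remark: your instinct that the dimension factor is the delicate point is legitimate, and it is also the one place where the paper is terse --- the conditioned system in each application of the lemma is the instrument output $A'$ of a map acting on all of $A^n$, and the paper charges $\log|A|$ for it without further comment; your proposed remedy of isolating the changed slot on the output side does not obviously apply, since the instrument mixes the $A$ slots. Apart from this shared soft spot, your outline is the paper's argument, and the only piece you leave unfinished is one the paper settles by a direct, per-step invocation of the continuity lemma rather than by any additional rearrangement.
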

\begin{proposition}\label{thm:anti_map_conti_bound}
For any bipartite state $\rho_{AB}$ with an anti-degradable map distance $\varepsilon_{\rm map}$, it satisfies
\begin{equation*}\label{Eq:map_bound}
E_{D,\to}(\rho_{AB}) \leq E_{\rm MCB}(\rho_{AB}) := 4\varepsilon_{\rm map}\log(|B|) + 2g(\varepsilon_{\rm map}).
\end{equation*}
\end{proposition}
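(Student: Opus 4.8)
The plan is to follow the strategy of Sutter, Scholz, Winter, and Renner for channels and the argument of Leditzky, Datta, and Smith, but transplanted to the state setting, the essential input being the continuity bound for conditional entropy. Fix a bipartite state $\rho_{AB}$ with purification $\phi_{ABE}$, write $\rho_{AE}=\tr_B(\phi_{ABE})$, and let $\cD\colon E\to B$ be the optimal CPTP map achieving $d_{\rm map}(\rho_{AB})=\varepsilon_{\rm map}$, so $\sigma_{AB}:=\cD(\rho_{AE})$ satisfies $\tfrac12\|\rho_{AB}-\sigma_{AB}\|_1\le\varepsilon_{\rm map}$. The key observation is that $\sigma_{AB}$ is anti-degradable by construction: applying $\cD$ to the $E$ system of the original purification produces a symmetric-type extension, so $E_{D,\to}(\sigma_{AB})=0$ and in particular $I_c(A\rangle B)_\sigma\le 0$. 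Then I would bound $E_{D,\to}(\rho_{AB})$ using the hashing-type upper bound strategy from Theorem~\ref{thm:anti_sqz_bound}: bounding $E_{D,\to}(\rho_{AB})$ by $E^{(1)}_{D,\to}(\rho_{AB})=\max_T I_c(A'\rangle MB)_{T(\rho_{AB})}$ is not single-letter, so instead I would route through the coherent information directly, using that for any state the one-way distillable entanglement is controlled by $I_c(A\rangle B)$ evaluated on a suitable instrument output; the cleanest route is to reuse the proof of Proposition~\ref{thm:anti_set_conti_bound} by noting $d_{\rm set}(\rho_{AB})\le d_{\rm map}(\rho_{AB})$ since $\sigma_{AB}=\cD(\rho_{AE})$ is a specific anti-degradable state, giving $E_{D,\to}(\rho_{AB})\le 2\varepsilon_{\rm map}\log|A| + g(\varepsilon_{\rm map})$ — but this has the wrong constants and the wrong system, so a direct argument is needed to get $4\varepsilon_{\rm map}\log|B|+2g(\varepsilon_{\rm map})$.

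The direct argument: for $\rho_{AB}$ and any CPTP instrument $T\colon A\to A'M$ on Alice, let $\rho'=T(\rho_{AB})$ on $A'MB$ and let $\sigma'=T(\sigma_{AB})$ similarly, where we first note $\tfrac12\|\rho'-\sigma'\|_1\le\varepsilon_{\rm map}$ by monotonicity of trace distance under the channel $T$. Since $\sigma_{AB}$ is anti-degradable and anti-degradability is preserved under local operations on $A$ (an instrument on $A$ composes with the degrading map on the complementary side), $\sigma'$ is anti-degradable on $A'M:B$, hence $I_c(A'\rangle MB)_{\sigma'}\le 0$. Now I write $I_c(A'\rangle MB)_{\rho'} = I_c(A'\rangle MB)_{\rho'} - I_c(A'\rangle MB)_{\sigma'} + I_c(A'\rangle MB)_{\sigma'}\le I_c(A'\rangle MB)_{\rho'} - I_c(A'\rangle MB)_{\sigma'}$, and expand $I_c = -H(A'|MB)$ (conditional entropy with a sign) so the difference becomes $H(A'|MB)_{\sigma'} - H(A'|MB)_{\rho'}$, a difference of conditional entropies of states that are $2\varepsilon_{\rm map}$-close in trace norm. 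Applying Winter's continuity bound (Lemma~\ref{lem:cond_entr_continuity}) bounds this by $2\varepsilon_{\rm map}\log|A'| + g(\varepsilon_{\rm map})$ — but purification and the Fannes-type bound on the complementary side, together with the fact that one should bound the dimension factor by $|B|$ rather than $|A'|$ (using $H(A'|MB)=-H(A'|ME)$ on a purification, where the $E$ system has been replaced via $\cD$ by a copy of $B$), is what produces the $\log|B|$ and the doubled constants. Taking the maximum over instruments $T$ and then the regularization limit (the bound is dimension-independent in $n$ after dividing by $n$, since $\varepsilon_{\rm map}$ does not grow — this requires that $d_{\rm map}(\rho^{\otimes n})\le n\,d_{\rm map}(\rho)$ or a subadditivity-type control, or more carefully a single-shot-to-asymptotic argument as in~\cite{Leditzky2017}) gives the claimed bound.

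The main obstacle I anticipate is exactly the last point: making the continuity estimate survive regularization. A naive application of Lemma~\ref{lem:cond_entr_continuity} to $\rho_{AB}^{\otimes n}$ gives a bound scaling like $2\varepsilon_n\log|A|^n + g(\varepsilon_n) = 2n\varepsilon_n\log|A| + g(\varepsilon_n)$ where $\varepsilon_n$ is the anti-degradable map distance of the $n$-fold tensor power; dividing by $n$ leaves $2\varepsilon_n\log|A|$, which is only useful if $\varepsilon_n$ stays bounded (ideally $\varepsilon_n\le\varepsilon_{\rm map}$, which holds since $\cD^{\otimes n}$ is a valid degrading map for $\rho_{AB}^{\otimes n}$ and $\tfrac12\|\rho^{\otimes n}-\sigma^{\otimes n}\|_1\le n\varepsilon_{\rm map}$ — wait, this grows, so one must instead use that $\sigma^{\otimes n}$ is anti-degradable and bound via the single-letter quantity before tensoring, i.e., apply the continuity argument at the level of $E^{(1)}_{D,\to}$ with the instrument acting on one copy while exploiting additivity of $I_c$ on the anti-degradable reference). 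The correct route, following~\cite{Leditzky2017} and~\cite{Sutter2014}, is to prove the bound first for $E^{(1)}_{D,\to}(\rho_{AB})$ with the $\log|B|$ factor independent of any tensor power, and then observe that the same inequality applied to $\rho_{AB}^{\otimes n}$ with the degrading map $\cD^{\otimes n}$ gives $E^{(1)}_{D,\to}(\rho^{\otimes n})\le 4 n\varepsilon_{\rm map}\log|B| + 2 g(\varepsilon_{\rm map}^{(n)})$ where crucially the dimension penalty is linear in $n$ with per-copy slope $4\varepsilon_{\rm map}\log|B|$ and the $g$-term is sublinear — here I must be careful to set up the continuity comparison so that only a single "error budget" $\varepsilon_{\rm map}$ enters per copy rather than accumulating. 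I expect the bookkeeping of which entropy difference to continuity-bound (so that the dimension factor is $|B|$, not $|A|^n$ or $|B|^n$) to be the delicate part, and this is presumably why the constant is $4$ and the factor $2$ appears in front of $g$; the rest is a routine combination of trace-distance monotonicity, anti-degradability preservation under local maps, and $I_c\le 0$ for anti-degradable states.
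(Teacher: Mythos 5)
Your high-level plan does coincide with the paper's: introduce the anti-degradable reference $\hat{\rho}_{AB'}=\cD(\rho_{AE})$, invoke Winter's continuity bound for conditional entropy (Lemma~\ref{lem:cond_entr_continuity}), and recognize that comparing $\rho_{AB}^{\ox n}$ with $\hat{\rho}^{\ox n}$ globally fails because the trace distance accumulates to $n\varepsilon_{\rm map}$. That diagnosis is correct. But the proposal stops exactly at the step you yourself flag as ``the delicate part,'' and that step is the proof. The paper (adapting Theorem~2.12 of~\cite{Leditzky2017}) takes a joint instrument isometry $U_n:A^n\to A'MN$ acting on the purification $\phi_{ABE}^{\ox n}$ and applies the Stinespring isometry $W_{E\to B'G}$ of $\cD$ to one copy of $E$ at a time; Winter's lemma is then telescoped over the $n$ single-copy swaps $B_t\leftrightarrow B'_t$, which is what simultaneously delivers the per-copy error budget $\varepsilon_{\rm map}$ and the dimension factor $\log|B|$ (each swap perturbs only one $|B|$-dimensional subsystem). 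A chain of entropy identities using purity of the global state then reduces the remainder to $S(G^n|A'B'^nM)$ and finally to $I_c(A^n\rangle B'^n)_{\hat{\rho}^{\ox n}}\le 0$, at the cost of a second application of the continuity estimate --- which is where the overall factor of $2$ on both terms comes from. None of this construction appears in your sketch.

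Moreover, the steps you do write down would not produce the stated bound. Applying Lemma~\ref{lem:cond_entr_continuity} directly to $H(A'|MB)_{\rho'}-H(A'|MB)_{\sigma'}$ gives a dimension factor $\log|A'|$ with $A'$ the (uncontrolled) instrument output, not $\log|B|$; the duality you invoke to repair this should read $S(A'|MB)=-S(A'|E)$ for a purification on $A'MBE$ (not $-H(A'|ME)$), and even the corrected identity does not by itself yield $\log|B|$ without the copy-by-copy swap; and ``the instrument acting on one copy'' is inconsistent with the regularized formula in Eq.~\eqref{Eq:reg_form_de}, where the instrument acts jointly on $A^n$ --- the telescoping must be done over the environment isometries $W_t$, not over the instrument. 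So you have correctly identified the road (and the reason the naive routes fail), but the argument that actually makes the constants $4$ and $2$ and the factor $\log|B|$ emerge is missing.
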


\subsection{Examples of noisy states}\label{sec:conti_bound_ex}
We now compare the performance of different continuity bounds and the Rains bound with $\Hat{E}_{\rm rev}^u(\cdot)$ by some concrete examples. Due to noise and other device imperfections, one usually obtains some less entangled states in practice rather than the maximally entangled ones. Such a disturbance can be characterized by some CPTP maps appearing in each party. Thus for the task of the distillation of the maximally entangled states under practical noises, we consider Alice and Bob are sharing pairs of maximally entangled states affected by bi-local noisy channels, i.e.,
\begin{equation}\label{Eq:bilocal}
    \rho_{A'B'} = \cN_{A\to A'}\otimes\cN_{B\to B'}(\Phi_{AB}).
\end{equation}
Moreover, we also provide the exploration of our bounds on the one-way distillable entanglement of random quantum states in Appendix~\ref{appendix:more_examples}.

\paragraph{Qubit system} Suppose Alice's qubit is affected by the qubit amplitude damping channel with Kraus operators
$K_0 = \ketbra{0}{0} + \sqrt{1-\gamma}\ketbra{1}{1}$, $K_1 = \sqrt{\gamma}\ketbra{0}{1}$, and Bob's qubit is affected by the depolarizing channel $\cD(\rho) = (1-p)\rho + p {I_2}/{2}$. Set $\gamma = 0.1$ and the noise parameter of depolarizing noise varies in the range $p\in[0.15,0.3]$. $E_{D,\to}(\rho_{A'B'})$ has upper bounds as functions of $p$ shown in Fig.~\ref{fig:qubit}.

\paragraph{Qutrit system} For the system with a dimension $d>2$, we consider the multilevel versions of the amplitude damping channel (MAD)~\cite{chessa2021quantum} as a local noise for Alice. The Kraus operators of a MAD channel in a $d$-dimensional system are defined by
\begin{equation}\label{Eq:MAD_kraus_1}
\begin{aligned}
    \hat{K}_{i j} &\equiv \sqrt{\gamma_{j i}}\ketbra{i}{j}, \quad \forall i, j \text { s.t. } 0 \leq i \leq j\leq d-1, \\
    \hat{K}_0 &\equiv \ketbra{0}{0}+ \sum_{1 \leq j \leq d-1} \sqrt{1-\xi_j}\ketbra{j}{j},
\end{aligned}
\end{equation}
with $\gamma_{j i}$ real quantities describing the decay rate from $j$-th to $i$-th level that fulfill the conditions
\begin{equation}\label{Eq:MAD_kraus_2}
\begin{cases}
    0 \leq \gamma_{j i} \leq 1, & \forall i, j \text { s.t. } 0 \leq i<j \leq d-1, \\ 
    \xi_j \equiv \sum_{0 \leq i<j} \gamma_{j i} \leq 1, & \forall j=1, \cdots, d-1.
\end{cases}
\end{equation}
Suppose Alice qutrit is affected by a MAD channel with $\gamma_{10} = \gamma_{20} = 0.1,  \gamma_{21} = 0$. Bob's qutrit is affected by a qutrit depolarizing channel with the noise parameter $p$. Then $E_{D,\to}(\rho_{A'B'})$ has upper bounds as functions of $p$ shown in Fig.~\ref{fig:qutrit}.

\paragraph{Qudit system} For the qudit system, we consider Alice's qudit is affected by a MAD channel with $d=4$ defined in Eq.~\eqref{Eq:MAD_kraus_1} and Eq.~\eqref{Eq:MAD_kraus_2}, where $\gamma_{10} = \gamma_{20} = \gamma_{30} = \gamma_{21} = 0.1, \gamma_{31} = \gamma_{32} = 0$.
Let Bob's qudit be affected by a qudit depolarizing channel with noise parameter $p$, then $E_{D,\to}(\rho_{A'B'})$ has upper bounds as functions of $p$ shown in Fig.~\ref{fig:qudit}.

\begin{figure*}[htbp]
\centering
    \begin{minipage}[t]{0.49\textwidth}
    \centering
    \includegraphics[width=0.9\linewidth]{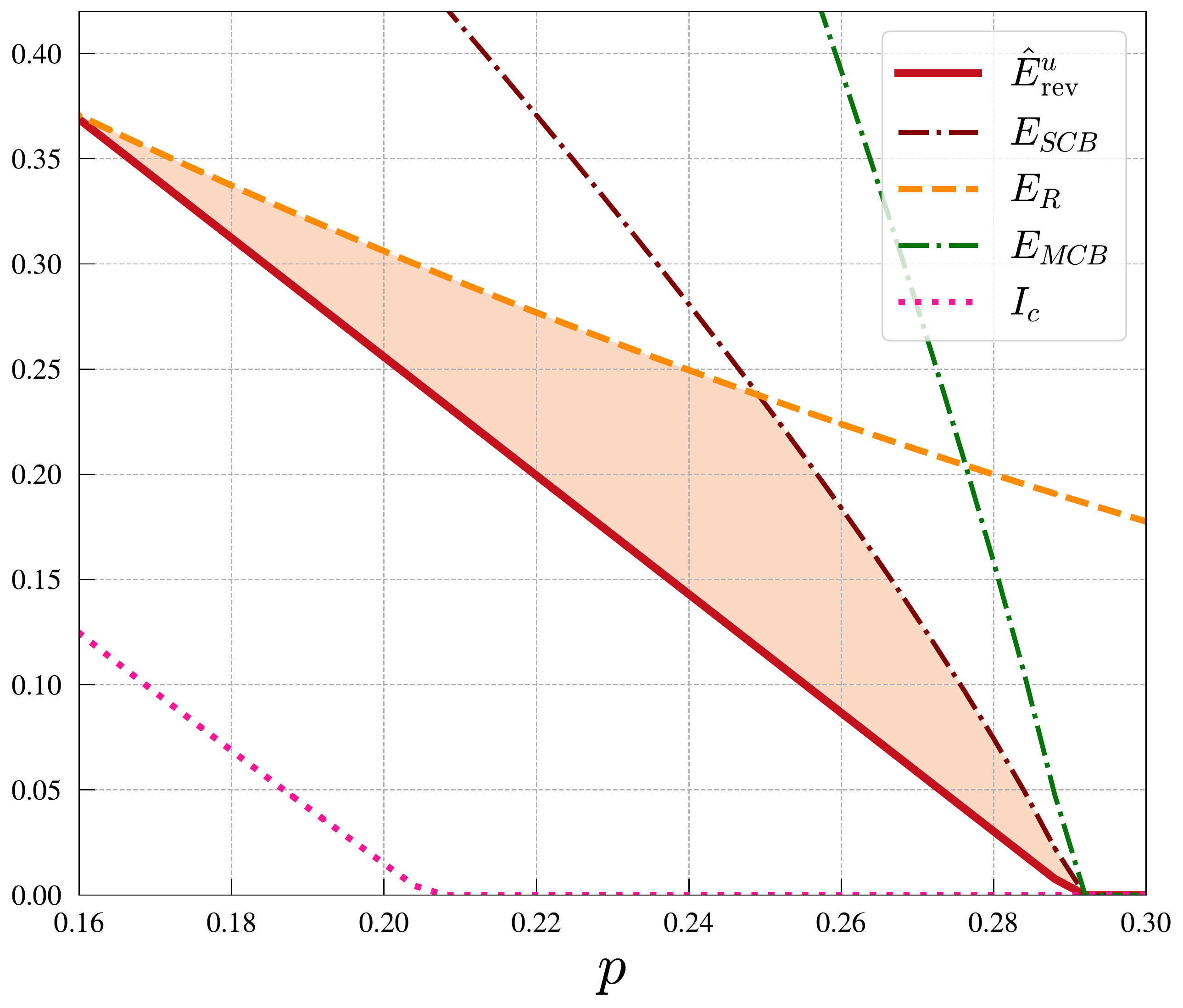}
    \caption{Upper bounds on the one-way distillable entanglement of the maximally entangled states affected by bi-local noises in a qubit system. The $x$-axis represents the change of the depolarizing noise $p$. The state's coherent information $I_c$ provides a lower bound. $R$ is the Rains bound. $\Hat{E}_{\rm rev}^u$ is the upper bound derived in Corollary \ref{coro:E_sqz_upperbound}. $E_{\rm SCB}$ and $E_{\rm MCB}$ are continuity bounds derived in Proposition \ref{thm:anti_set_conti_bound} and \ref{thm:anti_map_conti_bound}, respectively. It shows that $\Hat{E}_{\rm rev}^u$ outperforms all other upper bounds on these less entangled states.}
    \label{fig:qubit}
    \end{minipage}
    \hfill
    \begin{minipage}[t]{0.49\textwidth}
    \centering
    \includegraphics[width=0.9\linewidth]{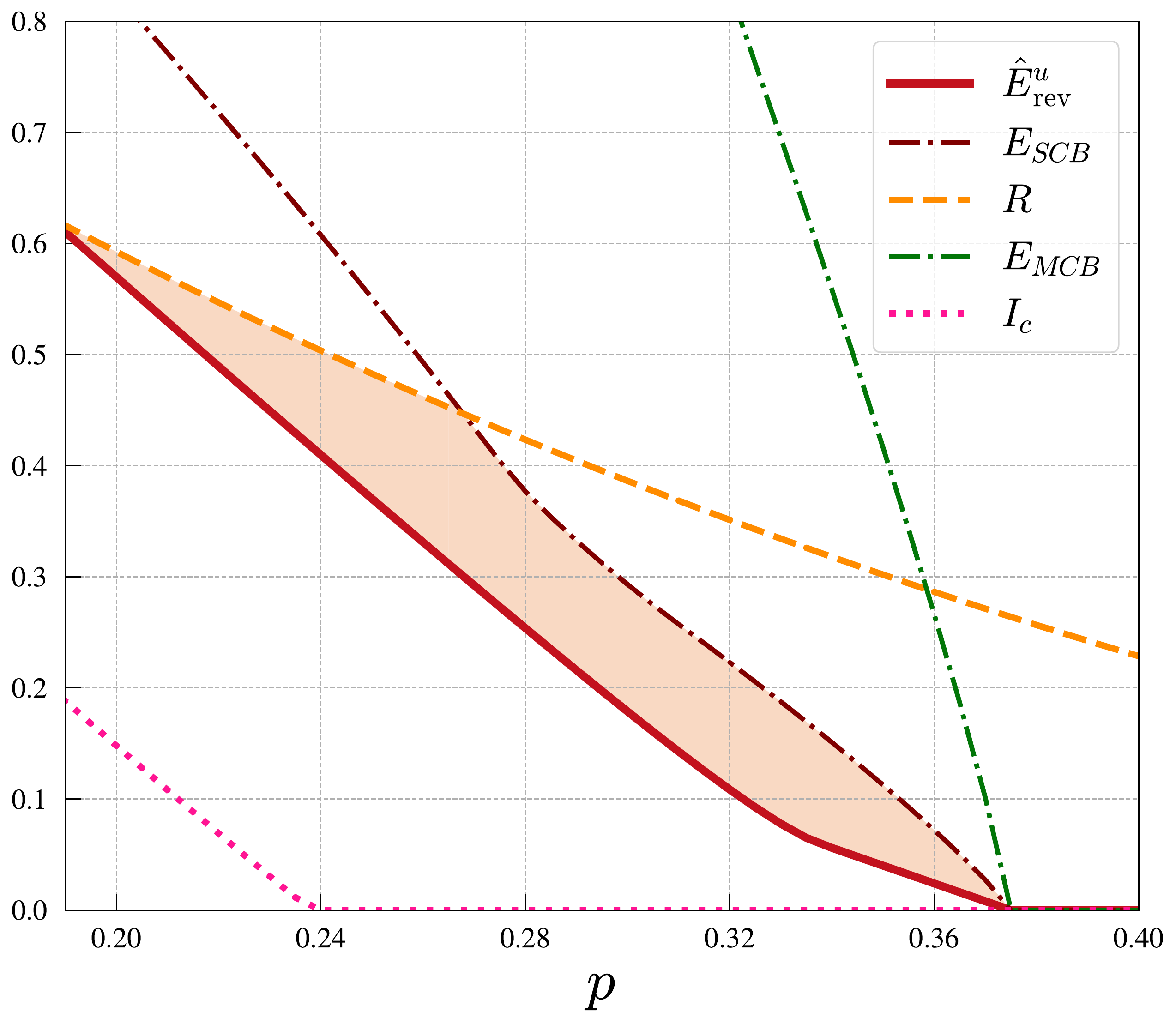}
    \caption{Upper bounds on the one-way distillable entanglement of the maximally entangled states affected by bi-local noises in a qutrit system. The $x$-axis represents the change of the depolarizing noise $p$. The parameters of the MAD channel is $\gamma_{10} = \gamma_{20} = 0.1,  \gamma_{21} = 0$.}
    \label{fig:qutrit}
    \end{minipage}
\end{figure*}
\begin{figure}
    \centering
    \includegraphics[width=0.9\linewidth]{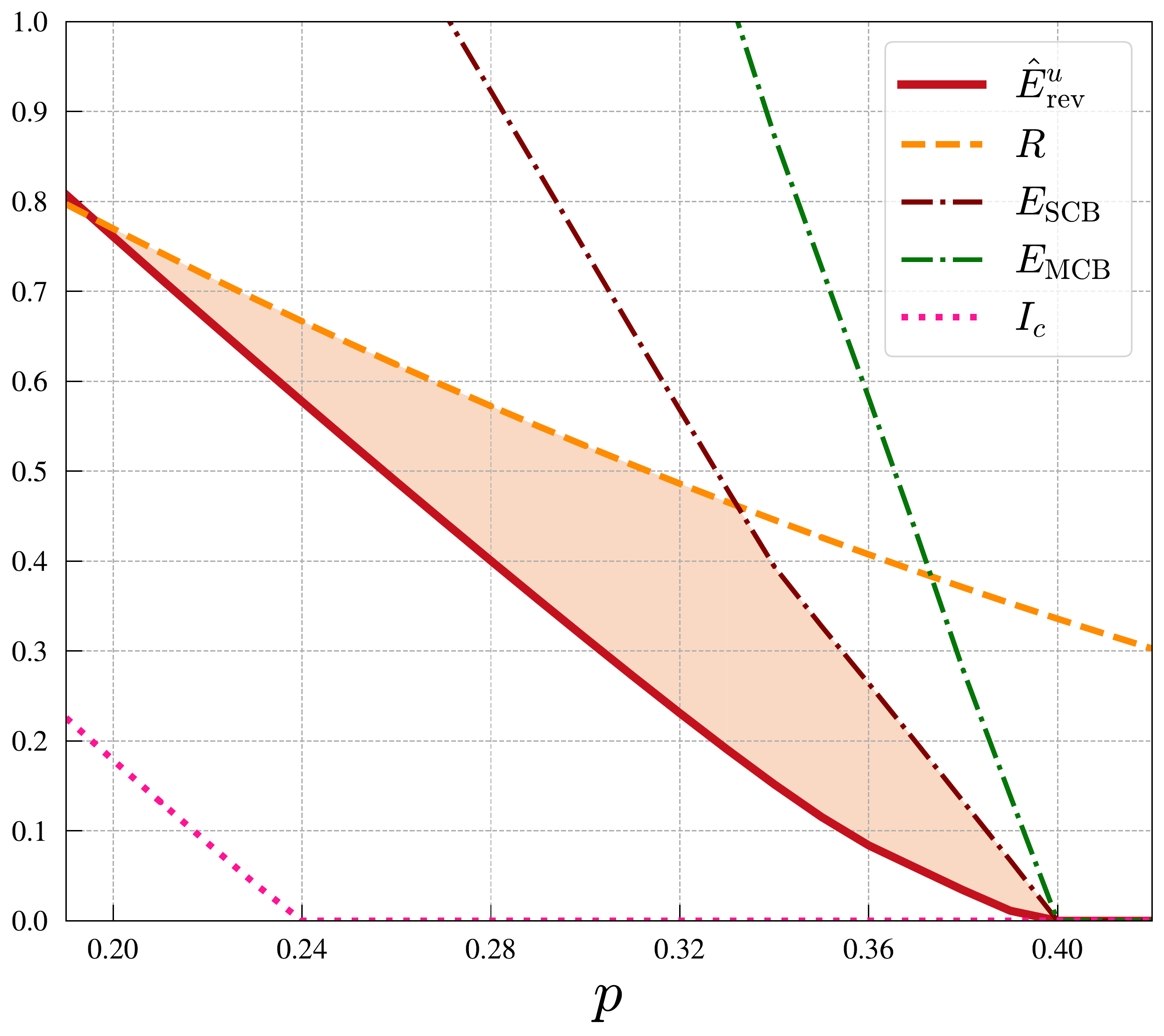}
    \caption{Upper bounds on the one-way distillable entanglement of the maximally entangled states affected by bi-local noises in a qudit system. The $x$-axis represents the change of the depolarizing noise $p$. The parameters of the MAD channel is $\gamma_{10} = \gamma_{20} = \gamma_{30} = \gamma_{21} = 0.1, \gamma_{31} = \gamma_{32} = 0$.}
    \label{fig:qudit}
\end{figure}

\subsection{Extending the method to the two-way distillable entanglement}\label{sec:two_way_app}
Similar to the \textit{reverse max-relative entropy of unextendible entanglement}, for a given bipartite state $\rho_{AB}$, we introduce the \textit{reverse max-relative entropy of NPT entanglement} as
\begin{equation}\label{Eq:sqz_mrppt_def}
    \cR_{\max,\PPT}(\rho_{AB}) := \min_{\tau\in \PPT}D_{\max}(\tau_{AB}||\rho_{AB}),
\end{equation}
where the minimization ranges over all $\PPT$ states. The term NPT refers to states whose partial transpose has negative eigenvalues. Based on the convexity of $E_{D,\leftrightarrow}(\cdot)$ on decomposing a state into maximally correlated (MC) states and PPT states~\cite{Leditzky2017} (see Appendix~\ref{appendix:two_way_DE_GBELL} for more details), we can utilize the reverse max-relative entropy of NPT entanglement to establish an upper bound on the two-way distillable entanglement, as outlined in Theorem~\ref{thm:twoway_bound}.
\begin{theorem}\label{thm:twoway_bound}
For any bipartite state $\rho_{AB}$, it satisfies
\begin{equation*}
E_{D,\leftrightarrow}(\rho_{AB}) \leq E_{\rm rev}^{npt}(\rho_{AB}):=[1-2^{-\cR_{\max,\PPT}(\rho_{AB})}]\cdot E_F(\omega_{AB}),
\end{equation*}
where $\omega_{AB}$ is the PPT-squeezed state of $\rho_{AB}$ and $\cR_{\max,\PPT}(\cdot)$ is the reverse max-relative entropy of NPT entanglement.
\end{theorem}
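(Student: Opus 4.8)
The plan is to run the proof of Theorem~\ref{thm:anti_sqz_bound} essentially verbatim, with the anti-degradable free set replaced by $\PPT$, the reverse max-relative entropy of unextendible entanglement replaced by $\cR_{\max,\PPT}(\cdot)$ from Eq.~\eqref{Eq:sqz_mrppt_def}, and the convexity of $E_{D,\to}(\cdot)$ on degradable/anti-degradable splittings replaced by the convexity of $E_{D,\leftrightarrow}(\cdot)$ on maximally-correlated/PPT splittings established in~\cite{Leditzky2017} and recalled in Appendix~\ref{appendix:two_way_DE_GBELL}.

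First I would unpack the definitions. By Definition~\ref{def:sqz_state} applied to $\cF=\PPT$ and by Eq.~\eqref{Eq:sqz_mrppt_def}, the PPT-squeezed state $\omega_{AB}$ and the optimal PPT state $\tau_{AB}$ satisfy
\begin{equation*}
\rho_{AB} = \bigl[1-2^{-\cR_{\max,\PPT}(\rho_{AB})}\bigr]\,\omega_{AB} + 2^{-\cR_{\max,\PPT}(\rho_{AB})}\,\tau_{AB}.
\end{equation*}
Next I would fix an arbitrary pure-state decomposition $\omega_{AB}=\sum_i p_i\ketbra{\omega_i}{\omega_i}$. The key elementary observation is that every pure bipartite state is maximally correlated: in its Schmidt basis $\ket{\omega_i}=\sum_j\sqrt{\lambda^{(i)}_j}\ket{jj}$, so $\ketbra{\omega_i}{\omega_i}=\sum_{j,k}\sqrt{\lambda^{(i)}_j\lambda^{(i)}_k}\,\ketbra{jj}{kk}$. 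Hence the displayed identity exhibits $\rho_{AB}$ as a convex mixture of maximally correlated states $\ketbra{\omega_i}{\omega_i}$ and the PPT state $\tau_{AB}$, so the convexity bound for $E_{D,\leftrightarrow}(\cdot)$ applies and gives
\begin{equation*}
E_{D,\leftrightarrow}(\rho_{AB}) \leq \bigl[1-2^{-\cR_{\max,\PPT}(\rho_{AB})}\bigr]\sum_i p_i\,E_{D,\leftrightarrow}(\omega_i) + 2^{-\cR_{\max,\PPT}(\rho_{AB})}\,E_{D,\leftrightarrow}(\tau_{AB}).
\end{equation*}

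Then I would close the estimate with two standard facts: $E_{D,\leftrightarrow}(\tau_{AB})=0$ since $\tau_{AB}$ is PPT, and $E_{D,\leftrightarrow}(\omega_i)=S(A)_{\omega_i}$ since $\omega_i$ is pure (its entropy of entanglement). This yields $E_{D,\leftrightarrow}(\rho_{AB})\leq[1-2^{-\cR_{\max,\PPT}(\rho_{AB})}]\sum_i p_i S(A)_{\omega_i}$ for every pure-state decomposition of $\omega_{AB}$, and minimizing the right-hand side over such decompositions turns $\sum_i p_i S(A)_{\omega_i}$ into $E_F(\omega_{AB})$, which is exactly $E_{\rm rev}^{npt}(\rho_{AB})$. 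As in Corollary~\ref{coro:E_sqz_upperbound}, one can substitute any computable upper bound on $E_F(\omega_{AB})$ (e.g., $\sum_i\lambda_i S(B)_{\psi_i}$ for the spectral decomposition of $\omega_{AB}$) for practical evaluation, while $\cR_{\max,\PPT}(\rho_{AB})$ is itself an SDP analogous to the one in subsection~\ref{sec:one_way_app}.

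The step requiring genuine care — the main obstacle — is verifying that the convexity statement of~\cite{Leditzky2017} is invoked within its hypotheses: one must confirm that it applies to an arbitrary finite mixture $\sum_i p_i\,(\text{MC state}) + q\,(\text{PPT state})$, not merely to a two-term convex combination, and that the pure summands $\ketbra{\omega_i}{\omega_i}$, all living on the same bipartition $AB$, each count as maximally correlated (their Schmidt bases may differ, but the convexity bound only consumes the maximal correlation of each individual term). Everything else — the vanishing of $E_{D,\leftrightarrow}$ on PPT states, the pure-state formula $E_{D,\leftrightarrow}=S(A)$, and the passage from the weighted sum of entropies to $E_F(\omega_{AB})$ via minimization over decompositions — is routine.
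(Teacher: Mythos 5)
Your proposal is correct and follows essentially the same route as the paper: the authors likewise reduce Theorem~\ref{thm:twoway_bound} to Lemma~\ref{lem:MC_PPT} (convexity of $E_{D,\leftrightarrow}$ on MC/PPT mixtures) via the observation that every pure state is maximally correlated, then use the vanishing of $E_{D,\leftrightarrow}$ on PPT states and minimize over pure-state decompositions to obtain $E_F(\omega_{AB})$. The points you flag as needing care (finite mixtures, differing Schmidt bases of the pure summands) are handled exactly as you describe and pose no obstacle.
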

It also follows an efficiently computable relaxation as $\Hat{E}_{\rm rev}^{npt}(\rho_{AB})= [1-2^{-\cR_{\max,\PPT}(\rho_{AB})}] \sum_{i} \lambda_i S(B)_{\psi_i}$, where $\omega_{AB}=\sum_{i} \lambda_i \ketbra{\psi_i}{\psi_i}$ is the spectral decomposition of the PPT-squeezed state $\omega_{AB}$ of $\rho_{AB}$. 

In fact, $\Hat{E}_{\rm rev}^{npt}(\cdot)$ can be interpreted as an easily computable version of the bound $E_{\rm MP}(\cdot)$ in~\cite{Leditzky2017}, utilizing the convexity of $E_{D,\leftrightarrow}(\rho_{AB})$ on the convex decomposition of $\rho_{AB}$ into MC states and PPT states. Since the set of all MC states is not convex, tracking all possible decompositions to compute $E_{\rm MP}(\cdot)$ is generally hard. However, $\Hat{E}_{\rm rev}^{npt}(\cdot)$ is efficiently computable by SDP and provides a remarkably tighter approximation of $E_{\rm MP}(\cdot)$ for the example states presented in~\cite{Leditzky2017}. The comparison in detail can be found in Appendix \ref{appendix:two_way_DE_GBELL}. We note that $R(\rho_{AB}) \leq E_{\rm MP}(\rho_{AB})\leq \Hat{E}_{\rm rev}^{npt}(\rho_{AB})$ where $R(\cdot)$ is the Rains bound for the two-way distillable entanglement. Nevertheless, $\Hat{E}_{\rm rev}^{npt}(\cdot)$ connects the reverse max-relative entropy of NPT entanglement with the entanglement of formation, and we believe such connection would shed light on the study of other quantum resource theories as well.

\section{Applications on quantum channel capacity}\label{sec:app_on_QC}

For a general quantum channel $\cN_{A\rightarrow B}$, its quantum capacity has a regularized formula proved by Lloyd, Shor, and Devetak~\cite{Lloyd1997, Shor2002a, Devetak2005a}:
\begin{equation}\label{Eq:channel_cap}
    Q(\cN) = \lim_{n\rightarrow \infty}\frac{1}{n}Q^{(1)}(\cN^{\ox n}),
\end{equation}
where $Q^{(1)}(\cN):= \max_{\ket{\phi}_{A'A}} I_c(A'\rangle B)_{(\cI\ox \cN)(\phi)}$ is the channel coherent information. Similar to the one-way distillable entanglement of a state, the regularization in Eq.~\eqref{Eq:channel_cap} makes the quantum capacity of a channel intractable to compute generally. Substantial efforts have been made to establish upper bounds. One routine is inspired by the Rains bound in entanglement theory~\cite{Rains2001}. Tomamichel et al. introduced Rains information~\cite{Tomamichel2015a} for a quantum channel as an upper bound on the quantum capacity. Then some efficiently computable relaxations or estimations are given in~\cite{Wang2017d,Fang2019a}. Another routine is to consider the (anti)degradability of the quantum channels and to construct flag extensions, which gives the currently tightest upper bound for quantum channels with symmetry or known structures~\cite{Sutter2014, Wang2019b, Fanizza2019, Kianvash2022}. 

A channel $\cN_{A\rightarrow B}$ is called \textit{degradable} if there exists a CPTP map $\cD_{B\rightarrow E}$ such that $\cN^c = \cD\circ \cN$, and is called \textit{anti-degradable} if there exists a CPTP map $\cA_{E\rightarrow B}$ such that $\cN = \cA\circ \cN^c$. It is known that $\cN$ is (anti)degradable if and only if its Choi state $J_{\cN}$ is (anti)degradable. The quantum capacity of an anti-degradable channel is zero and the coherent information of a degradable channel is additive which leads to $Q(\cN) = Q^{(1)}(\cN)$. Concerning the (anti)degradability of a channel, the authors in \cite{Sutter2014} called a channel \textit{$\varepsilon$-degradable channel} if there is a CPTP map $\cD_{B\rightarrow E}$ such that $|| \cN^c - \cD\circ \cN||_{\diamond} \leq 2\varepsilon$. A channel is called \textit{$\varepsilon$-anti-degradable channel} if there is a CPTP map $\cA_{E\rightarrow B}$ such that $|| \cN - \cA\circ \cN^c||_{\diamond} \leq 2\varepsilon$. Based on these, one has continuity bounds of the quantum capacity as follows.
\begin{theorem}[\!\!\cite{Sutter2014}]\label{thm:conti_sutter}
    Given a quantum channel $\cN_{A \to B}$, if it is $\eps$-degradable, then it satisfies $Q(\cN) \leq Q^{(1)}(\cN) + \eps \log(d_E - 1) + h(\eps) + 2\eps\log d_E + g(\eps)$. If $\cN_{A\to B}$ is $\eps$-anti-degradable, it satisfies $Q(\cN) \leq \eps\log(|B|-1) + 2\eps\log|B| + h(\eps) + g(\eps)$.
\end{theorem}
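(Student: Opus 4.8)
Both inequalities are continuity bounds that interpolate off the two exactly solvable extremes: for a degradable channel $\cN$ one has $Q(\cN)=Q^{(1)}(\cN)$ because the channel coherent information is additive, and for an anti-degradable channel $Q(\cN)=0$. The plan is to re-run the argument that establishes each extreme, but with the exact (anti)degrading identity replaced by its $\eps$-approximate version, and to pay for the mismatch with entropy continuity estimates. The two families of correction terms correspond to the two tools one needs: the Fannes--Audenaert inequality $|S(\rho)-S(\sigma)|\le \tfrac12\|\rho-\sigma\|_1\log(d-1)+h(\tfrac12\|\rho-\sigma\|_1)$, which accounts for the $\eps\log(d_E-1)+h(\eps)$ (resp.\ $\eps\log(|B|-1)+h(\eps)$) terms, and Winter's continuity bound for the conditional entropy, $|S(X|Y)_\rho-S(X|Y)_\sigma|\le 2\eps\log|X|+g(\eps)$ with $\eps=\tfrac12\|\rho-\sigma\|_1$, which accounts for the $2\eps\log d_E+g(\eps)$ (resp.\ $2\eps\log|B|+g(\eps)$) terms. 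The input to both tools is that $\|\cN^c-\cD\circ\cN\|_{\diamond}\le 2\eps$ (resp.\ $\|\cN-\cA\circ\cN^c\|_{\diamond}\le 2\eps$) forces the relevant output states of any single channel use, even when that use is part of a larger entangled input, to be trace-norm close.

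\emph{The $\eps$-degradable case.} Fix $n$ and a pure state $\psi_{RA^n}$ optimal for $Q^{(1)}(\cN^{\otimes n})$; let $\omega_{RB^n}$ and $\omega_{RE^n}$ be the outputs of $\cN^{\otimes n}$ and of its complementary channel, so $Q^{(1)}(\cN^{\otimes n})=I_c(R\rangle B^n)_\omega=S(B^n)_\omega-S(E^n)_\omega$. I would compare $\omega_{E^n}$ with $\widetilde\omega_{E^n}:=\cD^{\otimes n}(\omega_{B^n})$, the state obtained by degrading copywise. For an exactly degradable channel these coincide, and the chain rule combined with the data-processing inequality $I(\widetilde E_i:\widetilde E_j)\le I(B_i:B_j)$ (valid because $\cD$ acts locally) yields the standard additivity computation $S(B^n)_\omega-S(\widetilde E^n)\le\sum_i[S(B_i)_\omega-S(\widetilde E_i)]\le nQ^{(1)}(\cN)$. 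In the approximate case the extra cost enters exactly when $S(\widetilde E_i)$ is traded for $S(E_i)$ (and in the chain-rule bookkeeping of the conditional entropies), where the trace distance is $2\eps$ and the dimension is that of a single environment; this is where Fannes--Audenaert and Winter's bound are applied. Dividing by $n$ and letting $n\to\infty$ gives the first inequality.

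\emph{The $\eps$-anti-degradable case.} The exact statement $Q(\cN)=0$ comes from a data-processing argument: if $\cN=\cA\circ\cN^c$ then for any pure $\psi_{RA^n}$ the channel output $\omega_{RB^n}$ equals $\cA^{\otimes n}$ applied to the complementary output $\omega_{RE^n}$, whence $I_c(R\rangle B^n)_\omega\le I_c(R\rangle E^n)_\omega$; but on the global pure state $I_c(R\rangle E^n)_\omega=-I_c(R\rangle B^n)_\omega$, forcing $I_c(R\rangle B^n)_\omega\le 0$. For $\eps$-anti-degradability I would repeat this chain with $\omega_{RB^n}$ replaced by its $2\eps$-per-copy-close approximation $\cA^{\otimes n}(\omega_{RE^n})$, absorbing the replacement cost into the same continuity terms, so that $I_c(R\rangle B^n)_\omega$ can no longer be pushed below $0$ but only below the stated sum of correction terms.

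\emph{Main obstacle.} The genuine difficulty is single-letterization. A careless replacement of $\omega_{E^n}$ (or of $\omega_{RB^n}$) performed on the full $n$-use state produces a trace distance of order $n\eps$ and a dimension $d_E^n$, i.e.\ a correction of order $n^2\eps\log d_E$, which does not vanish after dividing by $n$. One must instead introduce the approximate (anti)degrading map one copy at a time, via a hybrid/telescoping argument (equivalently, show that a suitable single-letter functional built from $\cD$, resp.\ $\cA$, is subadditive under tensor products and sandwiches $Q^{(1)}$), so that every continuity estimate is applied to a single environment with trace distance $2\eps$. Getting the chain-rule/strong-subadditivity bookkeeping to interlock cleanly with the continuity estimates --- and checking that the resulting constants are exactly $\eps\log(d_E-1)+h(\eps)+2\eps\log d_E+g(\eps)$ (resp.\ with $|B|$) --- is the technical heart; the remaining steps (diamond-to-trace-norm conversion, the two base cases, and inserting numbers into Fannes--Audenaert and Winter's bound) are routine.
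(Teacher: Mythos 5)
This theorem is imported verbatim from \cite{Sutter2014}; the paper gives no proof of it, so there is nothing internal to compare against. Your sketch follows essentially the same route as that reference --- the two exactly solvable extremes, Fannes--Audenaert plus the Alicki--Fannes--Winter conditional-entropy bound matched term by term to the correction terms, and single-letterization through a subadditive intermediate quantity (their $U_{\cD}(\cN)$, built from the Stinespring dilation of the approximate degrading map). Note only that what you have written is a strategy outline rather than a proof: the telescoping/subadditivity step you defer to the ``main obstacle'' paragraph is where essentially all of the work lies, and the stated constants are not established until that single-letter functional is actually constructed and its subadditivity verified.
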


With a similar spirit of the reverse max-relative entropy of unextendible entanglement in Eq.~\eqref{Eq:sqz_mradg_def}, we define the \textit{reverse max-relative entropy of anti-degradability of the channel} as
\begin{equation}\label{Eq:sqz_mrela_chan_def}
    \widetilde{\cR}_{\max,\adg}(\cN_{A\to B}) := \min_{\cN'_{A\to B}\in \cC_{\adg}}D_{\max}(\cN'_{A\to B}||\cN_{A\to B}),
\end{equation}
where $\cC_{\adg}$ is the set of all anti-degradable channels and the max-relative entropy of $\cN'_{A\to B}$ with respect to $\cN_{A\to B}$ is defined by
\begin{equation*}
    D_{\max}(\cN'_{A\to B}||\cN_{A\to B}):=  \inf \{\lambda \in \mathbb{R}: J_{AB}^{\cN'} \leq 2^{\lambda} J_{AB}^{\cN}\}.
\end{equation*}
If there is no such a channel $\cN'_{A\to B}\in \cC_{\adg}$ that satisfies $J_{AB}^{\cN'} \leq 2^{\lambda} J_{AB}^{\cN}$, $\widetilde{\cR}_{\max,\adg}(\cN_{A\to B})$ is set to be 0. Similar to the state case, $\widetilde{\cR}_{\max,\adg}(\cN_{A\to B})$ has a geometric implication analogous to the distance between $\cN_{A\to B}$ to the set of all anti-degradable channels. We can introduce the \textit{$\adg$-squeezed channel} of $\cN_{A\to B}$ as follows.
\begin{definition}\label{def:sqz_choi_state}
For a quantum channel $\cN_{A\rightarrow B}$ and the anti-degradable channel set $\cC_{\adg}$, if $\widetilde{\cR}_{\max,\adg}(\cN)$ is non-zero, the $\adg$-squeezed channel of $\cN_{A\rightarrow B}$ is defined by
\begin{equation}
    \cS_{A\to B} = \frac{\cN_{A\to B} - 2^{-\widetilde{\cR}_{\max,\adg}(\cN)} \cdot \cN^{\prime}_{A\to B}}{1-2^{-\widetilde{\cR}_{\max,\adg}(\cN)}}
\end{equation}
where $\cN^{\prime}_{A\to B}$ is the closest anti-degradable channel to $\cN_{A\to B}$ in terms of the max-relative entropy, i.e., the optimal solution in Eq.~\eqref{Eq:sqz_mrela_chan_def}. If $\widetilde{\cR}_{\max,\adg}(\cN)$ is zero, the $\adg$-squeezed channel of $\cN_{A\rightarrow B}$ is itself.
\end{definition}

Notably, $\widetilde{\cR}_{\max,\adg}(\cN)$ can be efficiently computed via SDP shown in Appendix~\ref{appendix:sqzbound_dual_sdp}. The conceptual idea we used here is similar to that for the state case in Eq.~\eqref{Eq:sqz_mrela_def} and Definition~\ref{def:sqz_state}, which is to squeeze or isolate out as much part of anti-degradable channel as possible via a convex decomposition of the original channel. The insight here is that one can ignore the contribution from the anti-degradable part for the quantum capacity, and the quantum capacity admits convexity on the decomposition into degradable and anti-degradable parts. In this way, the following Theorem~\ref{thm:general_chan_bound} gives an upper bound $Q_{\rm sqz}(\cN)$ for the quantum capacity of $\cN$.
\begin{theorem}\label{thm:general_chan_bound}
Given a quantum channel $\cN_{A \to B}$, if it has an ADG-squeezed channel $\cS_{A\to B}$, we denote
$\widehat{\cS}_{A\to BB'}$ as an extended channel of $\cS_{A\to B}$ such that $\tr_{B'}[\widehat{\cS}_{A\to BB'}(\rho_{A})] = \cS_{A\to B}(\rho_{A}), \forall \rho_A\in \cD(\cH_A)$. Then it satisfies
\begin{equation}
\begin{aligned}
    Q(\cN) \leq & Q_{\rm sqz}(\cN) := [1-2^{-\widetilde{\cR}_{\max,\adg}(\cN)}] \\
    &\cdot\min \left\{ Q^{(1)}(\widehat{\cS}) | \; \widehat{\cS}_{A\to BB'} \mbox{\rm \; is degradable}\right\},
\end{aligned}
\end{equation}
where the minimization is over all possible extended channels of $S_{A\to B}$. If there is no such a degradable $\widehat{\cS}_{A\to BB'}$ exists, the value of this bound is set to be infinity.
\end{theorem}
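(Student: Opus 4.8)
The plan is to run the argument of Theorem~\ref{thm:anti_sqz_bound} one level up, at the level of channels rather than states, using a \emph{degradable extension} of the squeezed channel in place of a pure-state decomposition of the squeezed state. Write $\lambda:=\widetilde{\cR}_{\max,\adg}(\cN)$ and let $\cN'_{A\to B}\in\cC_{\adg}$ be an optimal anti-degradable channel in Eq.~\eqref{Eq:sqz_mrela_chan_def}, so $J^{\cN'}_{AB}\le 2^{\lambda}J^{\cN}_{AB}$. Then $J^{\cS}_{AB}=(J^{\cN}_{AB}-2^{-\lambda}J^{\cN'}_{AB})/(1-2^{-\lambda})\ge 0$ and $\tr_B J^{\cS}_{AB}=(I_A-2^{-\lambda}I_A)/(1-2^{-\lambda})=I_A$, so $\cS_{A\to B}$ is a genuine channel and
\begin{equation*}
\cN_{A\to B}=(1-2^{-\lambda})\,\cS_{A\to B}+2^{-\lambda}\,\cN'_{A\to B}
\end{equation*}
is a valid convex decomposition of $\cN$ into its $\adg$-squeezed channel and an anti-degradable channel (the degenerate cases $\lambda=0$ and the absence of a degradable extension are immediate from the conventions in Definition~\ref{def:sqz_choi_state} and in the statement).

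Next, fix an arbitrary degradable extension $\widehat{\cS}_{A\to BB'}$ of $\cS_{A\to B}$ and set $\widehat{\cN'}_{A\to BB'}:=\cN'_{A\to B}(\cdot)\otimes\proj{0}_{B'}$. Adjoining a fixed pure ancilla preserves anti-degradability — if $\cN'=\cA\circ(\cN')^{c}$ then $\widehat{\cN'}=\bigl(\cA(\cdot)\otimes\proj{0}_{B'}\bigr)\circ(\cN')^{c}$ — so $\widehat{\cN'}$ is anti-degradable with $\tr_{B'}\widehat{\cN'}=\cN'$. Hence
\begin{equation*}
\widehat{\cN}_{A\to BB'}:=(1-2^{-\lambda})\,\widehat{\cS}_{A\to BB'}+2^{-\lambda}\,\widehat{\cN'}_{A\to BB'}
\end{equation*}
is a channel satisfying $\tr_{B'}\circ\widehat{\cN}=\cN$; since composing $\widehat{\cN}$ with the partial trace over $B'$ cannot increase its quantum capacity, $Q(\cN)=Q(\tr_{B'}\circ\widehat{\cN})\le Q(\widehat{\cN})$. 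Now $\widehat{\cN}$ is a convex mixture of the degradable channel $\widehat{\cS}$ and the anti-degradable channel $\widehat{\cN'}$, so invoking convexity of $Q$ on degradable/anti-degradable decompositions of a channel (the channel analogue of Eq.~\eqref{Eq:deg_adg_conv}) together with $Q(\widehat{\cN'})=0$ and $Q(\widehat{\cS})=Q^{(1)}(\widehat{\cS})$ (degradability) yields $Q(\cN)\le Q(\widehat{\cN})\le(1-2^{-\lambda})\,Q^{(1)}(\widehat{\cS})$. Minimizing over all degradable extensions $\widehat{\cS}$ of $\cS$ gives $Q(\cN)\le Q_{\rm sqz}(\cN)$.

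The step carrying the real content is the convexity of $Q$ under a degradable/anti-degradable decomposition of a \emph{channel}: unlike the state case (Eq.~\eqref{Eq:deg_adg_conv}) this does not follow from the Choi isomorphism, which would only control $E_{D,\to}(J^{\cN})$ rather than $Q(\cN)$, so it must be argued directly. I would obtain it by flagging: pass to the channel that appends the branch label to the output, for which $Q(\cN)\le Q(\text{flagged channel})$ since erasing the flag recovers $\cN$; the single-shot coherent information of the $n$-fold flagged channel then splits over branches, the anti-degradable branches contribute nothing (a no-cloning argument, equivalently tensoring with an anti-degradable channel does not raise $Q$), and the degradable branches contribute at most $Q^{(1)}(\widehat{\cS})$ per use by additivity of the coherent information of degradable channels. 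The remaining ingredients — positivity and trace-preservation of $\cS$, anti-degradability of $\widehat{\cN'}$, and the data-processing inequality for discarding $B'$ — are routine.
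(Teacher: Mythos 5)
Your proof is correct and follows essentially the same route as the paper: decompose $\cN$ into its $\adg$-squeezed part and an anti-degradable part, lift to an extended channel $\widehat{\cN}_{A\to BB'}$ whose two branches are the degradable $\widehat{\cS}$ and the anti-degradable $\cN'(\cdot)\otimes\proj{0}_{B'}$, then combine data processing under $\tr_{B'}$ with convexity of $Q$ on degradable/anti-degradable decompositions. The only difference is that you explicitly identify that convexity step as the nontrivial ingredient and sketch its flagged-channel proof, whereas the paper simply cites Wolf and P\'erez-Garc\'ia for it; your sketch matches how that reference establishes the claim.
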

\begin{proof}
By the definition of the $\adg$-squeezed channel of $\cN_{A\to B}$, we have
\begin{equation*}
    \cN_{A\to B} = [1-2^{-\widetilde{\cR}_{\max,\adg}(\cN)}] \cdot \cS_{A\to B} + 2^{-\widetilde{\cR}_{\max,\adg}(\cN)}\cdot \cN'_{A\to B}
\end{equation*}
where $\cN'_{A\to B}$ is anti-degradable. We write an extended channel of $\cN'_{A\to B}$ as $\widehat{\cN'}_{A\to BB'}(\rho_A) = \cN_{A\to B}(\rho_A) \ox \ketbra{0}{0}_{B'}$, which is obviously anti-degradable. Then we can construct a quantum channel $\widehat{\cN}_{A\to BB'}$ as $
    \widehat{\cN}_{A\to BB'} = [1-2^{-\widetilde{\cR}_{\max,\adg}(\cN)}] \cdot \widehat{\cS}_{A\to BB'} + 2^{-\widetilde{\cR}_{\max,\adg}(\cN)}\cdot \widehat{\cN'}_{A\to BB'},$ such that $\tr_{B'}[\widehat{\cS}_{A\to BB'}(\rho_{A})] = \cS_{A\to B}(\rho_{A})$ for any state $\rho_A$ and $\widehat{\cS}_{A\to BB'}$ is degradable. This means after discarding the partial environment $B'$, the receiver can obtain the original quantum information sent through $\cN_{A\to B}$. In this case, $\widehat{\cN}$ can certainly convey more quantum information than the $\cN$, i.e., $Q(\cN)\leq Q(\widehat{\cN})$. Note that the quantum capacity admits convexity on the decomposition into degradable parts and anti-degradable parts~\cite{Wolf2007}. We conclude that
\begin{equation}
\begin{aligned}
    Q(\cN)\leq Q(\widehat{\cN}) &\leq [1-2^{-\widetilde{\cR}_{\max,\adg}(\cN)}] \cdot Q(\widehat{\cS})\\
    &\quad + 2^{-\widetilde{\cR}_{\max,\adg}(\cN)}\cdot Q(\widehat{\cN'})\\
    & = [1-2^{-\widetilde{\cR}_{\max,\adg}(\cN)}] \cdot Q^{(1)}(\widehat{\cS}),
\end{aligned}
\end{equation}
where the equality is followed by the quantum capacity is additive on degradable channels and is zero for anti-degradable channels. Considering the freedom of the choice of $\widehat{\cS}_{A\to BB'}$, we obtain
\begin{equation}
\begin{aligned}
    Q_{\rm sqz}(\cN) &:= [1-2^{-\widetilde{\cR}_{\max,\adg}(\cN)}] \\
    &\cdot \min \left\{ Q^{(1)}(\widehat{\cS}) |\; \widehat{\cS}_{A\to BB'} \mbox{\rm \; is degradable}\right\}
\end{aligned}
\end{equation}
as an upper bound on $Q(\cN)$.
\end{proof}
\begin{remark}
Theorem~\ref{thm:general_chan_bound} can be seen as a channel version of Theorem~\ref{thm:anti_sqz_bound}. However, in order to utilize the convexity of the quantum capacity after the squeezing process, it is challenging to decompose the ADG-squeezed channel into the sum of degradable ones. An alternative approach here is to use the idea of the extension channel. For the qubit channels specifically, this bound is efficiently computable and effective, as shown in subsection~\ref{sec:qc_pauli_chan}.
\end{remark}

\subsection{Quantum capacity of qubit channels}\label{sec:qc_pauli_chan}
For a quantum channel with dimension two in both input and output systems, we prove that the ADG-squeezed channel is always degradable. Thus, we give an efficiently computable upper bound on the quantum capacity using the idea of the reverse max-relative entropy of anti-degradability.
\begin{proposition}\label{prop:qubit_chan_sqzbound}
For any qubit channel $\cN_{A\to B}$, it is either anti-degradable or satisfies
\begin{equation}
\begin{aligned}
    Q(\cN) \leq& [1-2^{-\widetilde{\cR}_{\max,\adg}(\cN)}]\\
    &\cdot \max_{p \in[0,1]} I_{\mathrm{c}}\Big(p\ketbra{0}{0}+(1-p)\ketbra{1}{1}, \cS_{A\to B}\Big),
\end{aligned}
\end{equation}
where $I_{\mathrm{c}}(\rho, \mathcal{N}) \equiv H(\mathcal{N}(\rho))-H\left(\mathcal{N}^c(\rho)\right)$ and $\cS_{A\to B}$ is the ADG-squeezed channel of $\cN_{A\to B}$.
\end{proposition}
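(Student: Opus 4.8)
The plan is to invoke Theorem~\ref{thm:general_chan_bound} and reduce the right-hand side to the explicit single-parameter formula by showing that, for qubit channels, the ADG-squeezed channel $\cS_{A\to B}$ is itself degradable, so no nontrivial extension $\widehat{\cS}_{A\to BB'}$ is needed and one may simply take $\widehat{\cS} = \cS$. First I would dispose of the anti-degradable case: if $\cN_{A\to B}$ is anti-degradable there is nothing to prove. Otherwise $\widetilde{\cR}_{\max,\adg}(\cN)$ is nonzero and the $\adg$-squeezed channel $\cS_{A\to B}$ is well defined via Definition~\ref{def:sqz_choi_state}. The heart of the argument is the claim that any qubit-to-qubit channel that arises as such a squeezed channel is degradable. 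Granting this, Theorem~\ref{thm:general_chan_bound} applied with the trivial extension $B' $ empty gives $Q(\cN) \leq [1-2^{-\widetilde{\cR}_{\max,\adg}(\cN)}]\cdot Q^{(1)}(\cS)$, and it remains only to evaluate $Q^{(1)}(\cS) = \max_{\ket{\phi}_{A'A}} I_c(A'\rangle B)_{(\cI\ox\cS)(\phi)}$ and identify it with $\max_{p\in[0,1]} I_{\mathrm c}(p\proj{0}+(1-p)\proj{1},\cS)$.

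For the degradability claim I would use the structure theory of qubit channels: up to pre- and post-composition with unitaries every qubit channel is equivalent to one in a canonical (Pauli-diagonal / Fujiwara--Algoet) form, and the set of degradable qubit channels has been completely characterized (Wolf--Pérez-García; Cubitt--Ruskai--Smith). The cleanest route is to argue at the level of Choi operators: $J^{\cS}_{AB}$ is a positive multiple-normalized difference $J^{\cN}_{AB} - 2^{-\widetilde{\cR}_{\max,\adg}(\cN)} J^{\cN'}_{AB}$ where $\cN'$ is the optimal anti-degradable channel, and by the defining inequality $J^{\cN'}_{AB}\leq 2^{\widetilde{\cR}_{\max,\adg}(\cN)} J^{\cN}_{AB}$ this difference is PSD with the right marginal, hence a genuine channel. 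One then checks that in dimension two this forces $\cS$ to lie in the degradable region — intuitively, squeezing out the ``most anti-degradable'' component pushes the leftover channel to the degradable side of the degradable/anti-degradable dichotomy that (essentially) partitions qubit channels. I expect this to be the main obstacle: making the ``pushes to the degradable side'' intuition rigorous requires either a careful case analysis over the canonical parameters of qubit channels, or an appeal to a convexity/extremality property of the squeezing construction together with the known geometry of the degradable qubit region.

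Finally, for the evaluation of $Q^{(1)}(\cS)$ I would use that $\cS$ is degradable, so $Q^{(1)}(\cS)$ is given by a concave optimization of the coherent information $I_{\mathrm c}(\rho,\cS) = H(\cS(\rho)) - H(\cS^c(\rho))$ over input states $\rho_A$ (the auxiliary system $A'$ can be taken trivial since for degradable channels a pure-state input on $A$ suffices and mixed inputs only help via concavity). It then remains to reduce the optimization over all qubit density operators $\rho$ to the one-parameter family $\{p\proj{0}+(1-p)\proj{1}\}$. This reduction should follow by a symmetry argument: after bringing $\cS$ to its canonical form the optimal input is diagonal in a fixed (computational) basis, so the maximization collapses to a maximization over the eigenvalue $p\in[0,1]$, yielding exactly the stated bound. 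I would relegate the detailed verification of both the canonical-form reduction and the degradability case analysis to an appendix, citing the qubit-channel classification results, and keep the main-text proof at the level of this outline.
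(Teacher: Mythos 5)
Your overall route is the same as the paper's: squeeze out the optimal anti-degradable part, argue that the leftover channel $\cS_{A\to B}$ is degradable, use convexity of $Q$ on degradable/anti-degradable decompositions, and finally reduce $Q^{(1)}(\cS)$ to an optimization over diagonal inputs. However, the pivotal step --- degradability of the ADG-squeezed channel --- is exactly the one you leave open: you state it as ``the main obstacle'' and gesture at a case analysis over canonical forms or an unspecified convexity/extremality property, but you do not supply the argument. As written, this is a genuine gap, and your guiding intuition is also slightly off: there is no ``degradable/anti-degradable dichotomy that essentially partitions qubit channels'' (generic qubit channels are neither), so one cannot conclude that squeezing ``pushes $\cS$ to the degradable side'' of anything.

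The paper closes this gap with a short contradiction argument that you nearly have in hand. The relevant fact from Cubitt--Ruskai--Smith and Sutter et al.\ is about \emph{extreme points}: every extreme point of the set of qubit channels is degradable or anti-degradable, whence every qubit channel admits a convex decomposition $c\,\cS_0+(1-c)\,\cS_1$ with $\cS_0$ degradable and $\cS_1$ anti-degradable. If $\cS$ were not degradable, such a decomposition would have $1-c>0$, and substituting it back into $\cN=[1-2^{-\widetilde{\cR}_{\max,\adg}(\cN)}]\cS+2^{-\widetilde{\cR}_{\max,\adg}(\cN)}\cN'$ would exhibit an anti-degradable component of $\cN$ with total weight strictly larger than $2^{-\widetilde{\cR}_{\max,\adg}(\cN)}$, contradicting the optimality defining $\widetilde{\cR}_{\max,\adg}(\cN)$. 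No canonical-form case analysis is needed. Two smaller points: once $\cS$ is degradable you do not need Theorem~\ref{thm:general_chan_bound} and its extension machinery at all --- the convexity of $Q$ on degradable/anti-degradable decompositions \cite{Wolf2007} applied directly to the two-term decomposition of $\cN$ suffices; and the final restriction to inputs diagonal in the computational basis is a cited result of \cite{Wolf2007} for degradable qubit channels, so be careful that your ``bring $\cS$ to canonical form first'' variant still yields the basis appearing in the statement.
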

\begin{proof}
By the definition of the $\adg$-squeezed channel of $\cN_{A\to B}$, we have
\begin{equation*}
    \cN_{A\to B} = [1-2^{-\widetilde{\cR}_{\max,\adg}(\cN)}] \cdot \cS_{A\to B} + 2^{-\widetilde{\cR}_{\max,\adg}(\cN)}\cdot \cN'_{A\to B}
\end{equation*}
where $\cN'_{A\to B}$ is anti-degradable and $\cS_{A\to B}$ is not anti-degradable. If $\cS_{A\to B}$ is also not degradable, we can further decompose $\cS_{A\to B}$ into $\cS = c \cS_0 +(1-c)S_1$ such that $\cS_0$ is degradable and $\cS_1$ is anti-degradable since the extreme points of the set of all qubit channels have been shown to be degradable or anti-degradable channels~\cite{Cubitt_2008,Sutter2014}. This conflicts with the definition of $2^{-\widetilde{\cR}_{\max,\adg}(\cN)}$, which implies $\cS_{A\to B}$ is degradable. Thus,
\begin{equation*}
\begin{aligned}
    Q(\cN) &\leq [1-2^{-\widetilde{\cR}_{\max,\adg}(\cN)}] \cdot Q(\cS) + 2^{-\widetilde{\cR}_{\max,\adg}(\cN)} \cdot Q(\cN')\\
    &= [1-2^{-\widetilde{\cR}_{\max,\adg}(\cN)}] \cdot Q^{(1)}(\cS)\\
    &=[1-2^{-\widetilde{\cR}_{\max,\adg}(\cN)}]\\
    &\quad \cdot \max_{p \in[0,1]} I_{\mathrm{c}}\Big(p\ketbra{0}{0}+(1-p)\ketbra{1}{1}, \cS_{A\to B}\Big).
\end{aligned}
\end{equation*}
Note that the last equality is because $\cS_{A\to B}$ is degradable, and diagonal input states outperform non-diagonal states during the optimization of the channel coherent information~\cite{Wolf2007}.
\end{proof}

\begin{figure*}[t]
    \centering
    \includegraphics[width=\linewidth]{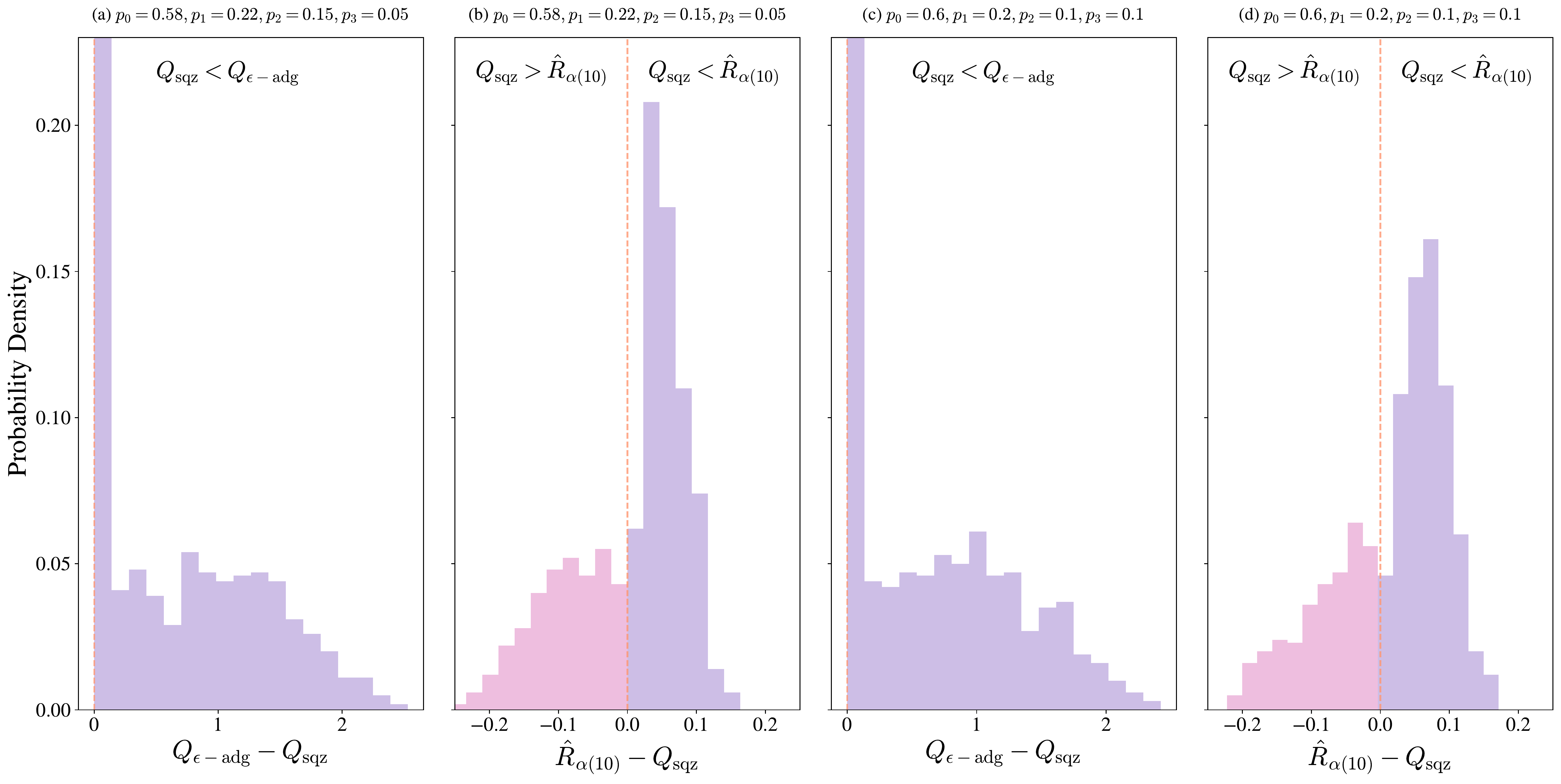}
    \caption{Upper bounds on the quantum capacity of the random mixed unitary channels. In Panel (a) and (b), we compare our bound $Q_{\rm sqz}$ in Proposition~\ref{prop:qubit_chan_sqzbound} with other upper bounds on the quantum capacity of 1000 randomly generated mixed unitary channels with fixed parameters $p_0 = 0.58$, $p_1 = 0.22$, $p_2 = 0.15$, $p_3 = 0.05$. The $x$-axis represents the distance between two bounds, and the $y$-axis represents the distribution of these channels. $Q_{\epsilon-\rm adg}$ is the continuity bound in Theorem~\ref{thm:conti_sutter} regarding the anti-degradability. $\hat{R}_{\alpha(10)}$ is the bound in~\cite{Fang2019a}. Panel (c) and (d) depict the same comparison for another 1000 randomly generated mixed unitary channels with fixed parameters $p_0 = 0.6$, $p_1 = 0.2$, $p_2 = 0.1$, $p_3 = 0.1$. In (a) and (c), we see that our bound always outperforms $Q_{\epsilon-\rm adg}$. In (b) and (d), we can see for many cases, our bound is tighter than $\hat{R}_{\alpha(10)}$.}
    \label{fig:mixU}
\end{figure*}
\paragraph{Mixed unitary channels} 
To compare the performance of our method with some best-known computable bounds, e.g., the continuity bound in Theorem~\ref{thm:conti_sutter} and the bound $\hat{R}_{\alpha}$~\cite{Fang2019a} generalized from the max-Rain information~\cite{Wang2017d}, we consider the mixed unitary channel $\cU_{A\rightarrow B}(\rho) = \sum_{i=0}^k p_i U_i \rho U_i^\dagger$, where $\sum_{i=0}^k p_i= 1$ and $U_i$ are unitary operators on a qubit system. In specific, we choose some fixed set of parameters and sample 1000 channels with randomly generated unitaries according to the Haar measure. We compute the distance between $Q_{\rm sqz}$ and other bounds, then obtain statistics on the distribution of these channels according to the distance value. The distribution results are depicted in Fig.~\ref{fig:mixU} where the purple region corresponds to the cases $Q_{\rm sqz}$ is tighter, and the pink region corresponds to the cases $Q_{\rm sqz}$ is looser. We can see that in Fig.~\ref{fig:mixU}(a) and Fig.~\ref{fig:mixU}(c), $Q_{\rm sqz}$ always outperforms the continuity bound of anti-degradability and in Fig.~\ref{fig:mixU}(b) and Fig.~\ref{fig:mixU}(d), our bound is tighter than $\hat{R}_{\alpha(10)}$ for many cases.

\begin{figure}[t]
    \centering
    \includegraphics[width=0.9\linewidth]{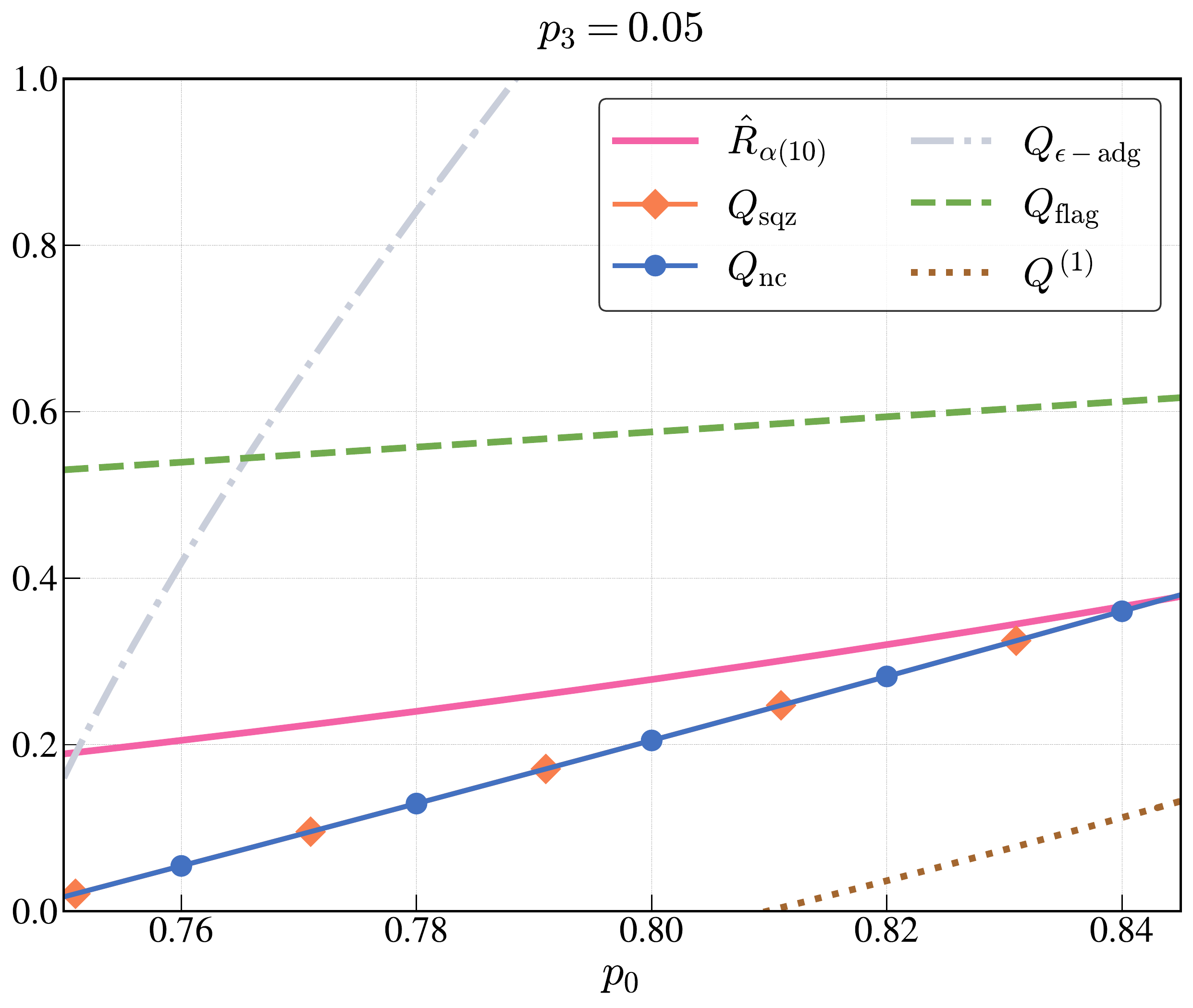}
    \caption{Upper bounds on quantum capacity of the covariant Pauli channels. The parameters are $p_3=0.05, p_0\in [0.75, 0.85]$. The single-shot coherent information $Q^{(1)}$~\cite{Bennett1996c,bausch2021error} provides a lower bound on the quantum capacity. $\hat{R}_{\alpha(10)}$ is the upper bound in~\cite{Fang2019a}. $Q_{\epsilon-\rm adg}$ is the continuity bound of anti-degradability. $Q_{\rm flag}$ is the upper bound by flag extension method in~\cite{poshtvan2022capacities}. $Q_{\rm sqz}$ is our bound in Corollary~\ref{cor:upper_bound_covPauli} and $Q_{\rm nc}$ is the no-cloning bound~\cite{Cerf2000}. It shows that $Q_{\rm sqz}$ can recover the no-cloning bound and outperform other computable bounds.}
    \label{fig:cov_Pauli}
\end{figure}

\paragraph{Pauli channels} A representative qubit channel is the Pauli channel describing bit-flip errors and phase-flip errors with certain probabilities in qubits. A qubit Pauli channel $\Lambda(\cdot)$ is defined as:
\begin{equation}\label{Eq:pauli_channel}
    \Lambda(\rho) = p_0 \rho + p_1 X\rho X + p_2 Y\rho Y +p_3 Z\rho Z,
\end{equation}
where $X,Y,Z$ are the Pauli operators and $\sum_{i=0}^3 p_i = 1$ are probability parameters. Note that for the quantum capacity, we only need to consider the cases where $p_0$ dominates. Since if, for example, bit flip error $X$ happens with probability larger than $p_0$, one can first apply a $X$ flip, mapping that channel back into the case where $p_0>p_1$. After utilizing our method on Pauli channels, the ADG-squeezed parameter is characterized in Proposition~\ref{prop:Pauli_chan_bound},  whose proof can be found in Appendix~\ref{appendix:pauli_chan_bound}. Thus, combined with Proposition~\ref{prop:qubit_chan_sqzbound}, we can recover the no-cloning bound~\cite{Cerf2000} on the quantum capacity of qubit Pauli channels.
\begin{proposition}\label{prop:Pauli_chan_bound}
For a qubit Pauli channel $\Lambda(\cdot)$ with $p_0\geq p_i>0$ $(i=1,2,3)$, it is either anti-degradable or satisfies
\begin{equation*}
    2^{-\widetilde{\cR}_{\max,\adg}(\Lambda)} = (\sqrt{p_1}+\sqrt{p_2})^2 + (\sqrt{p_2}+\sqrt{p_3})^2+(\sqrt{p_1}+\sqrt{p_3})^2
\end{equation*}
with an ADG-squeezed channel as the identity channel.
\end{proposition}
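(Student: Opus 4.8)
The plan is to exploit the Pauli covariance of $\Lambda$ to reduce the channel optimization defining $\widetilde{\cR}_{\max,\adg}(\Lambda)$ to an elementary optimization over Pauli probability vectors, and then to plug in a sharp description of which qubit Pauli channels are anti-degradable.

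First I would record the operational reading of the quantity: $2^{-\widetilde{\cR}_{\max,\adg}(\Lambda)}$ is the largest weight $w\in[0,1]$ for which $\Lambda = w\,\cN' + (1-w)\,\cS$ with $\cN'\in\cC_{\adg}$ and $\cS$ an arbitrary channel, since $J_{AB}^{\cN'}\le 2^{\lambda}J_{AB}^{\Lambda}$ forces $J_{AB}^{\Lambda}-2^{-\lambda}J_{AB}^{\cN'}\ge 0$ with $A$-marginal $(1-2^{-\lambda})I_A$, so the normalized difference is again a valid Choi state. Next, because $\Lambda$ is invariant under conjugation by each Pauli, I would replace an optimal pair $(\cN',\cS)$ by its average over conjugation by $I,X,Y,Z$; since the anti-degradable channels form a convex set stable under pre-/post-composition with unitaries, the averaged $\cN'$ is again an anti-degradable qubit Pauli channel $\Lambda_{\vec q}$ of the form \eqref{Eq:pauli_channel}, the averaged $\cS$ a Pauli channel, and $\Lambda$ is unchanged. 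As $J_{AB}^{\Lambda}$ and $J_{AB}^{\Lambda_{\vec q}}$ are simultaneously diagonal in the Bell basis with eigenvalues proportional to $p_k$ and $q_k$, the operator inequality $w\,J_{AB}^{\Lambda_{\vec q}}\le J_{AB}^{\Lambda}$ reads simply $w q_k\le p_k$ for $k=0,1,2,3$. Hence $2^{-\widetilde{\cR}_{\max,\adg}(\Lambda)}$ equals the maximum of $w$ over probability vectors $\vec q$ with $\Lambda_{\vec q}$ anti-degradable and $w q_k\le p_k$ for all $k$.

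The key input is the criterion that a qubit Pauli channel $\Lambda_{\vec q}$ is anti-degradable if and only if $2\sqrt{\max_k q_k}\le\sum_{k=0}^{3}\sqrt{q_k}$ (a generalized triangle inequality on the numbers $\sqrt{q_k}$), which I would establish by writing down an explicit symmetric extension of its Bell-diagonal Choi state when the inequality holds and exhibiting a monogamy obstruction when it fails; one checks it reproduces the familiar thresholds for the depolarizing channel ($q_0\le 3/4$) and the flip channels (equal weights $1/2$). Granting this, a feasible $\vec q$ satisfies $q_0\ge 1-(1-p_0)/w$, and $2\sqrt{q_0}\le 2\sqrt{\max_k q_k}\le\sqrt{q_0}+w^{-1/2}\sum_{k\ge1}\sqrt{p_k}$ gives $w-(1-p_0)\le w q_0\le\big(\sqrt{p_1}+\sqrt{p_2}+\sqrt{p_3}\big)^2$, so with $1-p_0=p_1+p_2+p_3$ we obtain the converse $w\le(\sqrt{p_1}+\sqrt{p_2})^2+(\sqrt{p_2}+\sqrt{p_3})^2+(\sqrt{p_1}+\sqrt{p_3})^2=:w^{*}$. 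For achievability I would take $w=w^{*}$ with $\vec q=\big((p_0-1+w^{*})/w^{*},\,p_1/w^{*},\,p_2/w^{*},\,p_3/w^{*}\big)$: the hypothesis $p_0\ge p_i>0$ makes $\vec q$ a probability vector with $q_0=\max_k q_k$ (and forces $w^{*}\le 1$, strictly unless $\Lambda$ is already anti-degradable), a short computation gives $\sqrt{q_0}=\sqrt{q_1}+\sqrt{q_2}+\sqrt{q_3}$ so that $\Lambda_{\vec q}$ is boundary-anti-degradable, and $w q_k\le p_k$ holds. The leftover channel then has Choi $(J_{AB}^{\Lambda}-w^{*}J_{AB}^{\Lambda_{\vec q}})/(1-w^{*})$, in which the weights on the three non-identity Bell vectors cancel exactly, so it equals $J_{AB}^{\id}$ and the ADG-squeezed channel of $\Lambda$ is the identity channel. (When $\sqrt{p_0}\le\sqrt{p_1}+\sqrt{p_2}+\sqrt{p_3}$ one has $w^{*}\ge 1$ and $\Lambda$ itself lies in $\cC_{\adg}$, which is the other alternative in the statement.)

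The hard part is the anti-degradability criterion for qubit Pauli channels — producing the symmetric extension in the ``if'' direction and certifying non-extendibility in the ``only if'' direction — together with the bookkeeping that at the optimum the identity component $q_0$ is the largest of the $q_k$ (where $p_0\ge p_i$ enters) and that no decomposition with a non-identity leftover does better; the Pauli-twirl reduction and the resulting one-parameter optimization dispose of the latter, and the SDP dual derived in Appendix~\ref{appendix:sqzbound_dual_sdp} gives an independent certificate of optimality.
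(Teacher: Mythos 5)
Your argument is correct in substance but follows a genuinely different route from the paper. The paper simply guesses explicit feasible points for the primal and dual SDPs of Appendix~\ref{appendix:sqzbound_dual_sdp} (the primal point being the sub-normalized Bell state of weight $1-w^*$, certified anti-degradable via the two-qubit criterion of Lemma~\ref{lem:check_adg}; the dual point being an explicit matrix $M_{AB}$ with entries built from $\sqrt{p_i}$) and verifies feasibility by direct computation. You instead symmetrize: the Pauli twirl reduces the optimization to probability vectors $\vec q$ with $w q_k\le p_k$, and the scalar optimization then produces both the optimal value and the optimizer, explaining \emph{why} the answer is $(\sqrt{p_1}+\sqrt{p_2})^2+(\sqrt{p_2}+\sqrt{p_3})^2+(\sqrt{p_1}+\sqrt{p_3})^2=(1-p_0)+(\sqrt{p_1}+\sqrt{p_2}+\sqrt{p_3})^2$ and why the leftover is exactly the identity channel. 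This is more illuminating and avoids having to divine a dual certificate; the paper's approach is more mechanical but entirely self-contained once Lemma~\ref{lem:check_adg} is granted. Your bookkeeping checks out: $w(1-q_0)\le 1-p_0$ plus the necessity direction of the anti-degradability criterion gives $w\le w^*$, and your explicit $\vec q$ is a probability vector with $q_0=\max_k q_k$, saturates $\sqrt{q_0}=\sqrt{q_1}+\sqrt{q_2}+\sqrt{q_3}$, and leaves a residue supported on the maximally entangled Bell vector only.

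The one step you assert but do not prove is the criterion that a Pauli channel $\Lambda_{\vec q}$ is anti-degradable iff $2\sqrt{\max_k q_k}\le\sum_k\sqrt{q_k}$; constructing the symmetric extension and the monogamy obstruction from scratch is real work. You do not need to do it: it follows from the same two-qubit test the paper uses (Lemma~\ref{lem:check_adg}). For a Bell-diagonal Choi state one has $\tr(\rho_B^2)=\tfrac12$, $\tr(\rho_{AB}^2)=\sum_k q_k^2$, $\det\rho_{AB}=q_0q_1q_2q_3$, and writing $a_k=\sqrt{q_k}$ the inequality $\tfrac12\big(\sum_k a_k^2\big)^2\ge\sum_k a_k^4-4a_0a_1a_2a_3$ rearranges to
\begin{equation*}
(-a_0+a_1+a_2+a_3)(a_0-a_1+a_2+a_3)(a_0+a_1-a_2+a_3)(a_0+a_1+a_2-a_3)\ \ge\ 0,
\end{equation*}
which, since the three factors not subtracting the maximal $a_k$ are automatically nonnegative, is exactly your generalized triangle inequality. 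With that citation in place your proof is complete and fully rigorous.
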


\begin{theorem}[\!\!~\cite{Cerf2000}]\label{thm:upper_bound_Pauli}
For a qubit Pauli channel $\Lambda(\cdot)$, its quantum capacity is either vanishing or satisfies 
\begin{equation*}
    Q(\Lambda) \leq 
        1-(\sqrt{p_1}+\sqrt{p_2})^2 -(\sqrt{p_2}+\sqrt{p_3})^2-(\sqrt{p_1}+\sqrt{p_3})^2.
\end{equation*}
\end{theorem}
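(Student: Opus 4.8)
The plan is to obtain the theorem as an essentially immediate corollary of Proposition~\ref{prop:qubit_chan_sqzbound} together with Proposition~\ref{prop:Pauli_chan_bound}, leaving only one short explicit entropy computation. I would first separate the trivial branch: if the Pauli channel $\Lambda$ is anti-degradable then $Q(\Lambda)=0$ and the ``vanishing'' clause of the theorem holds, so from now on assume $p_0\geq p_i>0$ for $i=1,2,3$ and that $\Lambda$ is not anti-degradable. Since $\Lambda$ is a qubit channel, Proposition~\ref{prop:qubit_chan_sqzbound} applies and yields
\begin{equation*}
Q(\Lambda)\leq \big[1-2^{-\widetilde{\cR}_{\max,\adg}(\Lambda)}\big]\cdot \max_{p\in[0,1]} I_{\mathrm c}\!\big(p\ketbra{0}{0}+(1-p)\ketbra{1}{1},\,\cS_{A\to B}\big),
\end{equation*}
where $\cS_{A\to B}$ is the ADG-squeezed channel of $\Lambda$.

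The next step is to substitute the data supplied by Proposition~\ref{prop:Pauli_chan_bound}: it gives the squeezing factor $2^{-\widetilde{\cR}_{\max,\adg}(\Lambda)}=(\sqrt{p_1}+\sqrt{p_2})^2+(\sqrt{p_2}+\sqrt{p_3})^2+(\sqrt{p_1}+\sqrt{p_3})^2$ and, crucially, identifies $\cS_{A\to B}$ as the qubit identity channel. It then remains to evaluate $\max_{p}I_{\mathrm c}(\rho_p,\mathrm{id})$ with $\rho_p=p\ketbra{0}{0}+(1-p)\ketbra{1}{1}$. Because the identity channel admits a Stinespring dilation with one-dimensional environment, its complementary channel outputs a fixed pure state, so $H(\mathrm{id}^c(\rho_p))=0$, whereas $H(\mathrm{id}(\rho_p))=H(\rho_p)=h(p)$; hence $I_{\mathrm c}(\rho_p,\mathrm{id})=h(p)$, maximized at $p=\tfrac12$ with value $1$. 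Plugging this into the displayed inequality gives exactly $Q(\Lambda)\leq 1-(\sqrt{p_1}+\sqrt{p_2})^2-(\sqrt{p_2}+\sqrt{p_3})^2-(\sqrt{p_1}+\sqrt{p_3})^2$, as claimed; note that whenever this right-hand side is negative the channel already falls into the anti-degradable alternative, so the two clauses of the theorem fit together consistently.

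The one point that requires extra care --- and the one I would flag as the main (if modest) obstacle --- is the degenerate case in which one or more of $p_1,p_2,p_3$ vanishes, which is excluded from the hypothesis of Proposition~\ref{prop:Pauli_chan_bound} (a zero parameter makes the Choi operator rank-deficient, so the SDP behind that proposition must be re-examined). I would close this gap in one of two ways: either approximate $\Lambda$ by Pauli channels $\Lambda_\varepsilon$ with all parameters strictly positive and $p_0$ still dominant, apply the bound just proved to each $\Lambda_\varepsilon$, and pass to the limit using continuity of the quantum capacity in the diamond norm (the right-hand side being manifestly continuous in $(p_1,p_2,p_3)$, and if every $Q(\Lambda_\varepsilon)$ vanishes then so does $Q(\Lambda)$); or, avoiding any appeal to capacity continuity, redo the max-relative-entropy optimization of Proposition~\ref{prop:Pauli_chan_bound} directly on the degenerate channel and verify that the squeezed channel is again the identity. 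Either route completes the argument, since the rest of the proof is purely the entropy evaluation above.
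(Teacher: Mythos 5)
Your proposal is correct and follows exactly the route the paper intends: Theorem~\ref{thm:upper_bound_Pauli} is obtained by combining Proposition~\ref{prop:qubit_chan_sqzbound} with Proposition~\ref{prop:Pauli_chan_bound}, noting that the ADG-squeezed channel is the identity whose maximal coherent information over diagonal inputs equals $1$. Your additional handling of the degenerate case where some $p_i=0$ (via a continuity/limiting argument) is a sensible refinement that the paper glosses over, but it does not change the essence of the argument.
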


\noindent One recent work \cite{poshtvan2022capacities} studies the capacities of a subclass of Pauli channels called the \textit{covariant Pauli channel}, where the parameters are set $p_1 = p_2$ with $p_0+2p_1+p_3=1$, i.e., $\Lambda_{\rm cov}(\rho) = p_0 \rho + p_1 (X\rho X + Y\rho Y)+ p_3 Z\rho Z$. Applying Theorem~\ref{thm:upper_bound_Pauli} on the covariant Pauli channels, we can bound their quantum capacity as follows.
\begin{corollary}\label{cor:upper_bound_covPauli}
For a covariant Pauli channel $\Lambda_{\rm cov}(\cdot)$, it is either anti-degradable with a zero quantum capacity or satisfies $Q(\Lambda_{\rm cov})\leq Q_{\rm sqz}(\Lambda_{\rm cov})$, where 
\begin{equation*}
    Q_{\rm sqz} (\Lambda_{\rm cov}) = 
         3p_0 + p_3 -\sqrt{8(p_3-p_0p_3-p_3^2)}-2.
\end{equation*}
\end{corollary}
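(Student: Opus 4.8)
The plan is to obtain this as a direct specialization of the no-cloning bound in Theorem~\ref{thm:upper_bound_Pauli} (equivalently, of Proposition~\ref{prop:Pauli_chan_bound} combined with Proposition~\ref{prop:qubit_chan_sqzbound}), followed by the algebraic substitution forced by the constraint $p_0+2p_1+p_3=1$. A covariant Pauli channel $\Lambda_{\rm cov}$ is by definition the Pauli channel with $p_1=p_2$, so it inherits the dichotomy of Theorem~\ref{thm:upper_bound_Pauli}: either it is anti-degradable, in which case $Q(\Lambda_{\rm cov})=0$, or
\begin{equation*}
Q(\Lambda_{\rm cov}) \leq 1-(\sqrt{p_1}+\sqrt{p_2})^2-(\sqrt{p_2}+\sqrt{p_3})^2-(\sqrt{p_1}+\sqrt{p_3})^2 .
\end{equation*}
Setting $p_2=p_1$ collapses the first term to $4p_1$ and leaves two equal copies of $(\sqrt{p_1}+\sqrt{p_3})^2=p_1+p_3+2\sqrt{p_1 p_3}$, so the right-hand side equals $1-6p_1-2p_3-4\sqrt{p_1 p_3}$.

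Next I would eliminate $p_1$ through $p_1=(1-p_0-p_3)/2$. Then $6p_1=3-3p_0-3p_3$, and the key manipulation is $4\sqrt{p_1 p_3}=4\sqrt{p_3(1-p_0-p_3)/2}=2\sqrt{2}\,\sqrt{p_3(1-p_0-p_3)}=\sqrt{8(p_3-p_0p_3-p_3^2)}$ — this is where one must check that the factor $1/2$ from the constraint combines with the $4$ to give exactly $\sqrt{8(\cdot)}$ and not some other constant. Substituting back,
\begin{equation*}
1-(3-3p_0-3p_3)-2p_3-\sqrt{8(p_3-p_0p_3-p_3^2)} = 3p_0+p_3-2-\sqrt{8(p_3-p_0p_3-p_3^2)},
\end{equation*}
which is precisely $Q_{\rm sqz}(\Lambda_{\rm cov})$ as stated.

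To justify the name $Q_{\rm sqz}$ and connect it to the framework of Section~\ref{sec:app_on_QC}, I would also note the equivalent derivation: Proposition~\ref{prop:Pauli_chan_bound} gives $2^{-\widetilde{\cR}_{\max,\adg}(\Lambda_{\rm cov})}=6p_1+2p_3+4\sqrt{p_1 p_3}$ with identity ADG-squeezed channel, and since $\id$ is degradable, Proposition~\ref{prop:qubit_chan_sqzbound} yields $Q(\Lambda_{\rm cov})\leq [1-2^{-\widetilde{\cR}_{\max,\adg}(\Lambda_{\rm cov})}]\cdot\max_{p\in[0,1]} I_c(p\ketbra{0}{0}+(1-p)\ketbra{1}{1},\id)$, and $I_c(\rho,\id)=H(\rho)$ so the maximum is $\max_p h(p)=1$; this reproduces $1-2^{-\widetilde{\cR}_{\max,\adg}(\Lambda_{\rm cov})}$ and hence the same closed form. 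Since the statement is an algebraic consequence of results already proved, there is no substantive obstacle: the only points needing care are keeping the two-case structure (`anti-degradable with zero capacity' versus the explicit bound) faithful to the dichotomy in Proposition~\ref{prop:Pauli_chan_bound}, and the square-root simplification above; a useful final sanity check is that the right-hand side is nonnegative exactly in the non-anti-degradable parameter window, consistent with the range plotted in Fig.~\ref{fig:cov_Pauli}.
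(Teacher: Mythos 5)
Your proposal is correct and takes essentially the same route as the paper, which obtains the corollary by specializing the no-cloning bound of Theorem~\ref{thm:upper_bound_Pauli} to $p_1=p_2$ and eliminating $p_1$ via $p_1=(1-p_0-p_3)/2$; your algebra, including the $4\sqrt{p_1p_3}=\sqrt{8(p_3-p_0p_3-p_3^2)}$ step, checks out. The supplementary derivation via Propositions~\ref{prop:qubit_chan_sqzbound} and~\ref{prop:Pauli_chan_bound} with the identity ADG-squeezed channel is likewise consistent with how the paper arrives at Theorem~\ref{thm:upper_bound_Pauli} itself.
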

In Fig.~\ref{fig:cov_Pauli}, we compare our bound with the upper bounds given in~\cite{poshtvan2022capacities} and the continuity bound of anti-degradability in Theorem~\ref{thm:conti_sutter}. It can be seen that our bound in the orange line, coinciding with the no-cloning bound, outperforms previous bounds, and thus can better characterize the quantum capacity of $\Lambda_{\rm cov}(\cdot)$ when it is close to being anti-degradable.

\section{Concluding remarks}
\label{sec:conclusion}

The design and implementation of a quantum internet~\cite{Kimble2008,Cacciapuoti2020,Wehner2018b,Illiano2022} involve a unique set of challenges and require advancements in the realms of entanglement generation, manipulation, and distribution, as well as the development of robust quantum communication protocols. By addressing the challenges associated with entanglement distillation and reliable quantum communication, our work contributes to the advancement of the field of quantum internet.

We have introduced the reverse max-relative entropy of entanglement which is related to the "weight of resource" in general resource theory. From a conceptual and technical side, the concept helps us to quantify how much useless entanglement we can squeeze out from a state or a channel, which is meaningful for the distillable entanglement and the quantum capacity, respectively. Importantly, these quantities can be efficiently computed via SDP, providing upper bounds on the distillable entanglement and quantum capacity that can be easily computed. 

To further investigate entanglement distillation, we have derived various continuity bounds on the one-way distillable entanglement using the anti-degradability of the state. In particular, our bound derived from the reverse max-relative entropy of unextendible entanglement outperforms both continuity bounds and the Rains bound in estimating the one-way distillable entanglement for maximally entangled states under certain relevant noises. These bounds are also the only known computable ones for general states. We further introduced the reverse max-relative entropy of NPT entanglement and established connections to prior results on the two-way distillable entanglement. For the quantum capacity of noisy channels, our bounds based on the reverse max-relative entropy of anti-degradability deliver improved results for random mixed unitary qubit channels. Also, the analytical bound obtained from our method recovers the no-cloning bound on Pauli channels~\cite{Cerf2000}.

These results open a novel way to connect valuable quantum resource measures with quantum communication tasks. Except for the existing applications of the reverse max-relative entropy of resources~\cite{Fang2020,Regula2021,Ducuara2020,Uola2020,Ducuara_2020,regula2022overcoming,regula2022postselected}, we expect that the reverse divergence of resources will find more applications in quantum resource theories in both asymptotic and non-asymptotic regimes~\cite{Tomamichel2015book,Fang2017,Datta2009,Wang2012,Regula2019,WW19}. Our method may also be generalized to quantum entanglement and quantum communication theories in continuous-variable quantum information~\cite{Braunstein2005a}. These advancements hold promise to facilitate the development of practical, scalable, and efficient techniques for generating, preserving, and utilizing entangled states over long distances, thereby advancing the field of quantum internet.

\section{Acknowledgements.} 
\noindent C.Z and C.Z contributed equally to this work.  We would like to thank Bartosz Regula and Ludovico Lami for their helpful comments. 
This work was partially supported by the Start-up Fund (No. G0101000151) from The Hong Kong University of Science and Technology (Guangzhou), the Guangdong Provincial Quantum Science Strategic Initiative (No. GDZX2303007), the Innovation Program for Quantum Science and Technology  (No. 2021ZD0302901), the Quantum Science Center of Guangdong–Hong Kong–Macao Greater Bay Area, and the Education Bureau of Guangzhou Municipality.

\normalsize
\bibliography{main}

\appendices

\vspace{2cm}

\setcounter{subsection}{0}
\setcounter{table}{0}
\setcounter{figure}{0}

\renewcommand{\theequation}{S\arabic{equation}}
\numberwithin{equation}{section}
\renewcommand{\theproposition}{S\arabic{proposition}}
\renewcommand{\thedefinition}{S\arabic{definition}}
\renewcommand{\thefigure}{S\arabic{figure}}
\setcounter{equation}{0}
\setcounter{table}{0}
\setcounter{section}{0}
\setcounter{proposition}{0}
\setcounter{definition}{0}
\setcounter{figure}{0}

\section{Dual SDP for $\cR_{\max,\adg}(\rho_{AB})$ and $\widetilde{\cR}_{\max,\adg}(\cN)$}\label{appendix:sqzbound_dual_sdp}
The primal SDP for calculating $1-2^{-\cR_{\max,\adg}(\rho_{AB})}$ of the state $\rho_{AB}$ can be written as:
\begin{subequations}
\begin{align}\label{sdp:primal_state}
\min_{\omega_{AB}, \tau_{AB}, \tau_{ABE}}&  \tr[\omega_{AB}],\\
 {\rm s.t.} &\; \rho_{AB} =\omega_{AB}+\tau_{AB}, \\
&\; \omega_{AB} \geq 0, \tau_{AB} \geq 0, \tau_{ABE} \geq 0,\\
&\; \tr_{E}[\tau_{ABE}] = \tr_{B}[\tau_{ABE}] = \tau_{AB}, \label{Eq:extendible}
\end{align}
\end{subequations}
where Eq.~\eqref{Eq:extendible} corresponds to the anti-degradable condition of $\tau_{AB}$. The Lagrange function of the primal problem is
\begin{equation*}
\begin{aligned}
    &L(\omega_{AB},\tau_{AB},\tau_{ABE},M, N, K)\\ 
    &= \tr[\omega_{AB}] + \langle M, \rho_{AB}-\omega_{AB}-\tau_{AB} \rangle \\
    &\quad + \langle N, \tr_{E}[\tau_{ABE}]-\tau_{AB}\rangle + \langle K, \tr_{B}[\tau_{ABE}]-\tau_{AB}\rangle\\
    &= \langle M, \rho_{AB}\rangle + \langle I-M, \omega_{AB}\rangle + \langle -M-N-K, \tau_{AB}\rangle \\
    &\quad + \langle N\otimes I_E + P_{BE}(K_{AB} \otimes I_E)P_{BE}^\dagger, \tau_{ABE}\rangle,
\end{aligned}
\end{equation*}
where $M_{AB}, N_{AB}, K_{AB}$ are Lagrange multipliers and $P_{BE}$ is the permutation operator between $B$ and $E$. The corresponding Lagrange dual function is 
\begin{equation*}
    g(M,N,K) = \inf_{\omega_{AB},\tau_{AB}, \tau_{ABE}\geq 0} L(\omega_{AB},\tau_{AB},\tau_{ABE},M, N, K).
\end{equation*}
Since $\omega_{AB} \geq 0, \tau_{AB}\geq 0, \tau_{ABE}\geq 0$, it must hold that $I-M_{AB}\geq 0, -M-N-K\geq 0, N\otimes I + P_{BE}(K \otimes I)P_{BE}^\dagger \geq 0$. Thus the dual SDP is
\begin{equation*}
\begin{aligned}\label{sdp:state_dual}
\max_{M_{AB},N_{AB},K_{AE}}&  \tr[M_{AB}\rho_{AB}],\\
 {\rm s.t.} &\; M_{AB} \leq I_{AB}, \\
&\; M_{AB}+N_{AB}+K_{AB} \leq 0,\\
&\; N_{AB}\otimes I_{E} +P_{BE}(K_{AB} \otimes I_{E})P_{BE}^\dagger \geq 0.
\end{aligned}
\end{equation*}

The primal SDP for calculating $1-2^{-\widetilde{\cR}_{\max,\adg}(\cN)}$ of the channel $\cN_{A\to B}$ is:
\begin{subequations}\label{sdp:chan_primal}
\begin{align}
\min_{\Gamma_{AB}^\cS, \Gamma_{AB}^{\cN'}, \gamma_{ABE}}& \tr[\Gamma_{AB}^\cS],\\
 {\rm s.t.} &\; J_{AB}^\cN = \Gamma_{AB}^\cS + \Gamma_{AB}^{\cN'}, \\
&\; \Gamma_{AB}^\cS \geq 0, \Gamma_{AB}^{\cN'} \geq 0, \gamma_{ABE} \geq 0,\\
&\; \tr_{B}[\Gamma_{AB}^{\cN'}] = \tr[\Gamma_{AB}^{\cN'}]/d_A \cdot I_{A},\\
&\; \tr_{E}[\gamma_{ABE}] = \tr_{B}[\gamma_{ABE}] = \Gamma_{AB}^{\cN'}, \label{Eq:choi_extendible}
\end{align}
\end{subequations}
where Eq.~\eqref{Eq:choi_extendible} corresponds to the anti-degradable condition of the unnormalized Choi state $\Gamma_{AB}^{\cN'}$. The Lagrange function of the primal problem is
\begin{equation*}
\begin{aligned}
    &L(\Gamma_{AB}^\cS, \Gamma_{AB}^{\cN'}, \gamma_{ABE},M, N, K, R)\\ 
    &= \tr[\Gamma_{AB}^\cS] + \langle M, J_{AB}^\cN-\Gamma_{AB}^\cS-\Gamma_{AB}^{\cN'} \rangle \\
    &\quad + \langle N, \tr_{E}[\gamma_{ABE}]-\Gamma_{AB}^{\cN'}\rangle + \langle K, \tr_{B}[\gamma_{ABE}]-\Gamma_{AB}^{\cN'}\rangle\\
    &\quad + \langle R, \tr_{B}[\Gamma_{AB}^\cS]-\tr[\Gamma_{AB}^\cS]/d_A \cdot I_{A}\rangle \\
    &= \langle M, J_{AB}^\cN\rangle + \langle (1-\tr R/d_A)I-M+R\otimes I_B, \Gamma_{AB}^\cS\rangle \\
    &\quad + \langle -M-N-K, \Gamma_{AB}^{\cN'}\rangle\\
    & \quad + \langle N\otimes I_E + P_{BE}(K_{AE} \otimes I_B)P_{BE}^\dagger, \gamma_{ABE}\rangle,
\end{aligned}
\end{equation*}
where $M_{AB}, N_{AB}, K_{AB}$ are Lagrange multipliers and $P_{BE}$ is the swap operator between the system $B$ and $E$. The corresponding Lagrange dual function is 
\begin{equation*}
    g(M,N,K) = \inf_{\Gamma_{AB}^\cS,\Gamma_{AB}^{\cN'}, \gamma_{ABE}\geq 0} L(\Gamma_{AB}^\cS,\Gamma_{AB}^{\cN'},\gamma_{ABE},M, N, K).
\end{equation*}
Since $\Gamma_{AB}^\cS \geq 0, \Gamma_{AB}^{\cN'}\geq 0, \gamma_{ABE}\geq 0$, it must hold that 
\begin{equation}
\begin{aligned}
    (1-\tr R_A/d_A)I_{AB}-M_{AB}+R_A\otimes I_B &\geq 0\\
    -M_{AB}-N_{AB}-K_{AB} &\geq 0\\
    N_{AB}\otimes I_E + P_{BE}(K_{AB} \otimes I_{E})P_{BE}^\dagger &\geq 0.
\end{aligned}
\end{equation}
Thus the dual SDP is
\begin{equation}
\label{equ:dual_sdp_channel}
\begin{aligned}
\max_{M_{AB},N_{AB},K_{AB},R_{A}}&  \tr[M_{AB} J_{AB}^\cN],\\
 {\rm s.t.} &\; M_{AB} \leq (1-\tr R_A/d_A)I_{AB}+R_A\otimes I_B, \\
&\; M_{AB}+N_{AB}+K_{AB} \leq 0,\\
&\; N_{AB}\otimes I_{E} +P_{BE}(K_{AB} \otimes I_E)P_{BE}^\dagger \geq 0.
\end{aligned}
\end{equation}

\section{Two-way distillable entanglement}\label{appendix:two_way_DE_GBELL}
We first start with the definition of the maximally correlated (MC) state. 
\begin{definition}
A bipartite state $\rho_{A B}$ on $\mathbb{C}^d \times \mathbb{C}^d$ is said to be maximally correlated (MC), if there exist bases $\left\{|i\rangle_A\right\}_{i=0}^{d-1}$ and $\left\{|i\rangle_B\right\}_{i=0}^{d-1}$ such that $\rho_{A B}=\sum_{i, j=0}^{d-1} \alpha_{i j}\ketbra{i}{j}_A \otimes \ketbra{i}{j}_B$,
where $\left(\alpha_{i j}\right)$ is a positive semidefinite matrix with trace 1.
\end{definition}
By Lemma~\ref{lem:MC_PPT}, we arrive at the upper bound in Theorem \ref{thm:twoway_bound}, as every pure state is an MC state.
\begin{lemma}[\!\!~\cite{Leditzky2017}]\label{lem:MC_PPT}
    The two-way distillable entanglement is convex on convex combinations of {\rm MC} and {\rm PPT} states.
\end{lemma}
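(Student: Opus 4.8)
The plan is to prove the two-component case $E_{D,\leftrightarrow}(\rho_{AB}) \le s\,E_{D,\leftrightarrow}(\mu_{AB}) + (1-s)\,E_{D,\leftrightarrow}(\pi_{AB})$ for $\rho_{AB} = s\,\mu_{AB} + (1-s)\,\pi_{AB}$ with $\mu_{AB}$ maximally correlated (MC) and $\pi_{AB}$ PPT; the general statement (several MC and several PPT summands) then follows by collapsing all PPT pieces into a single PPT state---PPT is convex and the claimed bound is linear in that piece---and iterating over the MC pieces. Since the Rains bound gives $E_{D,\leftrightarrow}(\pi_{AB}) \le R(\pi_{AB}) = 0$ for every PPT state (it is its own feasible point in the Rains minimization), it suffices to establish $E_{D,\leftrightarrow}(\rho_{AB}) \le s\,E_{D,\leftrightarrow}(\mu_{AB})$.

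First I would attach a classical flag: consider $\widetilde\rho = s\,\mu_{AB}\otimes\ketbra{0}{0}_{X_A}\!\otimes\ketbra{0}{0}_{X_B} + (1-s)\,\pi_{AB}\otimes\ketbra{1}{1}_{X_A}\!\otimes\ketbra{1}{1}_{X_B}$, with $X_A$ held by Alice and $X_B$ by Bob. Discarding $X_AX_B$ is a local operation that returns $\rho_{AB}$, so any two-way distillation protocol for $\rho_{AB}$ can be run on $\widetilde\rho$ after a local discard; hence $E_{D,\leftrightarrow}(\rho_{AB}) \le E_{D,\leftrightarrow}(\widetilde\rho)$.

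Next I would bound $E_{D,\leftrightarrow}(\widetilde\rho) \le R(\widetilde\rho)$ and show the Rains bound splits over the flag. Given feasible Rains operators $\tau^{\mu},\tau^{\pi}$ (so $\tr|(\tau^{\mu})^{T_B}|\le 1$ and similarly for $\pi$), the block operator $\tau = s\,\tau^{\mu}\otimes\ketbra{0}{0}_{X_A}\!\otimes\ketbra{0}{0}_{X_B} + (1-s)\,\tau^{\pi}\otimes\ketbra{1}{1}_{X_A}\!\otimes\ketbra{1}{1}_{X_B}$ is feasible for $\widetilde\rho$, because partial transposition leaves the orthogonal classical labels fixed, so $\tr|\tau^{T_B}| = s\,\tr|(\tau^{\mu})^{T_B}| + (1-s)\,\tr|(\tau^{\pi})^{T_B}| \le 1$; moreover $D(\widetilde\rho\|\tau) = s\,D(\mu_{AB}\|\tau^{\mu}) + (1-s)\,D(\pi_{AB}\|\tau^{\pi})$ by block additivity of the relative entropy. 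Minimizing over $\tau^{\mu},\tau^{\pi}$ gives $R(\widetilde\rho) \le s\,R(\mu_{AB}) + (1-s)\,R(\pi_{AB}) = s\,R(\mu_{AB})$. It then remains to see that the Rains bound is tight on MC states: writing $\mu_{AB} = \sum_{ij}\alpha_{ij}\ketbra{ii}{jj} = V\alpha V^{\dagger}$ with the isometry $V = \sum_i \ketbra{ii}{i}$ gives $S(AB)_{\mu} = S(\alpha)$, while the separable (hence feasible) diagonal operator $\sigma = \sum_i \alpha_{ii}\ketbra{ii}{ii}$ satisfies $D(\mu_{AB}\|\sigma) = S(A)_{\mu} - S(AB)_{\mu} = I_c(A\rangle B)_{\mu}$; combining $R(\mu_{AB}) \le I_c(A\rangle B)_{\mu}$ with the hashing bound $E_{D,\to}(\mu_{AB}) \ge I_c(A\rangle B)_{\mu}$ and $E_{D,\to}\le E_{D,\leftrightarrow}\le R$ forces $R(\mu_{AB}) = E_{D,\leftrightarrow}(\mu_{AB})$. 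Chaining the three inequalities yields $E_{D,\leftrightarrow}(\rho_{AB}) \le R(\widetilde\rho) \le s\,R(\mu_{AB}) = s\,E_{D,\leftrightarrow}(\mu_{AB})$, as required.

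I expect the last step to be the crux: it is precisely where the maximal-correlation structure enters, letting one guess the optimal Rains witness (the state dephased in the correlating bases) and showing its relative entropy collapses to the coherent information, so that the single-letter Rains bound is actually attained. The flag manipulations are routine, but two small points need care---that discarding the flags yields $E_{D,\leftrightarrow}(\rho_{AB}) \le E_{D,\leftrightarrow}(\widetilde\rho)$ and not the reverse, and that the Rains feasibility constraint is a trace-norm bound on the partial transpose (not positivity), which is what makes the block-diagonal ansatz admissible.
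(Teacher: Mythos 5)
The paper does not actually prove this lemma: it is imported verbatim from~\cite{Leditzky2017} and used as a black box to justify Theorem~\ref{thm:twoway_bound}, so there is no in-paper proof to compare against. Your argument is essentially the standard one behind the cited result (classical double flag, splitting of the Rains bound over orthogonal blocks, and tightness of the Rains bound on MC states via the dephased witness plus hashing), and I find it correct. The key computations all check out: discarding the flags is local so $E_{D,\leftrightarrow}(\rho_{AB})\leq E_{D,\leftrightarrow}(\widetilde\rho)$; the block ansatz is Rains-feasible because the trace norm of the partial transpose is additive over the orthogonal $X_AX_B$ blocks; $D(\widetilde\rho\|\tau)$ splits because you matched the classical weights $s,1-s$ in $\tau$; and $D(\mu_{AB}\|\sigma)=S(B)_\mu-S(AB)_\mu$ for the dephased $\sigma$ since $S(A)_\mu=S(B)_\mu$ for MC states.

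Two small points deserve tightening. First, the reduction of the general case to the two-component case by ``iterating over the MC pieces'' does not literally work: after peeling off one MC summand the remainder is neither MC nor PPT, so the two-component bound cannot be reapplied to it. The clean fix is to run your flag argument once with a multi-valued flag, one orthogonal block per MC summand and one for the (collapsed) PPT part; the Rains-bound splitting is identical. Second, the chain $I_c\leq E_{D,\to}\leq E_{D,\leftrightarrow}\leq R\leq I_c$ forcing $R(\mu_{AB})=E_{D,\leftrightarrow}(\mu_{AB})$ tacitly uses that $I_c(A\rangle B)_\mu=H(\{\alpha_{ii}\})-S(\alpha)\geq 0$ for MC states; this is true (dephasing to the diagonal cannot decrease entropy), but it is worth stating, since for a state with negative coherent information the hashing bound would only give $E_{D,\to}\geq 0$ and the sandwich would not close.
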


We then consider the two-way distillable entanglement of the example states $\rho_{AB}$ given in Eq.(3.7) in~\cite{Leditzky2017}. We plot different upper bounds on $E_{D,\leftrightarrow}(\rho_{AB})$ in Fig.~\ref{fig:MC_ppt}. Note that the result presented in~\cite{Leditzky2017} is actually an approximation of $E_{\rm MP}(\cdot)$, as it does not consider the minimization over all possible decompositions. From the plot, we observe that our bound $\Hat{E}_{\rm rev}^{npt}(\rho_{AB})$ is tighter than both $E_{\rm W}(\cdot)$ and the approximation of $E_{\rm MP}(\cdot)$ as a function of $p$.

\begin{figure}[t]
    \centering
    \includegraphics[width=0.9\linewidth]{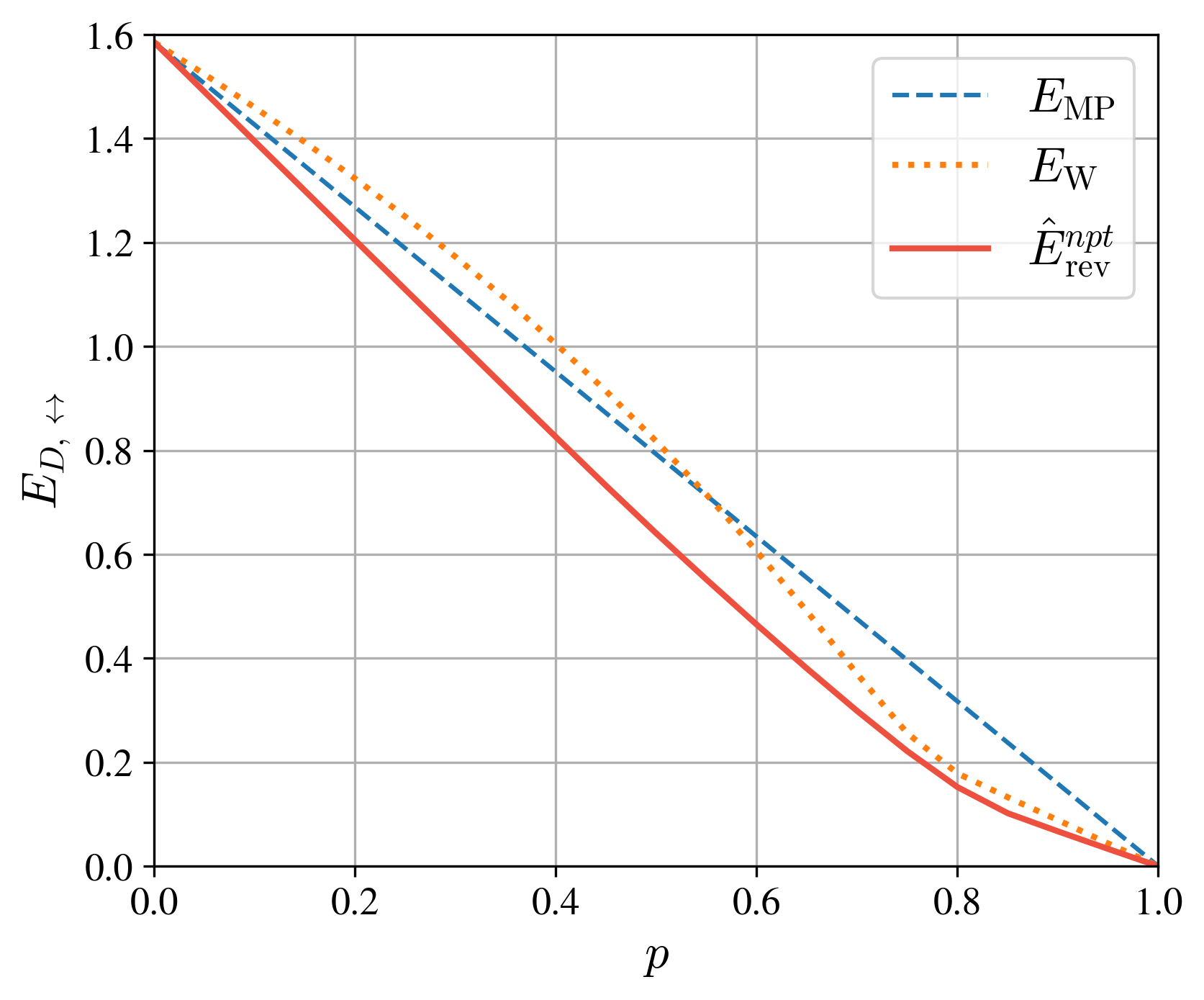}
    \caption{Upper bounds on the two-way distillable entanglement. $\hat{E}_\mathrm{rev}^{npt}$ is our bound derived from the reverse max-relative entropy of NPT entanglement. $E_{\rm W}$ is the SDP bound in~\cite{Wang2016}. $E_{\rm MP}$ is the upper bound in~\cite{Leditzky2017} based on the direct division of $\rho_{AB}$ into the MC states and the PPT states. }
    \label{fig:MC_ppt}
\end{figure}

\section{Proof of Proposition~\ref{thm:anti_set_conti_bound} and Proposition~\ref{thm:anti_map_conti_bound}}\label{appendix:proof_conti_bound}
\begin{lemma}[\!\!\cite{Winter_2016}]\label{lem:cond_entr_continuity}
For any states $\rho_{AB}$ and $\sigma_{AB}$ such that $\frac{1}{2}\|\rho_{AB}-\sigma_{AB}\|_1 \leq \varepsilon \leq 1$, it satisfies
\begin{equation}
    |S(A|B)_{\rho} - S(A|B)_{\sigma}| \leq 2\varepsilon\log(|A|) + (1+\varepsilon)h(\frac{\varepsilon}{1+\varepsilon}).
\end{equation}
\end{lemma}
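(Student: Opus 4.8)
The plan is to run the Alicki--Fannes--Winter argument directly on the conditional entropy. First I would reduce to the sharp case: the right-hand side $f(\varepsilon):=2\varepsilon\log|A|+(1+\varepsilon)h\!\big(\tfrac{\varepsilon}{1+\varepsilon}\big)$ is nondecreasing on $[0,1]$ (the first term is clearly increasing, and $\tfrac{d}{d\varepsilon}\big[(1+\varepsilon)h(\tfrac{\varepsilon}{1+\varepsilon})\big]=h(\tfrac{\varepsilon}{1+\varepsilon})-\tfrac{\log\varepsilon}{1+\varepsilon}\ge 0$ for $\varepsilon\le 1$), so it suffices to prove the inequality with $\varepsilon$ equal to the exact value $\tfrac12\|\rho_{AB}-\sigma_{AB}\|_1$; relabel this $\varepsilon$ and assume $\varepsilon\in(0,1)$, the endpoints being trivial.

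Next I would set up the standard auxiliary state. Write the Jordan decomposition $\rho_{AB}-\sigma_{AB}=\Delta_+-\Delta_-$ with $\Delta_\pm\ge 0$ and $\tr\Delta_+=\tr\Delta_-=\varepsilon$, and put $\rho'_{AB}:=\Delta_+/\varepsilon$, $\sigma'_{AB}:=\Delta_-/\varepsilon$, which are states. Then $\rho_{AB}+\varepsilon\sigma'_{AB}=\sigma_{AB}+\varepsilon\rho'_{AB}=:(1+\varepsilon)\,\omega_{AB}$ for a state $\omega_{AB}$, so that $\omega_{AB}$ admits two decompositions with the same weights $\tfrac{1}{1+\varepsilon},\tfrac{\varepsilon}{1+\varepsilon}$: namely $\omega_{AB}=\tfrac{1}{1+\varepsilon}\rho_{AB}+\tfrac{\varepsilon}{1+\varepsilon}\sigma'_{AB}=\tfrac{1}{1+\varepsilon}\sigma_{AB}+\tfrac{\varepsilon}{1+\varepsilon}\rho'_{AB}$.

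The core step is to sandwich $S(A|B)_\omega$ from both sides using a classical flag. For a decomposition $\omega_{AB}=\sum_x p_x\omega^x_{AB}$, set $\omega_{ABX}=\sum_x p_x\,\omega^x_{AB}\otimes\ketbra{x}{x}_X$; then $S(A|BX)_\omega=\sum_x p_x S(A|B)_{\omega^x}$, concavity of $S(A|B)$ gives $S(A|B)_\omega\ge S(A|BX)_\omega$, and the bounds $S(AB)_\omega\le H(\{p_x\})+\sum_x p_x S(AB)_{\omega^x}$ and $S(B)_\omega\ge\sum_x p_x S(B)_{\omega^x}$ give $S(A|B)_\omega\le S(A|BX)_\omega+H(\{p_x\})$. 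Applying this to the two decompositions of $\omega_{AB}$ above, for which $H(\{p_x\})=h(\tfrac{\varepsilon}{1+\varepsilon})$, and chaining the upper estimate coming from the $(\sigma,\rho')$ decomposition with the lower estimate from the $(\rho,\sigma')$ one, produces
\[
\tfrac{1}{1+\varepsilon}S(A|B)_\rho+\tfrac{\varepsilon}{1+\varepsilon}S(A|B)_{\sigma'}\le \tfrac{1}{1+\varepsilon}S(A|B)_\sigma+\tfrac{\varepsilon}{1+\varepsilon}S(A|B)_{\rho'}+h\!\big(\tfrac{\varepsilon}{1+\varepsilon}\big).
\]
Multiplying by $1+\varepsilon$ and using $-\log|A|\le S(A|B)\le\log|A|$ to bound $S(A|B)_{\rho'}-S(A|B)_{\sigma'}\le 2\log|A|$ gives $S(A|B)_\rho-S(A|B)_\sigma\le 2\varepsilon\log|A|+(1+\varepsilon)h(\tfrac{\varepsilon}{1+\varepsilon})$; swapping $\rho\leftrightarrow\sigma$ (which interchanges $\rho'$ and $\sigma'$ and leaves $\varepsilon$ fixed) gives the reverse inequality, hence the claimed absolute-value bound.

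The main obstacle is obtaining the \emph{tight} constants rather than a crude Fannes-type estimate. The factor $2$ in front of $\varepsilon\log|A|$, and especially the exact correction $(1+\varepsilon)h(\tfrac{\varepsilon}{1+\varepsilon})$ (instead of something weaker like $h(\varepsilon)+\varepsilon\log(\cdot)$), rely on choosing the flag construction with the precise weights $\tfrac{1}{1+\varepsilon},\tfrac{\varepsilon}{1+\varepsilon}$ and on applying the classical-register inequality $S(A|B)\le S(A|BX)+H(X)$ with exactly $H(X)$ and no more; pinning down the directions and tightness of these two conditional-entropy inequalities, together with the preliminary check that $f$ is monotone so one may reduce to the exact trace distance, is where the care lies.
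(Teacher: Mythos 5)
Your proof is correct and is essentially the Alicki--Fannes--Winter argument from the cited reference \cite{Winter_2016}; the paper itself imports Lemma~\ref{lem:cond_entr_continuity} without proof, and your reconstruction (Jordan decomposition into $\rho',\sigma'$, the intermediate state $\omega$ with its two $\bigl(\tfrac{1}{1+\varepsilon},\tfrac{\varepsilon}{1+\varepsilon}\bigr)$-decompositions, the two-sided flag inequality $S(A|BX)\le S(A|B)_\omega\le S(A|BX)+H(X)$, and the dimension bound $|S(A|B)|\le\log|A|$) is exactly the standard route, with the monotonicity reduction handled correctly.
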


\renewcommand\theproposition{\ref{thm:anti_set_conti_bound}}
\setcounter{proposition}{\arabic{proposition}-1}
\begin{proposition}
For any bipartite state $\rho_{AB}$ with an anti-degradable set distance $\varepsilon_{\rm set}$, it satisfies
\begin{equation}
    E_{D,\to}(\rho_{AB}) \leq 2\varepsilon_{\rm set}\log(|A|) + (1+\varepsilon_{\rm set})h(\frac{\varepsilon_{\rm set}}{1+\varepsilon_{\rm set}}).
\end{equation}
\end{proposition}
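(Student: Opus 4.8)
The plan is to combine the regularized coherent-information formula for $E_{D,\to}$ with Winter's uniform continuity bound (Lemma~\ref{lem:cond_entr_continuity}), benchmarking $\rho_{AB}$ against the nearest anti-degradable state.

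First I would fix a state $\sigma_{AB}\in\adg$ attaining the anti-degradable set distance, so that $\tfrac12\|\rho_{AB}-\sigma_{AB}\|_1=\varepsilon_{\rm set}\le 1$, and record the two structural facts that make the argument run. (i) $\adg$ is closed under tensor products and under the action of an arbitrary channel (in particular an instrument) on the $A$-system, since a $B\leftrightarrow E$ symmetric extension of $\sigma_{AB}$ stays symmetric when one acts on $A$; hence $\sigma_{AB}^{\ox n}$ and every instrument-branch of it is again anti-degradable. (ii) An anti-degradable state has $I_c(A\rangle B)\le 0$: apply data processing of the coherent information to $\mathcal M_{E\to B}(\phi_{AE})=\phi_{AB}=\sigma_{AB}$ (with $\phi_{ABE}$ a purification) together with the purity identity $I_c(A\rangle E)_{\phi}=-I_c(A\rangle B)_{\phi}$; consequently $I_c(A'\rangle MB^n)_{T(\sigma_{AB}^{\ox n})}=\sum_m p_m\,I_c(A'\rangle B^n)_{(\sigma^{\ox n})_m}\le 0$ for every $n$ and every instrument $T\colon A^n\to A'M$.

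Next, writing $E_{D,\to}(\rho_{AB})=\lim_{n\to\infty}\tfrac1n E^{(1)}_{D,\to}(\rho_{AB}^{\ox n})$ and $I_c(A'\rangle MB^n)=-S(A'|MB^n)$, I would estimate, for a fixed $T$, the difference $I_c(A'\rangle MB^n)_{T(\rho^{\ox n})}-I_c(A'\rangle MB^n)_{T(\sigma^{\ox n})}$ by telescoping along the hybrid chain $\rho^{\ox n}=\rho^{(n)}_0,\rho^{(n)}_1,\dots,\rho^{(n)}_n=\sigma^{\ox n}$ with $\rho^{(n)}_k:=\sigma^{\ox k}\ox\rho^{\ox(n-k)}$. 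Consecutive hybrids differ in a single tensor slot, so $\tfrac12\|\rho^{(n)}_{k-1}-\rho^{(n)}_k\|_1=\tfrac12\|\rho_{AB}-\sigma_{AB}\|_1=\varepsilon_{\rm set}$ and, by monotonicity of the trace norm, $\tfrac12\|T(\rho^{(n)}_{k-1})-T(\rho^{(n)}_k)\|_1\le\varepsilon_{\rm set}$; Lemma~\ref{lem:cond_entr_continuity} then bounds each of the $n$ increments by $2\varepsilon_{\rm set}\log|A'|+g(\varepsilon_{\rm set})$, keeping the binary-entropy correction at $g(\varepsilon_{\rm set})$ rather than the useless $g(n\varepsilon_{\rm set})$. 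Since the term evaluated on $\sigma^{\ox n}$ is $\le 0$, this gives $E^{(1)}_{D,\to}(\rho^{\ox n})\le n\big(2\varepsilon_{\rm set}\log|A'|+g(\varepsilon_{\rm set})\big)$; dividing by $n$ and letting $n\to\infty$ yields the claim, provided one may take $\log|A'|=\log|A|$.

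The hard part is exactly this last point: for an instrument on $A^n$ the quantum output register $A'$ a priori carries dimension up to $|A|^n$, which would make the bound degrade with $n$. This is the technical crux, handled following the ideas of~\cite{Sutter2014} and the proof techniques of~\cite{Leditzky2017}: one argues that only a single-copy dimension enters each telescoping increment — e.g.\ by applying the continuity estimate at the level of the pre-instrument conditional entropy $S(A_k|\,\text{rest})$ of the $k$-th input copy and then invoking data processing to pass to the post-instrument register $A'$, or, equivalently, by exploiting additivity of the auxiliary coherent information attached to the exact degrading structure — so that each increment contributes only $2\varepsilon_{\rm set}\log|A|+g(\varepsilon_{\rm set})$. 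With that in place the regularization is uniform in $n$ and one obtains $E_{D,\to}(\rho_{AB})\le 2\varepsilon_{\rm set}\log|A|+(1+\varepsilon_{\rm set})h\!\big(\tfrac{\varepsilon_{\rm set}}{1+\varepsilon_{\rm set}}\big)$, using $g(p)=(1+p)h\!\big(\tfrac{p}{1+p}\big)$.
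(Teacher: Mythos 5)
Your argument is essentially the paper's own proof: fix the closest anti-degradable state $\sigma_{AB}$ achieving $\varepsilon_{\rm set}$, telescope along the hybrid chain $\sigma^{\ox k}\ox\rho^{\ox(n-k)}$, apply the Alicki--Fannes--Winter bound of Lemma~\ref{lem:cond_entr_continuity} to $-S(A'|B^nM)$ once per step (so that the entropy correction stays at $g(\varepsilon_{\rm set})$ rather than $g(n\varepsilon_{\rm set})$), use that $E^{(1)}_{D,\to}(\sigma^{\ox n})\le 0$ because anti-degradability is preserved under tensor powers and under instruments acting on $A$, and divide by $n$. The one place you depart from the paper is in singling out the dimension of the post-instrument register $A'$ as the ``technical crux'': Lemma~\ref{lem:cond_entr_continuity} produces a factor $\log|A'|$, and for an instrument on $A^n$ this register is not a priori bounded by $|A|$. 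The concern is legitimate, but note that the paper's proof does not address it either --- it applies the lemma directly with the factor $\log|A|$ and moves on --- so your two sketched remedies (applying the continuity estimate to the pre-instrument conditional entropy and then invoking data processing, or appealing to additivity of the degrading structure) go beyond what the paper supplies, and neither is actually executed in your write-up. As it stands, your proof is exactly as rigorous as the paper's; to make it fully airtight you would need to turn one of those sketches into a concrete argument that only a single-copy dimension enters each telescoping increment.
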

\begin{proof}
Since $\rho_{AB}$ has a anti-degradable set distance $\varepsilon_{\rm set}$, we denote $\sigma_{AB}$ the anti-degradable state with $\frac{1}{2}\|\rho_{AB} - \sigma_{AB}\|_1 \leq \varepsilon_{\rm set}$. Let $T:A^n \rightarrow A'M$ be an instrument with isometry $U_n:A^n \rightarrow A'MN$ and denote $\Delta = 2\varepsilon_{\rm set}\log(|A|) + (1+\varepsilon_{\rm set})h(\frac{\varepsilon_{\rm set}}{1+\varepsilon_{\rm set}})$. Then we have
\begin{subequations}\label{Eq:E1_rho^n}
\begin{align}
    E^{(1)}_{D,\to}(\rho_{AB}^{\otimes n}) &= \max_{U_n} I_c(A'\rangle B^n M)_{U_n \rho^{\otimes n} U_n^\dagger}\\
    & = \max_{U_n} - S(A' | B^n M)_{U_n \rho^{\otimes n} U_n^\dagger}\\
    &\leq \max_{U_n} - S(A' | B^n M)_{U_n (\sigma\ox \rho^{\otimes n-1}) U_n^\dagger} + \Delta \label{Eq:ineq_one_copy}\\
    & \leq \max_{U_n} - S(A' | B^n M)_{U_n \sigma^{\otimes n} U_n^\dagger} + n\Delta \label{Eq:ineq_n_copy}\\
    & = n\Delta,
\end{align}
\end{subequations}
where Eq.~\eqref{Eq:ineq_one_copy} follows by the fact that $\|\rho_{0}\otimes \rho_{1} - \sigma_{0}\otimes \sigma_{1}\|_1 \leq  \|\rho_{0} - \sigma_{0}\|_1 + \|\rho_{1} - \sigma_{1}\|_1$ and Lemma \ref{lem:cond_entr_continuity}. The inequality in Eq.~\eqref{Eq:ineq_n_copy} follows by applying the same argument $n$ times considering $\sigma^{\ox i}\rho^{\ox n-i}$ for $i=1,...,n$. After dividing Eq.~\eqref{Eq:E1_rho^n} by $n$ and taking the limit $n\rightarrow \infty$, we arrive at $E_{D,\to}(\rho_{AB}) = \lim_{n\rightarrow \infty} \frac{1}{n}E^{(1)}_{D,\to}(\rho_{AB}^{\otimes n}) \leq \Delta$.
\end{proof}
\renewcommand{\theproposition}{\arabic{proposition}}

\renewcommand\theproposition{\ref{thm:anti_map_conti_bound}}
\setcounter{proposition}{\arabic{proposition}-1}
\begin{proposition}
For any bipartite state $\rho_{AB}$ with an anti-degradable map distance $\varepsilon_{\rm map}$, it satisfies
\begin{equation*}
    E_{D,\to}(\rho_{AB}) \leq 4\varepsilon_{\rm map}\log(|B|) + 2(1+\varepsilon_{\rm map})h(\frac{\varepsilon_{\rm map}}{1+\varepsilon_{\rm map}}).
\end{equation*}
\end{proposition}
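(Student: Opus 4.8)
The plan is to run an argument parallel to the proof of Proposition~\ref{thm:anti_set_conti_bound}, but to manufacture, out of the near‑optimal recovery map, an \emph{anti‑degradable} state that is close to $\rho_{AB}$ and, crucially, that shares Alice's marginal. Concretely, fix a purification $\ket{\phi}_{ABE}$ of $\rho_{AB}$ and a CPTP map $\mathcal D\colon E\to B$ attaining $\tfrac12\|\rho_{AB}-\mathcal D(\rho_{AE})\|_1=\varepsilon_{\rm map}$ with $\rho_{AE}=\tr_B\phi_{ABE}$. Applying $\mathcal D$ to the $E$‑leg of $\phi_{ABE}$ yields a tripartite extension $\Xi_{ABB'}$ with $\Xi_{AB}=\rho_{AB}$ and $\Xi_{AB'}=\mathcal D(\rho_{AE})$; averaging $\Xi$ over the swap of $B$ and $B'$ gives a swap‑symmetric state whose $AB$‑marginal is $\omega_{AB}:=\tfrac12\big(\rho_{AB}+\mathcal D(\rho_{AE})\big)$, so $\omega_{AB}$ has a symmetric extension and is therefore anti‑degradable (two‑extendible~\cite{myhr2011symmetric}). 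Two properties matter: $\tfrac12\|\rho_{AB}-\omega_{AB}\|_1\le\tfrac12\varepsilon_{\rm map}$, and — because $\mathcal D$ is trace preserving — $\omega_A=\rho_A$, so swapping $\rho_{AB}$ for $\omega_{AB}$ does not disturb Alice's reduced state.

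Next I would fix an instrument $T\colon A^n\to A'M$ and set $\Lambda^{(k)}:=T\big(\rho_{AB}^{\otimes(n-k)}\otimes\omega_{AB}^{\otimes k}\big)$, so $\Lambda^{(0)}=T(\rho_{AB}^{\otimes n})$ and $\Lambda^{(n)}=T(\omega_{AB}^{\otimes n})$. Since $E^{(1)}_{D,\to}(\rho_{AB}^{\otimes n})=\max_T\,\big(-S(A'|B^nM)_{T(\rho^{\otimes n})}\big)$, while $I_c(A'\rangle B^nM)_{\Lambda^{(n)}}\le E^{(1)}_{D,\to}(\omega_{AB}^{\otimes n})=0$ by Eq.~\eqref{Eq:adg_E1} (a tensor power of an anti‑degradable state is anti‑degradable), it suffices to control each telescoped increment $|S(A'|B^nM)_{\Lambda^{(k-1)}}-S(A'|B^nM)_{\Lambda^{(k)}}|$. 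Here $\Lambda^{(k-1)}$ and $\Lambda^{(k)}$ are the images under $T$ of input states differing in a single tensor factor ($\rho_{AB}$ versus $\omega_{AB}$), hence $\tfrac12\|\Lambda^{(k-1)}-\Lambda^{(k)}\|_1\le\tfrac12\varepsilon_{\rm map}$; moreover, because $\rho_A=\omega_A$, the two states have the \emph{same} joint marginal on $A'$, $M$, and the $n-1$ untouched Bob registers. Writing $Z$ for $M$ together with those $n-1$ registers, the chain‑rule identity $S(A'|B_kZ)=S(A'|Z)+S(B_k|A'Z)-S(B_k|Z)$ makes the $S(A'|Z)$ terms cancel, and Winter's Lemma~\ref{lem:cond_entr_continuity} bounds each of the two remaining differences — in which the perturbed numerator system is the single copy $B_k$ of dimension $|B|$ and the parameter is $\tfrac12\varepsilon_{\rm map}$ — by $2\cdot\tfrac12\varepsilon_{\rm map}\log|B|+g(\tfrac12\varepsilon_{\rm map})$. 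Summing over $k$, maximizing over $T$, dividing by $n$ and letting $n\to\infty$ gives $E_{D,\to}(\rho_{AB})\le 2\varepsilon_{\rm map}\log|B|+2g(\tfrac12\varepsilon_{\rm map})\le 4\varepsilon_{\rm map}\log|B|+2g(\varepsilon_{\rm map})$, using that $g$ is nondecreasing on $[0,1]$.

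The delicate point — and the reason the bound is stated with $\log|B|$ and with a factor in front of each term — is this localisation step. A naive application of Winter's lemma directly to $S(A'|B^nM)$ along the telescoping produces $\log|A'|$ (hence $\log|A|$), and replacing all $n$ copies at once would drive the continuity parameter to $\sim n\varepsilon_{\rm map}$ and blow up after dividing by $n$. Both obstructions are removed by (i) choosing the nearby anti‑degradable state $\omega_{AB}$ to keep Alice's marginal fixed, so that only one $|B|$‑dimensional Bob register is effectively perturbed per step, and (ii) splitting the conditional entropy via the chain rule so that only $S(B_k|\cdot)$‑type terms survive, each governed by Winter's lemma with dimension $|B|$ — at the cost of two invocations of the lemma per copy, which is where the coefficients $4$ and $2$ come from. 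It is also worth noting that the very same extension already shows $d_{\rm set}(\rho_{AB})\le\tfrac12\,d_{\rm map}(\rho_{AB})$, so Proposition~\ref{thm:anti_set_conti_bound} immediately yields the (incomparable, $|A|$‑dependent) estimate $\varepsilon_{\rm map}\log|A|+g(\tfrac12\varepsilon_{\rm map})$ as a cheap alternative.
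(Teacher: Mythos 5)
Your proof is correct, but it takes a genuinely different route from the paper's. The paper works at the level of the purification: it dilates the recovery map $\mathcal D$ to an isometry $W\colon E\to B'G$, defines the states $\theta^t=U_n(W_1\otimes\cdots\otimes W_t)\phi_{ABE}^{\otimes n}$, and massages the coherent information $I_c(A'\rangle B^nM)$ through entropy identities so that two $n$-copy applications of Winter's continuity lemma (each contributing $n\Delta$ with $\Delta=2\varepsilon_{\rm map}\log|B|+g(\varepsilon_{\rm map})$) reduce everything to $I(A\rangle B')_{\hat\rho_{AB'}}$ for the anti-degradable state $\hat\rho_{AB'}=\mathcal D(\rho_{AE})$. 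You instead manufacture the symmetrized state $\omega_{AB}=\tfrac12\bigl(\rho_{AB}+\mathcal D(\rho_{AE})\bigr)$ --- correctly shown to be two-extendible via the swap-averaged extension, and satisfying $\omega_A=\rho_A$ and $\tfrac12\|\rho-\omega\|_1\le\varepsilon_{\rm map}/2$ --- and then run a one-copy-at-a-time telescoping in the style of Proposition~\ref{thm:anti_set_conti_bound}, with the chain-rule split $S(A'|B_kZ)=S(A'|Z)+S(B_k|A'Z)-S(B_k|Z)$ localizing the dimension dependence to the single perturbed Bob register. This reduction is more modular and makes transparent why the map-distance bound carries $\log|B|$ rather than $\log|A|$; it also yields the marginally sharper constant $2\varepsilon_{\rm map}\log|B|+2g(\varepsilon_{\rm map}/2)$ before you weaken it to the stated form (your monotonicity claim for $g$ is valid), plus the useful by-product $d_{\rm set}(\rho_{AB})\le\tfrac12 d_{\rm map}(\rho_{AB})$. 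The paper's argument, by contrast, stays closer to the original Leditzky--Datta--Smith proof and avoids introducing the auxiliary state $\omega_{AB}$, at the cost of a longer chain of entropy manipulations on the purified systems.
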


\begin{proof}
Let $\phi_{ABE}$ be a purification of $\rho_{AB}$, $\cD:E\rightarrow B$ be the CPTP map such that $\frac{1}{2}\|\rho_{AB} - \cD(\rho_{AE})\|_1 \leq \varepsilon_{\rm map}$ with an isometry $W:E\rightarrow B'G$. Let $T:A^n \rightarrow A'M$ be an instrument with isometry $U_n:A^n \rightarrow A'MN$ and denote $\Delta = 2\varepsilon_{\rm map}\log(|B|) + (1+\varepsilon_{\rm map})h(\frac{\varepsilon_{\rm map}}{1+\varepsilon_{\rm map}})$. For $t=1,2,...,n$, we can define pure states
\begin{equation*}
\begin{aligned}
\psi_{A^n B^n B'_1 G_1...B'_t G_t E_{t+1}...E_n}^t &= (W_1\otimes \cdots \otimes W_t) \phi_{ABE}^n\\
\theta_{A' M N B^n B'_1 G_1...B'_t G_t E_{t+1}...E_n}^t &= U_n \psi^t,\\
\omega_{A' M N B^n E^n} &= U_n \phi_{ABE}^n
\end{aligned}
\end{equation*}
We further define $\hat{\rho}_{AB'} = \cD(\rho_{AE})$ which shares the same purification with $\rho_{AB}$ listed above, thus an anti-degradable state. Then for $t=n$ we have $\theta^n = U_n (W_1\otimes \cdots \otimes W_n) U_n^\dagger \omega$, it yields 
\begin{equation*}\label{Eq:I_c}
\begin{aligned}
    I_c(A'\rangle B^n M)_{\omega} &=  I_c(A'\rangle B^n M)_{\theta^n_{A' M N B^n B'^n G^n}}\\
    &= S(B^n M)_{\theta} - S(A'B^n M)_{\theta} \\
    &= S(A' N B'^n G^n)_{\theta} - S(A'B^n M)_{\theta}\\
    &= S(A' N B'^n G^n)_{\theta} - S(A'B'^n M)_{\theta} \\
    &\quad + S(A' B'^n M)_{\theta} - S(A' B^n M)_{\theta}\\
    &= S(G^n|A'B'^n M)_{\theta} + S(A' B'^n M)_{\theta} - S(A' B^n M)_{\theta}
\end{aligned}
\end{equation*}
where we abbreviate $\theta = \theta^n_{A' M N B^n B'^n G^n}$. Applying the same technique in the proof of Theorem 2.12 in~\cite{Leditzky2017}, we can bound $S(A' B'^n M)_{\theta} - S(A' B^n M)_{\theta} \leq n\Delta$.
Consequently, it follows that
\begin{equation}\label{Eq:E2_rho^n}
\begin{aligned}
    I_c(A'\rangle B^n M)_{\omega} &\leq S(G^n|A'B'^n M)_{\theta}  + n\Delta\\
    &\leq S(G^n|B'^n)_{\theta}  + n\Delta \\
    &= S(G^n B'^n)_{\theta} - S(B'^n)_{\theta}  + n\Delta\\
    &\leq S(G^n B'^n)_{\theta} - S(A'M N G^n B'^n)_{\theta}  + 2n\Delta\\
    &\leq I(A'M N \rangle G^n B'^n)_{\theta}  + 2n\Delta\\
    &= I(A^n \rangle B'^n)_{\hat{\rho}_{AB'}^{\ox n}}  + 2n\Delta \\
    &= n [I(A \rangle B)_{\hat{\rho}_{AB'}}+ 2\Delta] = 2n\Delta,
\end{aligned}
\end{equation}
where the last equality is due to the anti-degradability of $\hat{\rho}_{AB'}$. After dividing Eq.~\eqref{Eq:E2_rho^n} by $n$ and taking the limit $n\rightarrow \infty$, we arrive at $E_{D,\to}(\rho_{AB}) = \lim_{n\rightarrow \infty} \frac{1}{n}E^{(1)}_{D,\to}(\rho_{AB}^{\otimes n}) \leq 2\Delta$.
\end{proof}
\renewcommand{\theproposition}{S\arabic{proposition}}

\section{Proof of Proposition \ref{prop:Pauli_chan_bound}}\label{appendix:pauli_chan_bound}

\begin{lemma}[\!\!\cite{chen2014symmetric}]\label{lem:check_adg}
A two qubit state $\rho_{A B}$ is anti-degradable if and only if,
\begin{equation}\label{Eq:checkadg}
\operatorname{Tr}\left(\rho_B^2\right) \geq \operatorname{Tr}\left(\rho_{A B}^2\right)-4 \sqrt{\operatorname{det}\left(\rho_{A B}\right)}.
\end{equation}
\end{lemma}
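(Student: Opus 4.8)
The plan is to use the equivalence recalled in the main text, namely that a bipartite state is anti-degradable if and only if it is two-extendible, i.e. it admits a symmetric extension $\rho_{ABB'}$ on $\cH_A\otimes\cH_B\otimes\cH_{B'}$ with $\cH_{B'}\cong\cH_B$, invariant under the swap $P_{BB'}$ and satisfying $\tr_{B'}(\rho_{ABB'})=\rho_{AB}$. The whole task then reduces to proving that such an extension exists for a two-qubit $\rho_{AB}$ exactly when the displayed inequality holds. A first useful observation is that the three quantities in the statement are all local-unitary invariants: $\det(\rho_{AB})$ and $\tr(\rho_{AB}^2)$ are invariant under any unitary, and $\tr(\rho_B^2)$ is invariant under $\rho_{AB}\mapsto(U_A\otimes U_B)\rho_{AB}(U_A\otimes U_B)^\dagger$ because $\rho_B\mapsto U_B\rho_B U_B^\dagger$. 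Symmetric extendibility is preserved as well, since applying $U_A\otimes U_B\otimes U_{B'}$ to an extension of $\rho_{AB}$ produces a (still swap-symmetric) extension of the transformed state. Hence I would first place $\rho_{AB}$ in a convenient local-unitary normal form, e.g. with $\rho_B$ diagonal, cutting down the independent real parameters.

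Next I would turn existence of the extension into a semidefinite feasibility problem and exploit the swap symmetry to keep it small. Decompose $\cH_B\otimes\cH_{B'}=\Sym\oplus(\text{antisymmetric singlet space})$, of dimensions $3$ and $1$. Swap invariance forces any admissible $\rho_{ABB'}$ to be block diagonal across these two sectors, so $\rho_{ABB'}=\sigma^{\mathrm{sym}}\oplus\sigma^{\mathrm{anti}}$ with $\sigma^{\mathrm{sym}}\geq 0$ supported on the $6$-dimensional space $\cH_A\otimes\Sym$ and $\sigma^{\mathrm{anti}}\geq 0$ on the $2$-dimensional $\cH_A\otimes(\text{singlet})$. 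The single linear constraint $\tr_{B'}(\rho_{ABB'})=\rho_{AB}$ then fixes most entries, leaving only a low-dimensional family of free parameters (the part of the extension not seen by the marginal). Feasibility is the statement that this family contains a positive semidefinite member.

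I would then solve this feasibility problem analytically. After eliminating the entries pinned down by the marginal, positivity reduces—via a Schur complement / positive-semidefinite-completion argument—to the positivity of a small Hermitian matrix whose entries are built from $\rho_{AB}$ and the remaining free block. Optimizing over that free block so as to make the matrix positive semidefinite collapses to a single scalar inequality, and this is precisely where the term $4\sqrt{\det(\rho_{AB})}$ is generated, as a discriminant / arithmetic–geometric bound in the free parameter. Matching coefficients should yield $\tr(\rho_B^2)\geq\tr(\rho_{AB}^2)-4\sqrt{\det(\rho_{AB})}$. The two directions come out of the same computation: for sufficiency I exhibit the explicit extension that realizes feasibility once the inequality holds, while for necessity I read the inequality off the completion condition (equivalently, off an SDP dual certificate) that any valid extension must satisfy.

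The hard part will be this last step—carrying out the analytic solution of the feasibility SDP and recognizing that the optimization over the hidden block compresses into the compact closed form with exactly the constant $4$ and the square root of the determinant, rather than merely a qualitative threshold. Pinning down that constant is the delicate calculation, and it is where the local-unitary normal form earns its keep, by shrinking the relevant matrices enough to diagonalize and optimize explicitly.
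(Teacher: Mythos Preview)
The paper does not prove this lemma; it is quoted verbatim from \cite{chen2014symmetric} and used as a black box in the proof of Proposition~\ref{prop:Pauli_chan_bound}. There is therefore no in-paper argument to compare your proposal against. Your outline---reducing anti-degradability to two-extendibility, exploiting the $\Sym\oplus\text{singlet}$ block structure of a swap-invariant extension, and collapsing the resulting small SDP feasibility problem to a scalar inequality via Schur complements---is a faithful sketch of how the cited reference actually establishes the result, so if your goal were to reproduce that proof you are on the right track; but for the purposes of this paper no proof is required or supplied.
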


\renewcommand\theproposition{\ref{prop:Pauli_chan_bound}}
\setcounter{proposition}{\arabic{proposition}-1}

\begin{proposition}
For a qubit Pauli channel $\Lambda(\cdot)$ with $p_0\geq p_i>0 (i=1,2,3)$, it is either anti-degradable or satisfies
\begin{equation}
    2^{-\widetilde{\cR}_{\max,\adg}(\Lambda)} = (\sqrt{p_1}+\sqrt{p_2})^2 + (\sqrt{p_2}+\sqrt{p_3})^2+(\sqrt{p_1}+\sqrt{p_3})^2
\end{equation}
with an ADG-squeezed channel as the identity channel.
\end{proposition}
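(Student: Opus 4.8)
The plan is to compute $\mu:=2^{-\widetilde{\cR}_{\max,\adg}(\Lambda)}$ directly from its meaning and then read off $\cS$. Unpacking Eq.~\eqref{Eq:sqz_mrela_chan_def}, $\mu$ is the largest number in $[0,1]$ for which some anti-degradable channel $\cN'_{A\to B}$ satisfies $\mu\,J^{\cN'}_{AB}\le J^{\Lambda}_{AB}$, and the leftover $\cS=\frac{1}{1-\mu}(\Lambda-\mu\,\cN')$ is automatically a channel (its $B$-marginal is maximally mixed, as it is for $\Lambda$ and $\cN'$), hence is the $\adg$-squeezed channel. I would first record that $J^{\Lambda}_{AB}$ is the Bell-diagonal state with eigenvalues $(p_0,p_1,p_2,p_3)$ on the four Bell states, and that a Bell-diagonal two-qubit state $\rho$ with eigenvalues $(q_0,q_1,q_2,q_3)$ has $\rho_B=I/2$, $\tr\rho^2=\sum_i q_i^2$, $\det\rho=\prod_i q_i$, so that Lemma~\ref{lem:check_adg} together with $\sum_i q_i=1$ says: $\rho$ is anti-degradable iff $\sum_{i<j}q_iq_j+2\sqrt{q_0q_1q_2q_3}\ge\tfrac14$.

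\emph{Reducing to a Pauli $\cN'$.} The next step is a covariance argument: the abelian group $G=\{I\otimes I,\ X\otimes X,\ Y\otimes Y,\ Z\otimes Z\}$ fixes $J^{\Lambda}_{AB}$, so given any admissible pair $(\mu,\cN')$ I would replace $\cN'$ by its $G$-twirl $\overline{\cN'}$, with $J^{\overline{\cN'}}_{AB}=\tfrac14\sum_{g\in G}g\,J^{\cN'}_{AB}\,g^{\dagger}$. This $\overline{\cN'}$ is still a channel; it is still anti-degradable, being a convex combination of conjugates $(U\cdot U^{\dagger})\circ\cN'\circ(U\cdot U^{\dagger})$ of $\cN'$ by local unitaries (anti-degradability is preserved under pre-/post-composition with unitaries, and $\cC_{\adg}$ is convex); and it still obeys $\mu\,J^{\overline{\cN'}}_{AB}\le J^{\Lambda}_{AB}$ since the twirl is positive and fixes $J^{\Lambda}_{AB}$. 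Hence I may take $\cN'$ to be a Pauli channel with Bell eigenvalues $(q_0,q_1,q_2,q_3)$, whereupon $\mu\,J^{\cN'}_{AB}\le J^{\Lambda}_{AB}$ collapses (both sides Bell-diagonal) to $\mu q_i\le p_i$ for $i=0,\dots,3$.

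\emph{The optimisation and the squeezed channel.} It then remains to maximise $\mu$ over $\{q_i\ge0,\ \sum_i q_i=1,\ \mu q_i\le p_i,\ \sum_{i<j}q_iq_j+2\sqrt{\prod_i q_i}\ge\tfrac14\}$. I expect the maximiser to saturate $\mu q_i=p_i$ for $i=1,2,3$, i.e.\ $q_i=p_i/\mu$ and $q_0=1-P/\mu$ with $P:=p_1+p_2+p_3$ (leaving $\mu q_0=p_0-(1-\mu)\le p_0$ slack): increasing $\mu$ pushes $q_1,q_2,q_3$ down and $q_0$ up, and since $q_0$ is the dominant eigenvalue at the optimum (which holds here because $p_0\ge p_i$) this strictly lowers the margin $\sum_{i<j}q_iq_j+2\sqrt{\prod_i q_i}$, so the maximiser sits exactly on the boundary $=\tfrac14$ with the $q_i$ ($i\ge1$) as large as allowed. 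Substituting $q_i=p_i/\mu$, $q_0=(\mu-P)/\mu$ into the margin equation and clearing $\mu^2$ should reduce it to $(\mu-2P)^2=4(p_1p_2+p_1p_3+p_2p_3)+8\sqrt{(\mu-P)p_1p_2p_3}$, and with $a=\sqrt{p_1},b=\sqrt{p_2},c=\sqrt{p_3}$ the substitution $\mu-P=(a+b+c)^2$ — equivalently $\mu=(a+b)^2+(b+c)^2+(a+c)^2$ — turns both sides into $4(a^2b^2+b^2c^2+c^2a^2)+8abc(a+b+c)$, identifying this as the relevant root $\mu^{\star}$. Finally $\cS$ has Bell eigenvalues $\big((p_i-\mu^{\star}q_i)/(1-\mu^{\star})\big)_i=(1,0,0,0)$, so $\cS=\id$; and should the computed $\mu^{\star}$ exceed $1$, no decomposition with a nontrivial anti-degradable part exists, which is precisely the case where $\Lambda$ is already anti-degradable — the first alternative in the statement.

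\emph{The hard part.} The delicate point is the saturation/monotonicity claim in the third step (equivalently, that the anti-degradability margin is monotone along the critical family and that $q_0$ dominates at the optimum), together with picking the correct root of the radical equation. I would bypass this using strong duality for the SDP~\eqref{sdp:chan_primal} (Slater holds): it then suffices to exhibit the explicit primal point $\cS=\id$, $\cN'=$ the Pauli channel with $q_i=p_i/\mu^{\star}$ and $q_0=(\mu^{\star}-P)/\mu^{\star}$ — whose feasibility reduces to the PSD check $p_0-(1-\mu^{\star})\ge0$ and, via Lemma~\ref{lem:check_adg}, the anti-degradability check above — together with a dual-feasible point of~\eqref{equ:dual_sdp_channel} (which by the same $G$-covariance may be sought Bell-diagonal) of objective value $1-\mu^{\star}$; equal objective values then certify $2^{-\widetilde{\cR}_{\max,\adg}(\Lambda)}=\mu^{\star}$ by weak duality.
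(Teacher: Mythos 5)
Your achievability direction is sound and is essentially the paper's own argument: the decomposition $\Lambda=(1-\mu^{\star})\,\id+\mu^{\star}\cN'$ with $\cN'$ the Pauli channel of Bell eigenvalues $q_i=p_i/\mu^{\star}$ ($i=1,2,3$), $q_0=(\mu^{\star}-P)/\mu^{\star}$, is exactly the primal feasible point for the SDP~\eqref{sdp:chan_primal} used in the paper (there written as $\Gamma^{\cS}_{AB}=\alpha\,\ketbra{\Phi^+}{\Phi^+}$ with $\alpha=1-\mu^{\star}$), and anti-degradability of the remainder is certified with the same Lemma~\ref{lem:check_adg}. Your algebra confirming that $\mu^{\star}=(\sqrt{p_1}+\sqrt{p_2})^2+(\sqrt{p_2}+\sqrt{p_3})^2+(\sqrt{p_1}+\sqrt{p_3})^2$ solves the saturated margin equation is correct, as are the identification $\cS=\id$ and the observation that the constraint $\mu q_0\le p_0$ fails exactly when $\mu^{\star}\ge 1$, i.e.\ when $\Lambda$ is anti-degradable. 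The Bell-twirl reduction to Pauli $\cN'$ is a correct addition not present in the paper and legitimately restricts both the primal and dual searches to Bell-diagonal objects.

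The converse inequality $2^{-\widetilde{\cR}_{\max,\adg}(\Lambda)}\le\mu^{\star}$ is, however, not established. Your primal-side monotonicity claim is not merely delicate but incomplete as stated: along the critical family $q_i=p_i/\mu$, $q_0=1-P/\mu$, the margin $\sum_{i<j}q_iq_j+2\sqrt{\prod_iq_i}$ is not globally decreasing in $\mu$ (the cross term $(P/\mu)(1-P/\mu)$ increases for $\mu<2P$, and the square-root term increases for $\mu<4P/3$), so one cannot conclude without further work that the first crossing of $1/4$ is the maximizer; the assertion that $q_0$ dominates at the optimum also needs proof rather than the hypothesis $p_0\ge p_i$. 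You rightly propose to bypass this via strong duality, but then the entire content of the converse is the explicit dual-feasible point of~\eqref{equ:dual_sdp_channel} with objective value $1-\mu^{\star}$, and you never construct it. That construction is precisely the nontrivial half of the paper's proof: it exhibits concrete multipliers $M_{AB}$ (Bell-diagonal, with $\eta=-(\sqrt{p_1}+\sqrt{p_2})/(2\sqrt{p_3})$, etc.), $N_{AB}=K_{AB}=-\tfrac12 M_{AB}$, $R_A=0$, verifies the three dual constraints under $p_0\ge p_i>0$, and checks $\tr[M_{AB}J^{\Lambda}_{AB}]=1-\mu^{\star}$. Until you either produce such a certificate or make the primal optimality argument rigorous, you have only proved the inequality $2^{-\widetilde{\cR}_{\max,\adg}(\Lambda)}\ge\mu^{\star}$.
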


\begin{figure*}[t]
\centering
    \begin{minipage}[t]{0.49\textwidth}
    \centering
    \includegraphics[width=0.9\linewidth]{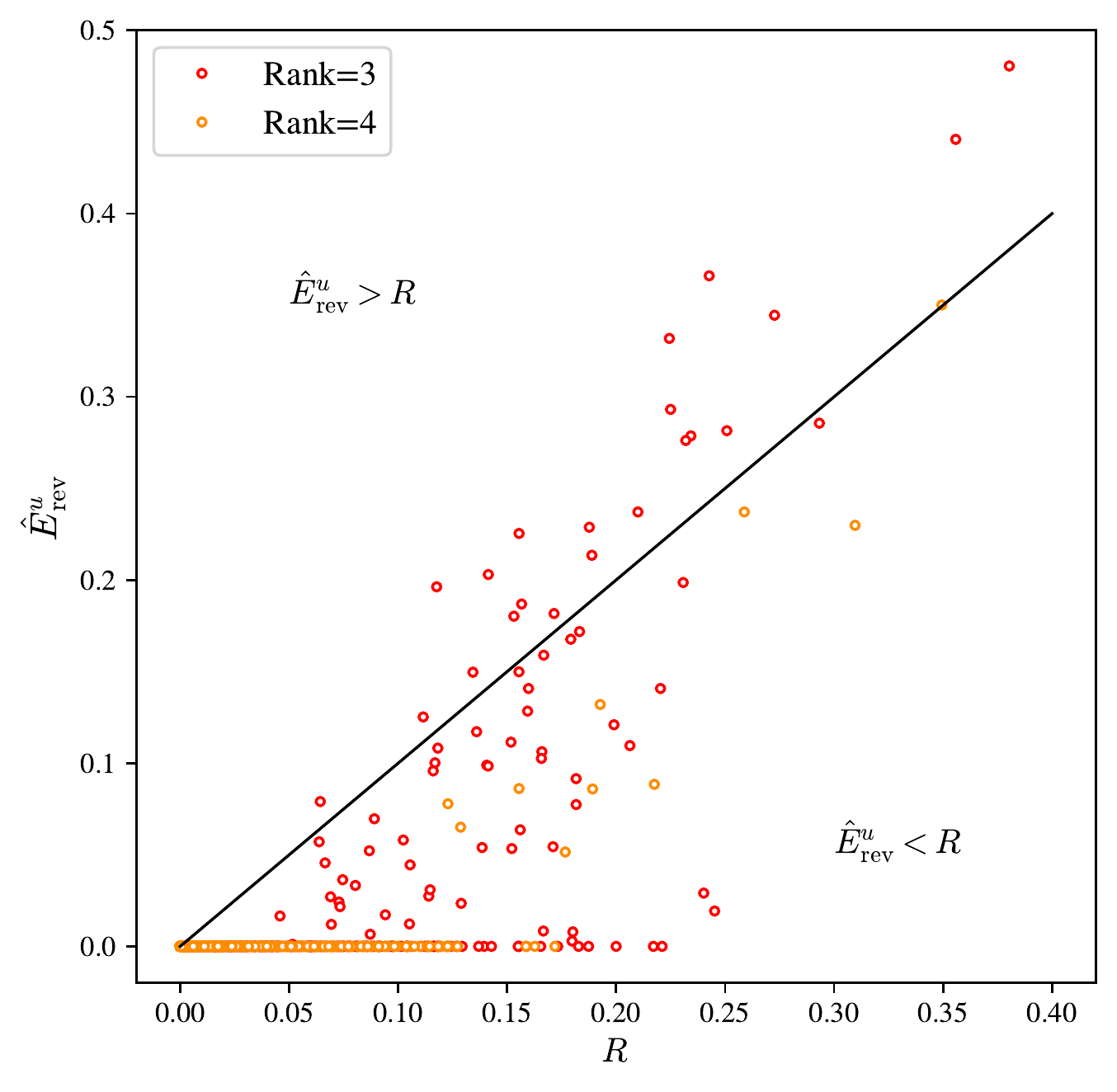}
    \caption{Upper bounds on one-way distillable entanglement of random qubit states. The black line indicates states $\rho_{AB}$ for which $\hat{E}_{\rm rev}^{u}(\rho_{AB})=R(\rho_{AB})$ and each dot represents a two-qubit quantum state. The dots above the black line (resp. below) indicate the states for which $\hat{E}_{\rm rev}^{u}(\rho_{AB})> R(\rho_{AB})$ (resp. $\hat{E}_{\rm rev}^{u}(\rho_{AB})< R(\rho_{AB})$). }
    \label{fig:qubit_random}
    \end{minipage}
    \hfill
    \begin{minipage}[t]{0.49\textwidth}
    \centering
    \includegraphics[width=0.9\linewidth]{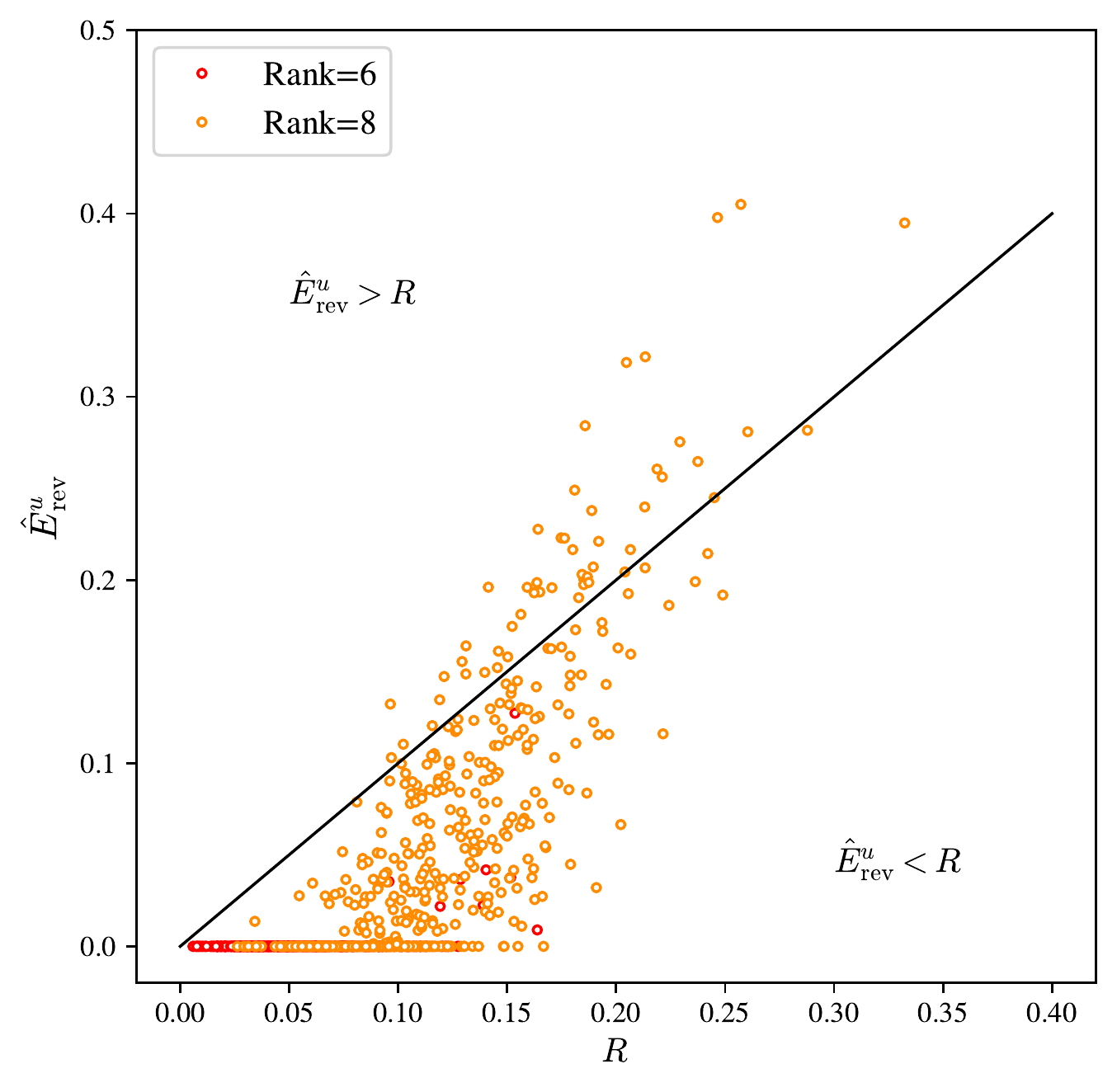}
    \caption{Upper bounds on one-way distillable entanglement of random qutrit states. The black line indicates states $\rho_{AB}$ for which $\hat{E}_{\rm rev}^{u}(\rho_{AB})=R(\rho_{AB})$ and each dot represents a two-qutrit quantum state. The dots above the black line (resp. below) indicate the states for which $\hat{E}_{\rm rev}^{u}(\rho_{AB})> R(\rho_{AB})$ (resp. $\hat{E}_{\rm rev}^{u}(\rho_{AB})< R(\rho_{AB})$). }
    \label{fig:qutrit_random}
    \end{minipage}
\end{figure*}

\begin{proof}
We first will prove 
\begin{equation*}
    2^{-\widetilde{\cR}_{\max,\adg}(\Lambda)} \geq (\sqrt{p_1}+\sqrt{p_2})^2 + (\sqrt{p_2}+\sqrt{p_3})^2+(\sqrt{p_1}+\sqrt{p_3})^2,
\end{equation*}
by using the SDP in Eq.~\eqref{sdp:chan_primal}. We show that
$\Gamma_{AB}^{\cS} = \frac{\alpha}{2}\left(\ketbra{0}{0} + \ketbra{0}{1} + \ketbra{1}{0} + \ketbra{1}{1}\right)$, is a feasible solution where $\alpha = 1-[(\sqrt{p_1}+\sqrt{p_2})^2 + (\sqrt{p_2}+\sqrt{p_3})^2+(\sqrt{p_1}+\sqrt{p_3})^2] $.
Note that the Choi state of the Pauli channel is
\begin{equation*}
    J_{AB}^\Lambda = \frac{1}{2}\left(
    \begin{array}{cccc}
    p_0 + p_3 & 0 & 0 & p_0-p_3\\
    0 & p_1+p_2 & p_1-p_2 & 0\\
    0 & p_1-p_2 & p_1+p_2 & 0\\
    p_0 - p_3 & 0 & 0 & p_0+p_3\\
    \end{array}\right), 
\end{equation*}
and the unnormalized state $\Gamma_{AB}^{\cN'} = J_{AB}^\Lambda - \Gamma_{AB}^\cS$ is
\begin{equation*}\label{eq:unnor_gamma_AB}
    \Gamma_{AB}^{\cN'} = \frac{1}{2}\left(
    \begin{array}{cccc}
    p_0 + p_3 - \alpha & 0 & 0 & p_0-p_3- \alpha\\
    0 & p_1+p_2 & p_1-p_2 & 0 \\
    0 & p_1-p_2 & p_1+p_2 & 0 \\
    p_0 - p_3 - \alpha & 0 & 0 & p_0+p_3- \alpha\\
    \end{array}\right).
\end{equation*} 
Recalling that $p_0 + p_1 + p_2 + p_3 = 1$, it is then straightforward to check that $\tr_B[\Gamma_{AB}^{\cN'}] = \tr[\Gamma_{AB}^{\cN'}]/d_A \cdot I_A$. The constraint in Eq.~\eqref{Eq:choi_extendible} corresponds to the anti-degradable condition of $\Gamma_{AB}^{\cN'}$. By direct calculation, we have 
\begin{equation*}
    \begin{aligned}
        \tr[(\Gamma_{B}^{\cN'})^2] &= \frac{1}{2}(1-\alpha)^2, \quad \det(\Gamma_{AB}^{\cN'}) = p_1 p_2 p_3(p_0-\alpha),\\ 
        \tr[(\Gamma_{AB}^{\cN'})^2] &= \alpha^2 - 2\alpha p_0 + p_0^2 + p_1^2 + p_2^2 + p_3^2.
    \end{aligned}
\end{equation*}
Then Eq.~\eqref{Eq:checkadg} holds and $\Gamma_{AB}^{\cN'}$ is anti-degradable by Lemma~\ref{lem:check_adg}, which satisfy the constraint in Eq.~\eqref{Eq:choi_extendible}.  Thus, we have proven that $\Gamma_{AB}^\cS$ is a feasible solution to the primal SDP,  which yields $2^{-\widetilde{\cR}_{\max,\adg}(\Lambda)} \geq (\sqrt{p_1}+\sqrt{p_2})^2 + (\sqrt{p_2}+\sqrt{p_3})^2+(\sqrt{p_1}+\sqrt{p_3})^2$.

Second, we will use the dual SDP in Eq.~\eqref{equ:dual_sdp_channel} to prove 
\begin{equation*}
    2^{-\widetilde{\cR}_{\max,\adg}(\Lambda)} \leq (\sqrt{p_1}+\sqrt{p_2})^2 + (\sqrt{p_2}+\sqrt{p_3})^2+(\sqrt{p_1}+\sqrt{p_3})^2.
\end{equation*}
We show that $\{M_{AB}, N_{AB}, K_{AB}, R_A\}$ is a feasible solution to the dual problem, where
\begin{equation*}\label{Eq:opt_dual_matri2}
\begin{aligned}
    &M_{AB} = \left(
    \begin{array}{cccc}
    \eta & 0 & 0 & -\eta + 1\\
    0 & \xi & \zeta & 0\\
    0 & \zeta & \xi & 0\\
    -\eta + 1 & 0 & 0 & \eta\\
    \end{array}\right),\\
    &N_{AB} = K_{AB} = -\frac{1}{2}M_{AB},
    R_A = 0,\\
    &\eta = -\frac{\sqrt{p_1} + \sqrt{p_2}}{2\sqrt{p_3}}, \xi= -\frac{\sqrt{p_1}+\sqrt{p_3}}{2\sqrt{p_2}}-\frac{\sqrt{p_2}+\sqrt{p_3}}{2\sqrt{p_1}}-1, \\
    &\zeta=\frac{\sqrt{p_1}+\sqrt{p_3}}{2\sqrt{p_2}}-\frac{\sqrt{p_2}+\sqrt{p_3}}{2\sqrt{p_1}}.
\end{aligned}
\end{equation*}
It is easy to check that when $p_0\geq p_i>0 (i=1,2,3)$, we have $M_{AB} + N_{AB} + K_{AB}= 0$, 
\begin{equation*}
\begin{aligned}
&M_{AB} \leq (1-\tr R_A/d_A)I_{AB} + R_A \otimes I_B \Leftrightarrow I_{AB} - M_{AB} \geq 0,\\
&N_{AB} \otimes I_E + P_{BE}(K_{AB} \otimes I_E)P_{BE}^\dagger\\
&\quad =\frac{1}{2}M_{AB} \otimes I_E + P_{BE}(\frac{1}{2}M_{AB} \otimes I_E)P_{BE}^\dagger \geq 0.
\end{aligned}
\end{equation*}
It also satisfies $\tr[M_{AB}J_{AB}^\Lambda] = 1-[(\sqrt{p_1}+\sqrt{p_2})^2 + (\sqrt{p_2}+\sqrt{p_3})^2+(\sqrt{p_1}+\sqrt{p_3})^2]$. Thus we have proven that $\{M_{AB},N_{AB},K_{AB},R_A\}$ is a feasible solution to the dual SDP in Eq.~\eqref{equ:dual_sdp_channel}, which yields $2^{-\widetilde{\cR}_{\max,\adg}(\Lambda)} \leq (\sqrt{p_1}+\sqrt{p_2})^2 + (\sqrt{p_2}+\sqrt{p_3})^2+(\sqrt{p_1}+\sqrt{p_3})^2$.

Thus, we arrive at $2^{-\widetilde{\cR}_{\max,\adg}(\Lambda)} = (\sqrt{p_1}+\sqrt{p_2})^2 +(\sqrt{p_2}+\sqrt{p_3})^2+(\sqrt{p_1}+\sqrt{p_3})^2$.
Since $\Gamma_{AB}^{\cS}$ is the Bell state after normalization, we know the ADG-squeezed channel is the identity channel.

\end{proof}

\section{More example for states and channels}\label{appendix:more_examples}

\begin{figure*}[t]
    \centering
    \includegraphics[width=\linewidth]{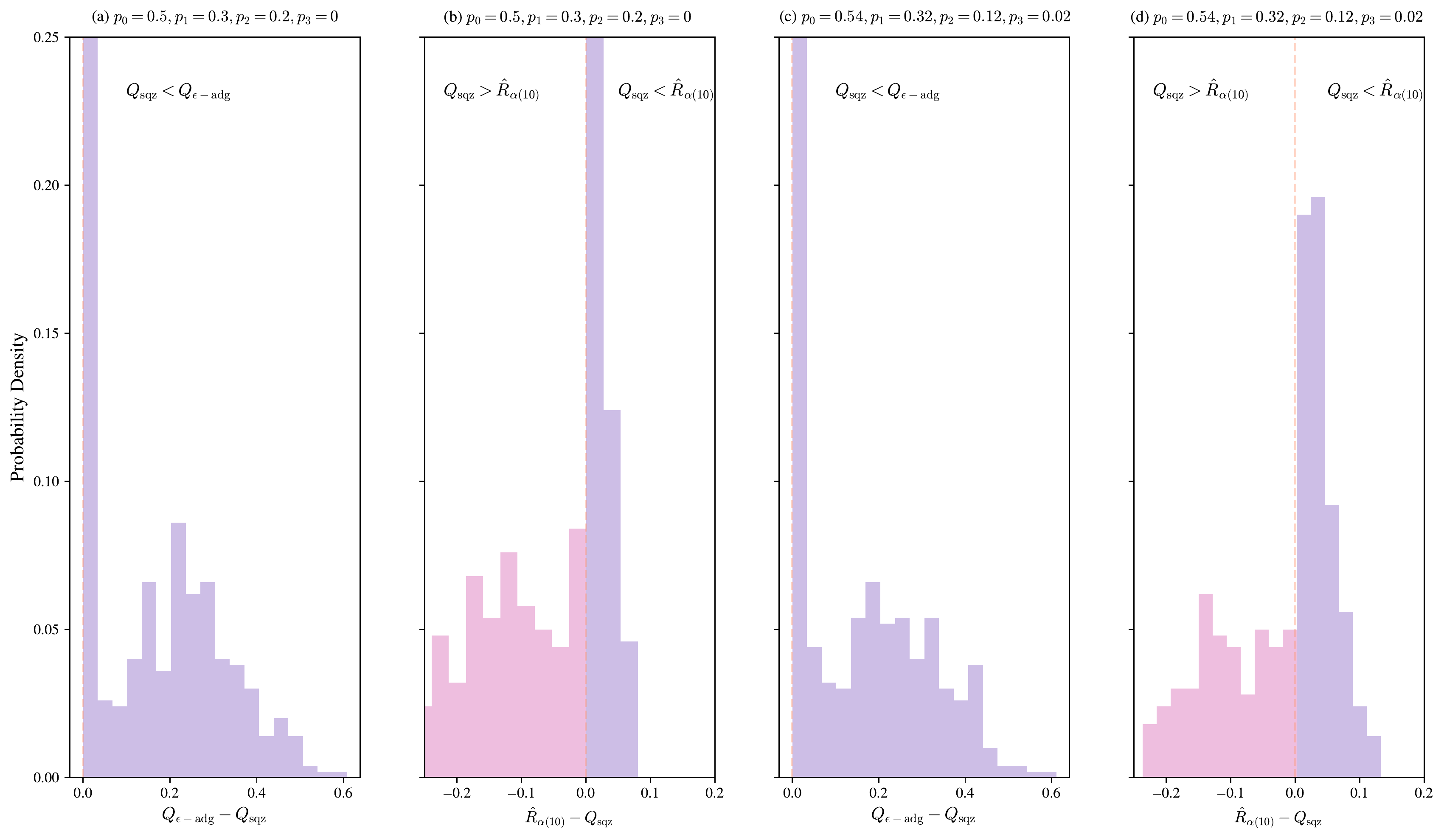}
    \caption{Upper bounds on the quantum capacity of the random mixed unitary channels. In Panel (a) and (b), we compare our bound $Q_{\rm sqz}$ in Proposition~\ref{prop:qubit_chan_sqzbound} with other upper bounds on the quantum capacity of 1000 randomly generated mixed unitary channels with fixed parameters $p_0 = 0.5$, $p_1 = 0.3$, $p_2 = 0.2$, $p_3 = 0$. The $x$-axis represents the distance between two bounds, and the $y$-axis represents the distribution of these channels. $Q_{\epsilon-\rm adg}$ is the continuity bound in Theorem~\ref{thm:conti_sutter} regarding the anti-degradability. $\hat{R}_{\alpha(10)}$ is the bound in~\cite{Fang2019a}. Panel (c) and (d) depict the same comparison for another 1000 randomly generated mixed unitary channels with fixed parameters $p_0 = 0.54$, $p_1 = 0.32$, $p_2 = 0.12$, $p_3 = 0.02$. In (a) and (c), we see that our bound always outperforms $Q_{\epsilon-\rm adg}$. In (b) and (d), we can see for many cases, our bound is tighter than $\hat{R}_{\alpha(10)}$.}
    \label{fig:mixU_appendix}
\end{figure*}

We further test the quality of our upper bound on the one-way distillable entanglement for random quantum states with different ranks. We generate 500 random bipartite states according to the Hilbert-Schmidt measure~\cite{_yczkowski_2011} with different ranks and compare $\hat{E}_{\rm rev}^{u}(\cdot)$ with $R(\cdot)$ for these states in Fig~\ref{fig:qubit_random} and Fig~\ref{fig:qutrit_random}. We can observe that our bound will be tighter than the Rains bound in many cases, both in a two-qubit case and a two-qutrit case.

We note that our upper bound for the quantum capacity can only be efficiently computed for a qubit-to-qubit channel, due to the difficulty of characterizing the extreme points of quantum channels with an input dimension larger than two~\cite{ruskai2007open}.
To further assess our bound, we present additional examples akin to those in subsection~\ref{sec:qc_pauli_chan}. We compare the performance of our method with the continuity bound in Theorem~\ref{thm:conti_sutter} and the bound $\hat{R}_{\alpha}$~\cite{Fang2019a} for more mixed unitary channels $\cU_{A\rightarrow B}(\rho) = \sum_{i=0}^k p_i U_i \rho U_i^\dagger$. Choose some fixed set of parameters and sample 1000 channels with randomly generated unitaries according to the Haar measure. The results are depicted in Fig.~\ref{fig:mixU_appendix}. We compute the distance between $Q_{\rm sqz}$ and other bounds, then obtain statistics on the distribution of these channels according to the distance value. The purple region corresponds to the cases $Q_{\rm sqz}$ is tighter, and the pink region corresponds to the cases $Q_{\rm sqz}$ is looser. We note that for these general qubit-to-qubit channels, $\hat{E}_{\rm rev}^{u}(\rho_{AB})$ and $\hat{R}_{\alpha(10)}$ apply in different cases.

\end{document}